\newcommand*\bigcdot{\mathpalette\bigcdot@{.7}}
\newcommand*\bigcdot@[2]{\mathbin{\vcenter{\hbox{\scalebox{#2}{$\m@th#1\bullet$}}}}}
\title{Analytical singular value decomposition for a class of stoichiometry matrices\thanks{
\funding{This work was supported in part by an NSF GRFP and in part by NSF IGERT Grant Number 1144807.}}}
\author{Jacqueline M. Wentz\thanks{Department of Applied Mathematics, University of Colorado, Boulder, CO
		(\email{jacqueline.wentz@colorado.edu}, \email{dmbortz@colorado.edu}).}
	\and David M. Bortz\footnotemark[2]}
\begin{document}
\maketitle
\begin{abstract}
	We present the analytical singular value decomposition of the stoichiometry matrix for a spatially discrete reaction-diffusion system on a one dimensional domain. The domain has two subregions which share a single common boundary. Each of the subregions is further partitioned into a finite number of compartments. Chemical reactions can occur within a compartment, whereas diffusion is represented as movement between adjacent compartments. Inspired by biology, we study both 1) the case where the reactions on each side of the boundary are different and only certain species diffuse across the boundary as well as 2) the case with spatially homogenous reactions and diffusion. We write the stoichiometry matrix for these two classes of systems using a Kronecker product formulation. For the first scenario, we apply linear perturbation theory to derive an approximate singular value decomposition in the limit as diffusion becomes much faster than reactions. For the second scenario, we derive an exact analytical singular value decomposition for all relative diffusion and reaction time scales. By writing the stoichiometry matrix using Kronecker products, we show that the singular vectors and values can also be written concisely using Kronecker products.

   Ultimately, we find that the singular value decomposition of the reaction-diffusion stoichiometry matrix depends on the singular value decompositions of smaller matrices. These smaller matrices represent modified versions of the reaction-only stoichiometry matrices and the analytically known diffusion-only stoichiometry matrix. Our results provide a mathematical framework that can be used to study complex biochemical systems with metabolic compartments.  MATLAB code for calculating the SVD equations is available at \url{www.github.com/MathBioCU/ReacDiffStoicSVD}.
\end{abstract}

\section{Introduction}\label{sec:intro}

In stoichiometric network analysis the mass balance equation for a reaction-only system is written as follows
\begin{equation}
\frac{dw}{dt} = S_rf
\label{eq:MassBalance}
\end{equation}
where $w$ is a species concentration vector, $S_r$ is the stoichiometry matrix, and $f$ is a vector of reaction fluxes \cite{Clarke1988}. We use the subscript $r$ to refer to a stoichiometry matrix that only describes reactive processes. Although the flux vector $f$ is a function of the species concentration, the formulation given by (\ref{eq:MassBalance}) avoids assumptions about the form of the kinetic equations that relate the fluxes to the species concentration (e.g., mass-action \cite{Voit2015} or Michaelis-Menton kinetics \cite{Johnson2011}).  The stoichiometry matrix contains information about the species involved in each reaction. As a simple example, consider the following set of reactions:
\begin{equation}
   \O\ce{ -> A}, \quad\quad \ce{A -> B}, \quad\quad \ce{B ->}\O.
\end{equation}
 Here, species $A$ is produced, transitions into species $B$, and species $B$ decays. The stoichiometry matrix for this example system is
\begin{align}\label{eq:Sr}
   S_r &=
      \begin{bmatrix}
         1 & -1 & 0 \\
         0 & 1 & -1
      \end{bmatrix}.
\end{align}
The first row of $S_r$ corresponds to species $A$ and the second row corresponds to species $B$. Each of the three columns correspond to the three reactions, respectively.

The analysis of $S_r$ provides information on structural properties of the system without requiring kinetic information. In particular, the singular value decomposition (SVD) of $S_r$ provides information on systemic properties, including decoupled eigenreactions (i.e., linear combinations of species that are moved by linear combinations of fluxes), conservation relations, and fluxes that can exist in the system under steady-state conditions \cite{Palsson2006}. This type of analysis can be used to determine hidden relationships in a network and compare biochemical properties amongst different organisms \cite{Famili2003,Palese2012}.

Here, our goal is to derive the SVD of a stoichiometry matrix that, in addition to the reactions, includes information on the spatial properties of a system. Specifically, we define the stoichiometry matrix for a one dimensional spatially discrete system by considering both the reactions in each spatial compartment as well as the movement of species between adjacent compartments. We refer to this matrix as the reaction-diffusion (RD) stoichiometry matrix and write it using the reaction-only and diffusion-only stoichiometry matrices, i.e. $S_r$ and $S_d$, respectively. The reaction only stoichiometry matrix is as described in (\ref{eq:Sr}) and the diffusion only stoichiometry matrix can be thought of as representing a single species diffusing through space. As an example, in a system with one diffusing species, three spatial compartments, and homogeneous Neumann boundary conditions, the diffusion-only stoichiometry matrix is
\begin{equation}
   S_d =
      \begin{bmatrix}
         0 & -1 & 0 & 0 \\
         0 & 1 & -1 & 0 \\
         0 & 0 & 1 & 0
      \end{bmatrix}.
\end{equation}
Analagously to $S_r$, each row in $S_d$ corresponds to the species in each of the three compartments. The first and last column of $S_d$ correspond to species movement across the boundary of the domain and, for this example, contain only zeros due to the homogenous Neumann boundary conditions. The middle columns represent the movement of the species between adjacent compartments.

Using the reaction-only and diffusion-only stoichiometry matrix definitions, in a system with both reactions and diffusion where there are $n$ spatial compartments and $m$ species that freely diffuse through space, the stoichiometry matrix $S$ is
\begin{equation}\label{eq:intro}
	S =
	\begin{bmatrix}
		\gamma S_r \otimes I_n & I_m \otimes S_d
	\end{bmatrix},
\end{equation}
where $\otimes$ represents the Kronecker product \cite{Loan2000}, $\gamma>0$ describes the relative rate of reactions to diffusion, and $I_{a}$ is the identity matrix of size $a$. Here, the $S_r \otimes I_n$ block represents the reactions occurring in each compartment, whereas the $I_m \otimes S_d$ block represents diffusive movement. The Kronecker product has previously been used to compactly represent diffusion for the spatially discrete reaction-diffusion ODE system \cite{Arcak2011,DAutilia2020}. We previously developed criteria to guarantee a version of this ODE system is bounded for all time \cite{Wentz2020}. Here, we instead use (\ref{eq:intro}) to study the spatially discrete system in the context of stoichiometric network analysis. We write both the reactive and diffusive terms using a Kronecker product formulation as this will simplify the SVD derivation.

In this paper we will consider a more general form of (\ref{eq:intro}) where, in addition to diffusion, there is a spatial barrier in the system that divides the 1D domain into two subregions. We consider this class of systems because it allows our results to be applied to study, for example, the effect of metabolic compartmentalization within a cell. We will use concepts from linear perturbation theory \cite{Kato1995} to derive the approximate SVD in the limit as diffusion becomes much faster than reactions. We additionally consider the special case where diffusion of all species is allowed freely throughout the domain, i.e., where the stoichiometry matrix can be written  as given by (\ref{eq:intro}). We show that, for this scenario the SVD becomes exact for all values of $\gamma$. The derived SVDs for the system with and without a spatial barrier depend on the SVDs of smaller matrices, such as the reaction-only stoichiometry matrix.

To help provide structure and guide our argument, in Section~\ref{sec:main-results} we chose to present the main result first (see Theorems~\ref{thm:svd-S} and \ref{thm:svd-S-II}). We then provide a more complete set of definitions and notation in Section~\ref{sec:notation}. This includes a complete description of the system as well as definitions of matrices whose SVDs are used to write main result. In Section~\ref{sec:useful-results} we provide preliminary results that will be helpful for proving Theorem~\ref{thm:svd-S}. In Section~\ref{sec:SVD} we provide the complete proofs of Theorem~\ref{thm:svd-S} and Theorem~\ref{thm:svd-S-II}. Finally, in Section~\ref{sec:discussion} we provide some intuition for the SVD equations and discuss potential applications of this work.

\section{System description and statement of main result}
\label{sec:main-results}

Here we provide a brief description of the system and state the main result. For a thorough description of the notation and definitions used see Section~\ref{sec:notation}.

We consider a one dimensional, spatially-discrete, reaction diffusion system that is divided into two subregions. A subset of the species is allowed to diffuse between the two subregions, and we allow for different sets of reactions to occur in each region. We will consider three boundary conditions: no input/output fluxes, input/output fluxes at one boundary point, and input/output fluxes at both boundary points. As an example, biologically this system description might represent a radially symmetric cell, where the two subregions are the cytoplasm and the nucleus.

The stoichiometry matrix for this class of systems can be written as
\begin{equation}\label{eq:S}
   S :=
      \begin{bmatrix}
         \gamma
         \begin{bmatrix}
            I_{n_1}\otimes S_{r_1} & 0 \\
            0 & I_{n_2} \otimes S_{r_2}
         \end{bmatrix}& S_{d}\otimes D_+ + (S_d - H)\otimes D_{-}
      \end{bmatrix}
\end{equation}
where the first column block represents reactive processes and the second represents diffusive processes. Here, $S_{r_1}$ and $S_{r_2}$ represent the reaction-only stoichiometry matrices for each of the two subregions, $S_d \otimes D_+$ describes the diffusion of species that move across the entire domain (i.e., species that can cross the barrier between the two subregions), and $(S_d - H) \otimes D_-$ describes the diffusion of species that stay within a single subregion. The parameter $\gamma > 0$ represents the relative rate of reactions compared with diffusion.

In this section we present the SVD of the stoichiometry matrix given by (\ref{eq:S}) in the limit as diffusion becomes much faster than reactions, i.e., as $\gamma \rightarrow 0$. Briefly, the main result depends on the SVD of smaller reaction-only and diffusion-only systems. This includes matrices that only involve reactive processes, which will be written using variations of $S_r$ (e.g., $S_{r_1}$, $S_{r_2}$), and matrices that only involve diffusive processes, which will be written using variations of $S_d$ (e.g., $S_{d_1}$, $S_{d_2}$).

Our general notation for writing down the SVD of $S_{\bigcdot} \in \mathbb{R}^{s_1 \times s_2}$ will be as follows:
\begin{equation}
   S_{\bigcdot} = U_{\bigcdot}\Sigma_{\bigcdot}V_{\bigcdot}^{T}=
      \begin{bmatrix}
      \hat{U}_{\bigcdot} & \breve{U}_{\bigcdot}
      \end{bmatrix}
      \begin{bmatrix}
      \hat{\Sigma}_{\bigcdot} & 0 \\
      0 & 0
      \end{bmatrix}
      \begin{bmatrix}
      \hat{V}_{\bigcdot} & \breve{V}_{\bigcdot}
      \end{bmatrix}^{T}.
   \label{eq:svd-Sstar}
\end{equation}
We will refer to the rank of $S_{\bigcdot}$ as $q_{\bigcdot}$ and the size of the nullspace as $\breve{q}_{\bigcdot}$. In some cases the singular vectors will be divided into two components (e.g., $U_d = [U_{d,a};U_{d,b}]$). With this SVD notation in mind, we next state the main result of the paper. Although the complete definitions and notations are not given until Section~\ref{sec:notation}, it is possible to immediately see that the SVD depends only on SVDs of variations of stoichiometry matrices for the reaction-only and diffusion-only systems.

\begin{theorem}\label{thm:svd-S}
	As $\gamma \rightarrow 0$ the unsorted SVD of $S$, as given by (\ref{eq:S}), is
	\begin{equation}
		S = U \Sigma V^T=
		\begin{bmatrix}
			\hat{U} & \breve{U}
		\end{bmatrix}
		\begin{bmatrix}
			\hat{\Sigma} & 0 \\
			0 & 0
		\end{bmatrix}
		\begin{bmatrix}
			\hat{V} & \breve{V}
		\end{bmatrix}^{T}.
	\end{equation}
	where the singular vectors that have nonzero singular values are given by six components, $\hat{U}_i$, $\hat{V}_i$, such that
   \begin{equation}
      S = \sum_{i\in\hat{\mathcal{J}}} \hat{U}_{1,j} \hat{\Sigma}_{1,j} \hat{V}_{1,j}^T + \sum_{i=2}^6 \hat{U}_i \hat{\Sigma}_i \hat{V}_i^T
   \end{equation}
where
	\begin{equation*}
      \begin{aligned}
         \hat{U}_{1,j} &= u_d^{(j)} \otimes I_m^{\mathcal{M}_+} U_{\bar{r}_+,j} &
         \hat{V}_{1,j} &= \left[
            \begin{array}{c}
               \frac{\gamma}{|u_{d,s_1}^{(j)}|}  u_{d,s_1}^{(j)} \otimes  V_{\bar{r}_+,j,s_1} \Sigma_{\bar{r}_+,j}^T \\
               \frac{\gamma}{|u_{d,s_2}^{(j)}|} u_{d,s_2}^{(j)} \otimes  V_{\bar{r}_+,j,s_2} \Sigma_{\bar{r}_+,j}^T
               \vspace{3pt}\\
               \hdashline \noalign{\vskip 3pt}
               v_d^{(j)} \sigma_d^{(j)} \otimes
               I_m^{\mathcal{M}_+} U_{\bar{r}_+,j}
   			\end{array}\right] \hat{\Sigma}_{1,j}^{-1} \\
         \hat{U}_2 &=
      		\begin{bmatrix}
      			U_{d_1}^{\hat{\mathcal{J}}_1} \\
      			0
      		\end{bmatrix} \otimes I_m^\mathcal{M_-} U_{r_1,-} &
         \hat{V}_2 &=\left[
            \begin{array}{c}
   				\gamma U_{d_1}^{\hat{\mathcal{J}}_1} \otimes V_{r_1,-} \Sigma_{r_1,-}^T \\
   				0 \\
   				\hdashline \noalign{\vskip 3pt}
   				\left(V_{d_1} \Sigma_{d_1}\right)^{\hat{\mathcal{J}}_1} \otimes I_m^{\mathcal{M}_-} U_{r_1,-} \\
   				0
   			\end{array}\right] \hat{\Sigma}_2^{-1}  \\
         \hat{U}_3 &=
      		\begin{bmatrix}
      			U_{d_1}^{n_1 \setminus (\mathcal{J}_1^C \cup \hat{\mathcal{J}}_1)}  \\
      			0
      		\end{bmatrix} \otimes I_m^\mathcal{M_-} \hat{U}_{r_1,-} &
         \hat{V}_3 &=\left[
            \begin{array}{c}
   				\gamma U_{d_1}^{n_1 \setminus (\mathcal{J}_1^C \cup \hat{\mathcal{J}}_1)} \otimes \hat{V}_{r_1,-} \hat{\Sigma}_{r_1,-}^T \\
   				0 \\
   				\hdashline \noalign{\vskip 3pt}
   				0
   			\end{array}\right] \hat{\Sigma}_3^{-1}  \\
         \hat{U}_4 &=
      		\begin{bmatrix}
      			0 \\
      			U_{d_2}^{\hat{\mathcal{J}}_2}
      		\end{bmatrix}
      		\otimes I_m^{\mathcal{M}_-}U_{r_2,-} &
         \hat{V}_4 &=
   			\left[\begin{array}{c}
   				0 \\
   				\gamma U_{d_2}^{\hat{\mathcal{J}}_2}
   				\otimes
   				V_{r_2,-} \Sigma^T_{r_2,-}
   				\vspace{3pt} \\
   				\hdashline \noalign{\vskip 2pt}
   				0 \\
   				\left(V_{d_2} \Sigma_{d_2}\right)^{\hat{\mathcal{J}}_2}
   				\otimes I_m^{\mathcal{M}_-} U_{r_2,-}
   			\end{array}\right] \hat{\Sigma}_4^{-1} \\
         \hat{U}_5 &=
      		\begin{bmatrix}
      			\frac{1}{C_1} U_{d,s_1}^{\hat{\mathcal{J}}^C} \otimes U_{\bar{r},m_1} \\
      			\frac{1}{C_2} U_{d,s_2}^{\hat{\mathcal{J}}^C} \otimes U_{\bar{r},m_2}
      		\end{bmatrix} &
         \hat{V}_5 &=
   			\left[\begin{array}{cc} \vspace{5pt}
   				\frac{\gamma}{C_1} U_{d,s_1}^{\hat{\mathcal{J}}^C} \otimes
   				\begin{bmatrix}
   					\hat{V}_{\bar{r},s_1} \hat{\Sigma}_{\bar{r}} & 0
   				\end{bmatrix} \\
   				\frac{\gamma}{C_2} U_{d,s_2}^{\hat{\mathcal{J}}^C} \otimes
   				\begin{bmatrix}
   					\hat{V}_{\bar{r},s_2} \hat{\Sigma}_{\bar{r}} & 0
   				\end{bmatrix} \vspace{3pt} \\
   				\hdashline \noalign{\vskip 3pt}
   				\frac{1}{C_1}V_{d,s_1}^{\hat{\mathcal{J}}^C} \Sigma_d^{\hat{\mathcal{J}}^C} \otimes
   				U_{\bar{r},m_1} \\
   				\frac{1}{C_2}V_{d,s_2}^{\hat{\mathcal{J}}^C} \Sigma_d^{\hat{\mathcal{J}}^C} \otimes
   				U_{\bar{r},m_2}
   			\end{array}\right] \hat{\Sigma}_5^{-1}\\
         \hat{U}_6 &=
      		\begin{bmatrix}
      			\frac{1}{C_1} \breve{U}_{d,s_1} \otimes \hat{U}_{\bar{r},m_1} \\
      			\frac{1}{C_2} \breve{U}_{d,s_2} \otimes \hat{U}_{\bar{r},m_2}
      		\end{bmatrix} &
         \hat{V}_6 &=
   			\left[\begin{array}{cc} \vspace{5pt}
   				\frac{\gamma}{C_1} \breve{U}_{d,s_1} \otimes \hat{V}_{\bar{r},s_1} \hat{\Sigma}_{\bar{r}} \\
   				\frac{\gamma}{C_2} \breve{U}_{d,s_2} \otimes \hat{V}_{\bar{r},s_2} \hat{\Sigma}_{\bar{r}} \\
   				\hdashline \noalign{\vskip 3pt}
   				0
   			\end{array}\right] \hat{\Sigma}_6^{-1}
		\end{aligned}
	\end{equation*}
	and
	\begin{equation*}
   \begin{aligned}
		\hat{\Sigma}_{1,j}^2 &= \left(\sigma_d^{(j)}\right)^2 I_{m_+} +		\gamma^2 \left(\Sigma_{\bar{r}_+,j}^2\right)_{m_+} &
		\hat{\Sigma}_2^2 &= (\hat{\Sigma}_{d_1}^2)^{\hat{\mathcal{J}}_1} \oplus \gamma^2 \left(\hat{\Sigma}_{r_1,-}^2\right)_{m_-} \\
		\hat{\Sigma}_3^2 &= (\Sigma_{d_1}^2)^{n_1\setminus (\mathcal{J}_1^C \cup \hat{\mathcal{J}}_1)} \oplus \gamma^2 \hat{\Sigma}_{r_1,-}^2 &
		\hat{\Sigma}_4^2 &= (\Sigma_{d_2}^2)^{\hat{\mathcal{J}}_2} \oplus \gamma^2 \left(\hat{\Sigma}_{r_2,-}^2\right)_{m_-} \\
		\hat{\Sigma}_5^2 &=
         (\hat{\Sigma}_d^2)^{\hat{\mathcal{J}}^C} \oplus
            \gamma^2\left(\hat{\Sigma}_{\bar{r}}^2\right)_{q_{\bar{r}}+\breve{q}_{\bar{r}}} &
      \hat{\Sigma}_6^2 &=
         I_{n-q_d} \otimes \gamma^2 \hat{\Sigma}_{\bar{r}}^2.
   \end{aligned}
	\end{equation*}
	A basis for the left nullspace of $S$ is
	\begin{equation}
		\breve{U} =
         \begin{bmatrix}
   			\breve{U}_1 & \breve{U}_2
   		\end{bmatrix}
	\end{equation}
	where
	\begin{align*}
		\breve{U}_1 &=
			\begin{bmatrix}
				U_{d_1}^{n_1\setminus (\mathcal{J}_1^C \cup \hat{\mathcal{J}}_1)} \\
				0
			\end{bmatrix} \otimes I_m^\mathcal{M_-} \breve{U}_{r_1,-} &
		\breve{U}_2 &=
			\begin{bmatrix}
				\frac{1}{C_1} \breve{U}_{d,s_1} \otimes \breve{U}_{\bar{r},s_1} \\
				\frac{1}{C_2} \breve{U}_{d,s_2} \otimes \breve{U}_{\bar{r},s_2}
			\end{bmatrix}
	\end{align*}
	and a basis for the (right) nullspace of $S$ is
	\begin{equation}
		\breve{V} = \begin{bmatrix}
			\breve{V}_1 & \breve{V}_2 &	\breve{V}_3 & \breve{V}_4 & \breve{V}_5
		\end{bmatrix}
	\end{equation}
	where
	\begin{equation}
		\begin{aligned}
		\breve{V}_1 &=
			\left[\begin{array}{cc}
				U_{d_1}^{n_1\setminus (\mathcal{J}_1^C \cup \hat{\mathcal{J}}_1)} \otimes \breve{V}_{r_1,-} \\
				0 \\
				\hdashline \noalign{\vskip 3pt}
				0
			\end{array}\right]  &
      \breve{V}_2 &=
			\left[\begin{array}{c}
				\hat{U}_{d_1} \hat{\Sigma}_{d_1} \otimes V_{r_1} \\
				0 \\
				\hdashline \noalign{\vskip 3pt}
				-\gamma \hat{V}_{d_1} \otimes U_{r_1} \Sigma_{r_1} \\
				0
			\end{array}\right] \left(\hat{\Sigma}_{d_1}^2 \oplus \gamma^2 \left(\hat{\Sigma}_{r_1}^2\right)_{p_1}\right)^{-\frac{1}{2}} \\
      \breve{V}_3 &=
			\left[\begin{array}{c} \vspace{5pt}
				\frac{1}{C_1} \breve{U}_{d,s_1} \otimes \breve{V}_{\bar{r},s_1} \\
				\frac{1}{C_2} \breve{U}_{d,s_2} \otimes \breve{V}_{\bar{r},s_2} \\
				\hdashline \noalign{\vskip 3pt}
				0
			\end{array}\right] &
		\breve{V}_4 &=
			\left[\begin{array}{c}
				0 \\
				\hat{U}_{d_2} \hat{\Sigma}_{d_2} \otimes V_{r_2} \vspace{3pt}\\
				\hdashline \noalign{\vskip 2pt}
				0 \\
				-\gamma \hat{V}_{d_2} \otimes U_{r_2} \Sigma_{r_2}
			\end{array}\right] \left(\hat{\Sigma}_{d_2}^2 \oplus \gamma^2 \left(\hat{\Sigma}_{r_2}^2\right)_{p_2}\right)^{-\frac{1}{2}} \\
	  \noalign{\noindent $\breve{V}_5 =
			\left[\begin{array}{cc}
				0 & 0\\
				\hdashline \noalign{\vskip 3pt}
				 \breve{V}_{d} \otimes I_m^{\mathcal{M}_+} & I_{n+1}^{\mathcal{B}} \otimes I_m^{\mathcal{M}_-}
			\end{array}\right]$.}
	\end{aligned}
\end{equation}
\end{theorem}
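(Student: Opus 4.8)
The plan is to write $S=[\,\gamma R \mid \Delta\,]$, where $R=\operatorname{diag}(I_{n_1}\otimes S_{r_1},\,I_{n_2}\otimes S_{r_2})$ is the block-diagonal reactive matrix and $\Delta=S_d\otimes D_+ +(S_d-H)\otimes D_-$ is the diffusive block, and to obtain the left singular vectors and singular values as the eigenpairs of the Gram matrix
\begin{equation*}
   SS^T = \Delta\Delta^T + \gamma^2 RR^T,
\end{equation*}
treating $\gamma^2 RR^T$ as a small perturbation of $\Delta\Delta^T$ as $\gamma\to 0$. Because $D_+$ and $D_-$ select disjoint species (so $D_+D_-^T=0$), the unperturbed operator splits as $\Delta\Delta^T=S_dS_d^T\otimes D_+D_+^T+(S_d-H)(S_d-H)^T\otimes D_-D_-^T$, and the barrier matrix $H$ makes $(S_d-H)$ block-diagonal in the two subregions with blocks $S_{d_1},S_{d_2}$. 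First I would use the known SVDs of $S_d,S_{d_1},S_{d_2}$ together with this Kronecker structure to write an orthonormal eigenbasis of $\Delta\Delta^T$, organized by (i) whether the diffusive singular value is nonzero or zero and (ii) whether the species crosses the barrier ($+$) or is confined to a subregion ($-$). This bookkeeping is exactly what the index sets $\hat{\mathcal J}$, $\mathcal J_1^C$, $\hat{\mathcal J}_i$ and the species selectors $I_m^{\mathcal M_\pm}$ in the statement encode.

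The eigenvalues of $\Delta\Delta^T$ are highly degenerate — each diffusive singular value is repeated across the species index, and the zero eigenspace spreads across the entire diffusion nullspace — so the second step is degenerate perturbation theory. Within each degenerate eigenspace $E$ I would form the compression $P_E (RR^T) P_E$ and diagonalize it; its eigenvectors refine the zeroth-order diffusive vectors into the reactive factors $U_{\bar r_+,j}$, $U_{r_1,-}$, $\hat U_{\bar r,m_i}$, and its eigenvalues supply the $\gamma^2(\Sigma_{\bar r}^2)$ contributions. This produces the block form $\hat\Sigma_i^2=(\text{diffusive})^2\oplus\gamma^2(\text{reactive})^2$ in each family: the six cases are precisely the pairings of a nonzero or zero diffusive mode with a cross-barrier or confined reactive mode, split further by subregion and by whether the reaction block has full or deficient row rank (which separates $\hat U_2$ from $\hat U_3$ and feeds the nullspace). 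The right singular vectors would then be recovered from $\hat V_i=S^T\hat U_i\hat\Sigma_i^{-1}$; substituting $S=[\,\gamma R\mid\Delta\,]$ and the Kronecker forms of $\hat U_i$ reproduces the two-block ($\gamma$-reactive over diffusive) expressions for each $\hat V_i$.

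To complete the decomposition I would characterize the nullspaces directly rather than through perturbation. Since $S^Tu=0$ forces $u$ into the left nullspaces of both $R$ and $\Delta$, the left-null basis $\breve U$ is assembled from left-null vectors of the reaction blocks ($\breve U_{r_1,-}$, $\breve U_{\bar r}$) combined with the appropriate diffusion modes, including the diffusion nullspace $\breve U_d$, giving $\breve U_1$ and $\breve U_2$. Likewise, $\breve V$ is built from the right-null vectors of the reactive and diffusive blocks together with the flux directions left invariant by both processes; the block $\breve V_5$ in particular carries the boundary and species-selection structure through $\breve V_d\otimes I_m^{\mathcal M_+}$ and $I_{n+1}^{\mathcal B}\otimes I_m^{\mathcal M_-}$. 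I would then verify by a dimension count that the six nonzero families together with $\breve U$ and $\breve V$ exhaust the full row and column dimensions of $S$, and confirm mutual orthonormality to leading order in $\gamma$, tracking the normalizers $C_1,C_2$ and the factors $1/|u_{d,s_i}^{(j)}|$ that make the split vectors unit length.

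The main obstacle I anticipate is the degenerate perturbation step: one must choose, inside each repeated diffusive eigenspace, the orthonormal refinement for which the compressed reactive operator block-diagonalizes into exactly the modified reaction Gram matrices whose SVDs appear in the statement, while simultaneously tracking the many index sets that partition cross-barrier versus confined species and zero versus nonzero diffusive modes. Ensuring that the cross terms among the six families (and between the families and the nullspace) vanish to the required order in $\gamma$ — so the collection is genuinely orthonormal and not merely spanning — is the delicate part; the explicit $\gamma$-dependent normalizations appearing in $\hat V_{1,j}$, $\hat V_5$ and $\hat V_6$ strongly suggest that these leading-order cancellations are where most of the technical work lies.
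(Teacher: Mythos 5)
Your proposal follows essentially the same route as the paper's proof: perturb the Gram matrix $SS^T = T + \gamma^2 T^{(1)}$ about the diffusion-only operator, construct the unperturbed eigenbasis from the Kronecker/SVD structure of $S_d$, $S_{d_1}$, $S_{d_2}$ (Lemma~\ref{lemma:unpertbEigval}), diagonalize the compressed reactive operator $P\,T^{(1)}P$ on each degenerate eigenspace (Lemma~\ref{lemma:eigprojects} and Proposition~\ref{prop:left-singular-vectors}), recover the row-space vectors via $\hat{V}_i = S^T\hat{U}_i\hat{\Sigma}_i^{-1}$, and verify the nullspace bases directly with a dimension count from Lemma~\ref{lemma:rank}. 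The one step you describe only as ``bookkeeping'' --- determining which diffusive eigenvalues are degenerate across sectors --- is where the paper invests a dedicated result, Lemma~\ref{lemma:eig-repeats} (a singular value shared by two of $\hat{\Sigma}_d$, $\hat{\Sigma}_{d_1}$, $\hat{\Sigma}_{d_2}$ lies in all three, with $C_1,C_2$-weighted eigenvector relations), which is precisely what makes the compression on the shared eigenspaces reduce to $S_{\bar{r}}S_{\bar{r}}^T$ for the coupled families $\hat{U}_5$, $\hat{U}_6$ and forces the exclusion of $\breve{U}_{\bar{r},ex}$ from the range of $P_j$.
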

Note that the horizontal dashed lines used in the definition of the right singular vectors separate the vectors into components that correspond to the reactive fluxes (above dashed line) and diffusive fluxes (below dashed line). The proof of this theorem is given in Section \ref{sec:SVD}.

We have defined the SVD in Theorem~\ref{thm:svd-S} to be applicable for all three boundary conditions. Note that $\hat{U}_6$, $\hat{V}_6$, $\breve{U}_2$, and $\breve{V}_3$ are only nonempty for homogeneous Neumann boundary conditions (i.e., no input/output fluxes) and $\hat{U}_3$, $\hat{V}_3$, $\breve{U}_1$, and $\breve{V}_1$ are only nonempty when there is an input/output flux at a single boundary point.

The results given in Theorem~\ref{thm:svd-S} are simplified significantly when we consider systems that only have one region and spatially-homogeneous reactions. For such systems the stoichiometry matrix is simplified to
\begin{equation}\label{eq:S_II}
   S = \begin{bmatrix}
		\gamma I_n \otimes S_r & S_d \otimes I_m
	\end{bmatrix}.
\end{equation}
and the SVD is given by the following theorem.

\begin{theorem}\label{thm:svd-S-II}
	The SVD of the stoichiometry matrix $S$, as given by (\ref{eq:S_II}) is
	\begin{equation}
		S=U\Sigma V^{T}=
		\begin{bmatrix} \hat{U} & \breve{U} \end{bmatrix}
		\begin{bmatrix}
		\hat{\Sigma} & 0\\
		0 & 0
		\end{bmatrix}
		\begin{bmatrix} \hat{V} & \breve{V}\end{bmatrix}^{T}\label{eq:SVDFull}
	\end{equation}
	where
\begin{align*}
	\hat{U} &= \begin{bmatrix}\hat{U}_{d}\otimes \hat{U}_{r} & \hat{U}_{d}\otimes \breve{U}_{r} & \breve{U}_{d}\otimes \hat{U}_{r}\end{bmatrix} \\
	\breve{U} &= \breve{U}_{d}\otimes \breve{U}_{r} \\
	\hat{V} &=
	\begin{bmatrix}
		\gamma\left(\hat{U}_{d}\otimes \hat{V}_{r}\hat{\Sigma}_{r}\right)\tilde{\Sigma}^{-1} & 0 & \breve{U}_{d}\otimes \hat{V}_{r}\\
		\left(\hat{V}_{d}\hat{\Sigma}_{d}\otimes \hat{U}_{r}\right)\tilde{\Sigma}^{-1} & \hat{V}_{d}\otimes \breve{U}_{r} & 0
	\end{bmatrix} \\
	\breve{V} &=
	\begin{bmatrix}
		\left(\hat{U}_{d}\hat{\Sigma}_{d}\otimes \hat{V}_{r}\right)\tilde{\Sigma}^{-1} & \hat{U}_{d}\otimes \breve{V}_{r} & 0\\
		-\gamma\left(\hat{V}_{d}\otimes \hat{U}_{r}\hat{\Sigma}_{r}\right)\tilde{\Sigma}^{-1} & 0 & \breve{V}_{d}\otimes U_{r}
	\end{bmatrix}\\
	\hat{\Sigma} &=
	\begin{bmatrix}
		\tilde{\Sigma}\\
		& \hat{\Sigma}_{d}\otimes I_{m-q_{r}}\\
		&  & \gamma I_{n-q_{d}}\otimes\hat{\Sigma}_{r}
	\end{bmatrix} \\
	\tilde{\Sigma}^2 &= \hat{\Sigma}_{d}^{2}\oplus\left(\gamma\hat{\Sigma}_{r}\right)^{2}.
\end{align*}
\end{theorem}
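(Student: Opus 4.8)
The plan is to verify directly that the claimed factors satisfy the defining properties of an SVD, rather than to specialize Theorem~\ref{thm:svd-S}: since Theorem~\ref{thm:svd-S-II} asserts an \emph{exact} decomposition for every $\gamma>0$ (not merely the $\gamma\to0$ limit), the natural route is a self-contained check. Concretely, I would show (i) $S\hat V=\hat U\hat\Sigma$, (ii) $S\breve V=0$, (iii) $U=[\hat U\ \breve U]$ and $V=[\hat V\ \breve V]$ are orthogonal, and (iv) $\hat\Sigma$ is diagonal and positive; together with $V$ being square these give $S=U\Sigma V^{T}$ and identify it as the SVD. The only algebraic tools needed are the Kronecker mixed-product identity $(A\otimes B)(C\otimes D)=AC\otimes BD$, the small-matrix SVD relations $S_r\hat V_r=\hat U_r\hat\Sigma_r$, $S_r\breve V_r=0$, $S_r^{T}\breve U_r=0$ (and the analogues for $S_d$), orthonormality of the columns of $U_r,V_r,U_d,V_d$, and the identity $\tilde\Sigma^{2}=\hat\Sigma_d^{2}\oplus(\gamma\hat\Sigma_r)^{2}=\hat\Sigma_d^{2}\otimes I+I\otimes\gamma^{2}\hat\Sigma_r^{2}$.

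First I would check the action relations column-block by column-block, writing $S=[\,\gamma I_n\otimes S_r\ \ S_d\otimes I_m\,]$ and splitting each right singular vector into its reactive (top) and diffusive (bottom) parts. For the first block of $\hat V$ the reactive term contributes $\gamma^{2}(\hat U_d\otimes\hat U_r\hat\Sigma_r^{2})\tilde\Sigma^{-1}$ and the diffusive term $(\hat U_d\hat\Sigma_d^{2}\otimes\hat U_r)\tilde\Sigma^{-1}$; these combine through the Kronecker-sum identity into $(\hat U_d\otimes\hat U_r)\tilde\Sigma^{2}\tilde\Sigma^{-1}=(\hat U_d\otimes\hat U_r)\tilde\Sigma$, matching the first block of $\hat U$. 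The remaining blocks of $\hat V$ are single-term and follow immediately from $S_d\hat V_d=\hat U_d\hat\Sigma_d$ and $S_r\hat V_r=\hat U_r\hat\Sigma_r$. For $S\breve V=0$, the blocks built on $\breve V_r$ and $\breve V_d$ vanish from $S_r\breve V_r=0$ and $S_d\breve V_d=0$, while the $\tilde\Sigma$-normalized block cancels because its reactive and diffusive contributions are equal and opposite.

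Next I would establish orthogonality. For $U$ this is essentially free: the four blocks $\{\hat U_d,\breve U_d\}\otimes\{\hat U_r,\breve U_r\}$ are exactly the columns of $U_d\otimes U_r$, a Kronecker product of orthogonal matrices, so $U$ is orthogonal and square ($mn\times mn$). For $V$ I would compute $V^{T}V$ block-by-block: most off-diagonal blocks vanish automatically from the complementary top/bottom zero patterns, and the rest vanish from one of $\hat U_d^{T}\breve U_d=0$, $\hat V_d^{T}\breve V_d=0$, $\hat U_r^{T}\breve U_r=0$, $\hat V_r^{T}\breve V_r=0$; each diagonal block reduces to an identity, with the two $\tilde\Sigma$-normalized blocks again using $\tilde\Sigma^{2}=\hat\Sigma_d^{2}\otimes I+I\otimes\gamma^{2}\hat\Sigma_r^{2}$.

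The step I expect to be the real obstacle is \emph{completeness}: showing that $V$ is square, equivalently that the listed $\breve V$-blocks exhaust $\ker S$ and the $\hat V$-blocks span the row space. Unlike the preceding steps, this is not formal Kronecker algebra but a bookkeeping argument in the ranks $q_r,q_d$ and nullities $\breve q_r,\breve q_d$ of the small matrices: one must verify that the column counts of $\hat V$ and $\breve V$ sum to the number of columns of $S$ (the reactive count plus the diffusive count) and that no nullspace direction is omitted. This is where the specific structure of $S_r$ and $S_d$ enters, and it is the most delicate part of the argument; the orthonormality and action checks, by contrast, are robust turn-the-crank computations once the mixed-product and Kronecker-sum identities are in hand.
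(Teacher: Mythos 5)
Your plan is essentially the paper's own proof: the paper likewise proves Theorem~\ref{thm:svd-S-II} by direct verification with the Kronecker mixed-product identity, showing that $U = U_d \otimes U_r$ and $V$ are orthogonal and then that $\hat{U}\hat{\Sigma}\hat{V}^T = \begin{bmatrix} \gamma I_n \otimes S_r & S_d \otimes I_m \end{bmatrix} = S$, which, given orthogonality and $V$ square, is equivalent to your action checks $S\hat{V}=\hat{U}\hat{\Sigma}$ and $S\breve{V}=0$. The completeness/column-count bookkeeping you single out as the delicate step is precisely the part the paper leaves implicit (it asserts that $V$ is an orthogonal matrix without tallying the columns of $\hat{V}$ and $\breve{V}$ against $np+(n+1)m$), so your flagging of it is apt rather than a deviation.
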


\section{Notation and Definitions}\label{sec:notation}

Here we present notation and matrix definitions that are used to state and prove the main result. In Section~\ref{subsec:notation-matrix}, we provide basic notation for referring to matrices. In Section~\ref{subsec:notation-two-classes}, we present definitions used to define the discrete reaction-diffusion system. In Section~\ref{subsec:notation-sets} we define sets of indices that will be used for defining the SVD. In Section~\ref{sub-sec:notation-more-matrices}, we define the set of \textit{stoichiometry-like matrices} that are required for writing the SVD of the reaction-diffusion system. In Section \ref{subsec:notation-SVD}, we provide notation that, in addition to (\ref{eq:svd-Sstar}), will be used to define the SVD of relevant matrices. Table~\ref{tab} summarizes the notational defintions presented in this section.

\begin{table}
   \small
   \begin{tabular}{llp{7.8cm}}
      \textbf{Symbol} & \textbf{Size (if matrix)} & \textbf{Definition} \\ \hline
      $n$ & & Total number of spatial compartments \\
      $n_1$ & & Number of spatial compartments in Subregion 1 \\
      $n_2$ & & Number of spatial compartments in Subregion 2 \\
      $m$ & & Total number of species \\
      $p_1$ & & Number of reactions in Subregion 1 \\
      $p_2$ & & Number of reactions in Subregion 2 \\
      $C_1$, $C_2$ & & Constants dependent on the boundary conditions \\ \hline
      $\mathcal{M}_+$ & & Index set for species that diffuse across barrier \\
      $\mathcal{M}_-$ & & Index set for species that do not diffuse across barrier \\
      $\mathcal{B}$ & & Index set that depends on boundary conditions \\
      $\mathcal{J},\mathcal{J}_i$ & & Index sets of singular values that only occur in $\Sigma_{d}$, $\Sigma_{d_i}$ \\
      $\hat{\mathcal{J}},\hat{\mathcal{J}_i}$ & & Index sets of singular values that only occur in $\hat{\Sigma}_{d}$, $\hat{\Sigma}_{d_i}$ \\ \hline
      $S_d$ & $n\times n+1$ & Diffusion-only stoichiometry matrix for full domain\\
      $S_{d_1}$ & $n_1 \times n_1 + 1$ & Diffusion-only stoichiometry matrix for Subregion 1 \\
      $S_{d_2}$ & $n_2 \times n_2 + 1$ & Diffusion-only stoichiometry matrix for Subregion 2 \\
      $S_{r_1}$ & $m \times p_1$ & Reaction-only stoichiometry matrix for Subregion 1 \\
      $S_{r_2}$ & $m \times p_2$ & Reaction-only stoichiometry matrix for Subregion 2 \\
      $S_{r_i,+}$ & $m_+ \times p_i$ & Rows of $S_{r_i}$ for species that diffuse between subregions \\
      $S_{r_i,-}$ & $m_- \times p_i$ & Rows of $S_{r_i}$ for species that do not diffuse between subregions \\
      $S_{\bar{r}}$ & $2m \times p_1+p_2$ & Block matrix dependent on $S_{r_1}$ and $S_{r_2}$, see (\ref{eq:S_rbar}). \\
      $S_{\bar{r}_+,j}$ & $m_+ \times p_1+p_2$ & Block matrix dependent on $S_{r_1,+}$, $S_{r_2,+}$, and $u_d^{(j)}$, see (\ref{eq:S_rbar_+_j}).
   \end{tabular}
   \caption{Description of constants, index sets, and stoichiometry/stoichiometry-like matrices used to define the SVD of the RD stoichiometry matrix.}
   \label{tab}
\end{table}

\subsection{Matrix notation}\label{subsec:notation-matrix}

Matrices will be defined using uppercase letters (e.g., $A$) and sets of indices will be defined using calligraphic fonts (e.g., $\mathcal{B}$). We will use $A^{\mathcal{B}}$ to represent the columns of $A$ whose indices are in the set $\mathcal{B}$. If we refer to one column of a matrix (i.e., a column vector), we will typically use the lowercase letter and a superscript to refer to this column (i.e., the $i$th column of $A$ will be written as $a^{(i)}$). One exception to these rules will be for any diagonal matrix of singular values $\Sigma_{\bigcdot}$ and variations of this matrix. In this case, $\Sigma_{\bigcdot}^{\mathcal{B}}$ will represent a square diagonal matrix containing the singular values whose indices are in $\mathcal{B}$. Additionally $(\Sigma_{\bigcdot})_n$ will represent the matrix $\Sigma$ padded by zeros to make it size $n\times n$. The $i$th diagonal element of $\Sigma_{\bigcdot}$ will we written as $\sigma^{(i)}_{\bigcdot}$.

Throughout the paper, we will use $I_a$ to denote the identity matrix of size $a \times a$. We will also use $0$ to represent a matrix of zeros. For notational simplicity we omit the size of each zero matrix but note that it can be deduced from the notation. We will use $\otimes$ to represent the Kronecker product\footnote{
	The Kronecker product of $A\in\mathbb{R}^{m\times n}$
	and $B\in\mathbb{R}^{p\times r}$ is a $mp+nr$ block matrix where
	\begin{equation*}
	A\otimes B=
	\begin{bmatrix}
	A_{11}B & ... & A_{1n}B \\
	\vdots & \ddots & \vdots \\
	A_{m1}B & ... & A_{mn}B
	\end{bmatrix}.
	\end{equation*}
} and $\oplus$ to represent the Kronecker sum\footnote{The Kronecker sum is given by
	\begin{equation*}
	  A \oplus B = A \otimes I_b + I_a \otimes B
	\end{equation*}
	where $A$ is an $a\times a$ matrix and $B$ is a $b\times b$ matrix.
}.

\subsection{Discrete reaction-diffusion systems}\label{subsec:notation-two-classes}

We consider the discrete reaction-diffusion system on a one dimensional domain $[0,n]$ that is partitioned into $n$ equal-sized spatial compartments. Let $m$ denote the number of species (e.g., proteins or metabolites) in the system and $p$ denote the number of reactions. We will allow for three different boundary conditions:  homogeneous Neumann (no flux at both ends), Mixed (homogeneous Neumann at $x=0$ and open at $x=n$), and Open (flux allowed at both ends). Note that both the reactive and diffusive fluxes can be either positive or negative. We define a positive diffusive flux as moving in the positive $x$ direction. We will assume that all the species in the system diffuse at the same rate.

Within the domain there is a single barrier across which only a subset of species can diffuse. The barrier divides the system into two subregions where different reactions occur. Let $S_{r_1}\in \mathbb{R}^{m\times p_1}$ and $S_{r_2}\in \mathbb{R}^{m\times p_2}$ represent the stoichiometry matrices for the two subregions (i.e., $p_1$ reactions occur in the first subregion and $p_2$ reactions occur in the second). Note that the same reaction can occur in both regions.

We will let $n_1 \in \{1,...,n-1\}$ denote the number of compartments in the first subregion and $n_2 = n-n_1$ denote the number of compartments in the second subregion. Within this system, there are three diffusive processes: diffusion across the entire domain, within the first subregion, and within the second subregion. We define diffusion-only stoichiometry matrices for these three processes using $S_d\in\mathbb{R}^{n\times n+1}$, $S_{d_1}\in\mathbb{R}^{n_1\times n_1+1}$ and $S_{d_2}\in\mathbb{R}^{n_2\times n_2+1}$, respectively. For diffusion across the entire domain, we have that
\begin{equation}
   S_{d}:=
      \begin{bmatrix}
         b_1 & -1 & 0\\
         & 1 & -1\\
         &  & \ddots & \ddots\\
         &  &  & 1 & -1\\
         &  &  & 0 & 1 & -b_2
      \end{bmatrix}
      \label{eq:Sd}
\end{equation}
where the values in the first and last column depend on the boundary conditions. Specifically, $b_1=0$, $b_2=0$ implies zero flux boundary conditions, $b_1=0$, $b_2=1$ implies Mixed boundary conditions and $b_1=1$, $b_2=1$ implies Open boundary conditions. The diffusion only-stoichiometry matrices $S_{d_1}$ and $S_{d_2}$ are defined similarly. However,  for $S_{d_1}$ the value of $b_2$ is replaced by zero and for $S_{d_2}$ the value of $b_1$ is replaced by zero. The $n$ rows of $S_{d}$ corresponds to the species in each of the $n$ compartments, and the $n+1$ columns corresponds to the flux across each of the $n-1$ interior edges as well as the two boundaries at either end of the domain. Using the following matrix,
\begin{equation}\label{eq:H}
   \left(H\right)_{i,j} :=
      \begin{cases}
         -1 & i=n_1, \quad j=n_1+1\\
         1 & i=n_1+1, \quad j=n_1+1\\
         0 & \text{otherwise.}
      \end{cases},
\end{equation}
we can relate $S_d$ with $S_{d_1}$ and $S_{d_2}$ as follows
\begin{equation}
   \begin{bmatrix}
      S_{d_1}^{(1:n_1)} & 0 & 0\\
      0 & 0 & S_{d_2}^{(2:n_2+1)}
   \end{bmatrix} = S_d-H.
\end{equation}

Next, we provide definitions used to identify the species that can and cannot diffuse between the two subregions. When defining parameters (e.g., sets, matrices), a subscripted $+$ or $-$ will imply a relationship with the set of species that can ($+$) or cannot ($-$) diffuse across the barrier. The set $\mathcal{M}_+$ will contain indices for species that can diffuse across the barrier, whereas the set $\mathcal{M}_-$ will contain indices for species that cannot diffuse across the barrier. Additionally, let $m_+=|\mathcal{M}_+|$ and $m_-=|\mathcal{M}_-|$ where $m_++m_-=m$. Using these sets we define the diagonal matrices $D_+,D_- \in\mathbb{R}^{m\times m}$ where
\begin{align*}
   (D_+)_{i,j} :=
   \begin{cases}
   1 & i \in \mathcal{M}_+ \\
   0 & \text{otherwise}
   \end{cases}, \quad \quad
   (D_-)_{i,j} :=
   \begin{cases}
   1 & i \in \mathcal{M}_- \\
   0 & \text{otherwise}
   \end{cases}
\end{align*}
Note that $D_- + D_+ = I_m$.

We can now write the equation for the spatially-discrete RD stoichiometry matrix, given by (\ref{eq:S}). For convenience we rewrite this equation below
\begin{equation}
   S :=
      \begin{bmatrix}
         \gamma
         \begin{bmatrix}
            I_{n_1}\otimes S_{r_1} & 0 \\
            0 & I_{n_2} \otimes S_{r_2}
         \end{bmatrix}& S_{d}\otimes D_+ + (S_d - H)\otimes D_{-}
      \end{bmatrix}.
\end{equation}
The parameter $\gamma \ge 0$ represents the relative rate of the reactions compared with the rate of diffusion (i.e., if $\gamma \gg 1$ the reactions are much faster than diffusion and, if $\gamma\ll 1$, the reactions are much slower than diffusion). The first $n_1p_1+n_2p_2$ columns of $S$ correspond to the reactions occurring in each compartment. The final $(n+1)m$ columns correspond to the diffusion of species into or out of the domain as well as between adjacent compartments.

\subsection{Additional spatially-dependent parameters}\label{subsec:notation-sets}

Here we define the constants $C_1$, $C_2$, the set $\mathcal{B}$, and sets denoted by variations of $\mathcal{J}$. These parameters are only dependent on the spatial properties of the system (e.g., compartment number and boundary conditions), and are therefore unaffected if reactive properties (e.g., reaction number and stoichiometry) change.

The constants $C_1$ and $C_2$ depend on the boundary conditions and compartment numbers. We have that
\begin{equation}\label{eq:C12}
   C_1 := \left(\frac{2n_1 + b_1}{2n +b_1 + b_2}\right)^{1/2},
\quad \quad
   C_2 := \left(\frac{2n_2 + b_2}{2n+b_1+b_2}\right)^{1/2}.
\end{equation}
We will show in Lemma~\ref{lemma:eig-repeats} that these constants relate the singular vectors for $S_d$, $S_{d_1}$ and $S_{d_2}$ to one another.

Next, the set $\mathcal{B}$ is defined to contain indices that correspond to the columns of $S_d$ that are zero as well as the index of the column of $S_d$ that corresponds to the flux between the two subregions. Specifically,
\begin{equation}
   \mathcal{B} := \begin{cases}
         \{1,n_1+1,n+1\} & \text{Zero Flux} \\
         \{1,n_1+1\} & \text{Mixed} \\
         \{n_1+1\} & \text{Open}
      \end{cases}
\end{equation}
This set will be used to help define the nullspace of the RD stoichiometry matrix.

Finally, we define the following index sets of singular values for the diffusion-only stoichiometry matrices
\begin{equation}\label{eq:J}
   \begin{aligned}
   	\mathcal{J} &:= \{j \mid \sigma_d^{(j)} \in \text{diag}(\Sigma_d) \text{ and }
   	\sigma_d^{(j)} \notin \text{diag}(\Sigma_{d_1})\} &
   	\mathcal{J}^C &:= \{1,...,n\} \setminus \mathcal{J} \\
   	\mathcal{J}_1 &:= \{j \mid \sigma_{d_1}^{(j)} \in \text{diag}(\Sigma_{d_1}) \text{ and }
   	\sigma_{d_1}^{(j)} \notin \text{diag}(\Sigma_d)\} &
   	\mathcal{J}^C_1 &:= \{1,...,n_1\} \setminus \mathcal{J}_1 \\
   	\mathcal{J}_2 &:= \{j \mid \sigma_{d_2}^{(j)} \in \text{diag}(\Sigma_{d_2}) \text{ and }
   	\sigma_{d_2}^{(j)} \notin \text{diag}(\Sigma_d)\} &
   	\mathcal{J}^C_2 &:= \{1,...,n_2\} \setminus \mathcal{J}_2
   \end{aligned}
\end{equation}
and the analogous index sets for only nonzero singular values
\begin{equation}
   \begin{aligned}
      \hat{\mathcal{J}} &:= \{j \in \mathcal{J} \mid \sigma_d^{(j)} \ne 0\}
         &
      \hat{\mathcal{J}}^C &:= \{1,...,q_d\} \setminus \hat{\mathcal{J}} \\
      \hat{\mathcal{J}}_1 &:= \{j \in \mathcal{J}_1 \mid \sigma_{d_1}^{(j)} \ne 0\} &
      \hat{\mathcal{J}}_1^C &:= \{1,...,q_{d_1}\} \setminus \hat{\mathcal{J}}_1  \\
      \hat{\mathcal{J}}_2 &:= \{j \in \mathcal{J}_2 \mid \sigma_{d_2}^{(j)} \ne 0\} &
      \hat{\mathcal{J}}_2^C &:= \{1,...,q_{d_2}\} \setminus \hat{\mathcal{J}}_2
   \end{aligned}
\end{equation}
We will use these sets to define how singular values repeat in the system when $\gamma=0$. Understanding this property is a key step in proving Theorem~\ref{thm:svd-S}.

\subsection{Additional reaction-dependent stoichiometry-like matrices}\label{sub-sec:notation-more-matrices}
We refer to modified versions of the reaction-only stoichiometry matrices as \textit{stoichiometry-like matrices}. In this section we will define the $4 + |\hat{\mathcal{J}}|$ stoichiometry-like matrices that are necessary for writing the SVD. These matrices are given as $S_{r_1,-}$, $S_{r_2,-}$, $S_{\bar{r}}$ and $S_{\bar{r}_+,j}$ for $j \in \hat{\mathcal{J}}$.

The matrices $S_{r_1,-}$ and $S_{r_2,-}$ will represent subsetted versions of $S_{r_1}$ and $S_{r_2}$, respectively, that only contain rows for species that cannot diffuse across the boundary. Specifically,
\begin{equation}\label{eq:Sr-parts}
   \begin{aligned}
      S_{r_{1},+} &:= \left(S_{r_1}\right)_{\mathcal{M_+}} &
      S_{r_1,-} &:= \left(S_{r_1}\right)_{\mathcal{M_-}} \\
      S_{r_2,+} &:= \left(S_{r_2}\right)_{\mathcal{M_+}} &
      S_{r_2,-} &:= \left(S_{r_2}\right)_{\mathcal{M_-}}
   \end{aligned}
\end{equation}
where $(A)_\mathcal{B}$ represents the rows of $A$ that are in the index-set $\mathcal{B}$.

Next, we define a stoichiometry-like matrix that represents a merger of the two reaction-only stoichiometry matrices:
\begin{equation}\label{eq:S_rbar}
	S_{\bar{r}} :=
   \begin{bmatrix}
		C_1^2 I_m^{\mathcal{M}_+} S_{r_1,+} + I_m^{\mathcal{M}_-} S_{r_1,-} & C_1 C_2 I_m^{\mathcal{M}_+} S_{r_2,+} \\
		C_1 C_2 I_m^{\mathcal{M}_+} S_{r_1,+} & C_2^2 I_m^{\mathcal{M}_+} S_{r_2,+} + I_m^{\mathcal{M}_-} S_{r_2,-}
	\end{bmatrix}.
\end{equation}
To prove Theorem~\ref{thm:svd-S}, we will need to consider the eigendecomposition of
\begin{equation}
   B := S_{\bar{r}} S_{\bar{r}}^T.
\end{equation}
It can be shown that
\begin{equation}\label{eq:B}
	B =
   	\begin{bmatrix}
	   	B_2 + C_1^2 ( B_1 + B_4 + B_4^T) & C_1 C_2 ( B_1 + B_5 + B_4^T) \\
   		C_1 C_2 ( B_1 + B_4 + B_5^T) & B_3 + C_2^2 (B_1 + B_5 + B_5^T )
	\end{bmatrix}
\end{equation}
and
\begin{equation}\label{eq:Bi}
   \begin{aligned}
      B_1 &:= D_+ \left(C_1^2 S_{r_1} S_{r_1}^T + C_2^2 S_{r_2} S_{r_2}^T\right) D_+ \\
      B_2 &:= D_- \left(S_{r_1} S_{r_1}^T\right) D_- \\
      B_3 &:= D_- \left(S_{r_2} S_{r_2}^T\right) D_- \\
      B_4 &:= D_+ \left( S_{r_1} S_{r_1}^T \right) D_- \\
      B_5 &:= D_+ \left( S_{r_2} S_{r_2}^T \right) D_-.
   \end{aligned}
\end{equation}
To obtain this equation, we use that $I_m^{\mathcal{M}_+} S_{r_1,+} = D_+ S_{r_1}$ and similar identities.

Finally, for $j \in \hat{\mathcal{J}}$, define
\begin{equation}\label{eq:S_rbar_+_j}
	S_{\bar{r}_+,j} := \begin{bmatrix}
		|u_{d,s_1}^{(j)}| S_{r_1,+} & |u_{d,s_2}^{(j)}| S_{r_2,+}
	\end{bmatrix}.
\end{equation}
Note that $S_{\bar{r}_+,j}$ for $j \in \hat{\mathcal{J}}$ are the only stoichiometry-like matrices that depend on the spatial properties of the system.

\subsection{Additional SVD notation for stoichiometry-like and the diffusion-only stoichiometry matrices}\label{subsec:notation-SVD}

Generally, (\ref{eq:svd-Sstar}) will be used to write the SVDs of the stoichiometry and stoichiometry-like matrices. However, there are a few additional notational notes and one exception that will be discussed in this section.

First, the exception to this notational format will be for the left singular vectors of $S_{\bar{r}}$. Specifically, when considering the left nullspace of $S_{\bar{r}}$, we will exclude the space spanned by the following set of vectors
\begin{equation}\label{eq:breveU_ex}
   \breve{U}_{\bar{r},ex} :=
   \begin{bmatrix}
      C_2 I_m^{\mathcal{M}_+} \\
      -C_1 I_m^{\mathcal{M}_+}. \\
   \end{bmatrix}
\end{equation}
We define $\breve{U}_{\bar{r}} := \text{span}(\text{null}(S^T,\breve{U}_{\bar{r},ex}))$ and $U_{\bar{r}} := \begin{bmatrix}\hat{U}_{\bar{r}} & \breve{U}_{\bar{r}}\end{bmatrix}$. The reason for this will become clear in the proof to Theorem~\ref{thm:svd-S}.

In some instances, we divide a given singular vector into two components. We will use a subscripted $s_1$ or $s_2$ to refer to portions of the singular vectors that correspond to processes  that occur in the first or second subregion, respectively. Additionally, we wil use the subscript $m_1$ and $m_2$ to represent singular vectors that are divided into two subvectors of size $m$. More specifically, for the singular vectors of $S_d$, we have that
\begin{equation}
   u_d =
   \begin{bmatrix}
      u_{d,s_1} \\ u_{d,s_2}
   \end{bmatrix},
   \quad \quad
   v_d =
   \begin{bmatrix}
      v_{d,s_1} \\ v_{d,s_2}
   \end{bmatrix}
\end{equation}
where $u_{d,s_1} \in \mathbb{R}^{n_1}$, $u_{d,s_2} \in \mathbb{R}^{n_2}$, $v_{d,s_1} \in \mathbb{R}^{n_1}$, and $v_{d,s_2} \in \mathbb{R}^{n_2+1}$. For the singular vectors of $S_{\bar{r}}$ and the right singular vectors of $S_{\bar{r}_+,j}$, we define
\begin{equation}
   u_{\bar{r}} =
      \begin{bmatrix}
         u_{\bar{r},m_1} \\
         u_{\bar{r},m_2}
      \end{bmatrix}, \quad
   v_{\bar{r}} =
      \begin{bmatrix}
         v_{\bar{r},s_1} \\
         v_{\bar{r},s_2}
      \end{bmatrix}, \quad
   v_{\bar{r}_+,j} =
      \begin{bmatrix}
         v_{\bar{r}_+,j,s_1} \\
         v_{\bar{r}_+,j,s_2}
      \end{bmatrix}.
\end{equation}
where $u_{\bar{r},m_1},u_{\bar{r},m_2} \in \mathbb{R}^m$, $v_{\bar{r},s_1} \in \mathbb{R}^{p_1}$, $v_{\bar{r},s_2} \in \mathbb{R}^{p_2}$, $v_{\bar{r}_+,j,s_1}  \in \mathbb{R}^{p_1}$,  and $v_{\bar{r}_+,j,s_2} \in \mathbb{R}^{p_2}$. We will use the same notation to divide an entire set of right or left singular vectors into components. As an example, we have that
\begin{equation}
   U_d =
   \begin{bmatrix}
      U_{d,s_1} \\ U_{d,s_2}
   \end{bmatrix}, \quad
   \hat{U}_d =
   \begin{bmatrix}
      \hat{U}_{d,s_1} \\ \hat{U}_{d,s_2}
   \end{bmatrix}, \quad
   \breve{U}_d =
   \begin{bmatrix}
      \breve{U}_{d,s_1} \\ \breve{U}_{d,s_2}
   \end{bmatrix}.
\end{equation}

When considering the SVD of the diffusion-only stoichiometry matrices $S_d$, $S_{d_1}$ and $S_{d_2}$, the singular vectors and values can be written explicitly and depend on the specific boundary conditions (see Supplemental Material~\ref{sec-app:SVD-Sd}). The rank of $S_{d}$, given by $q_d$, also depends on the the boundary conditions where
\begin{equation*}
	q_{d} = n-1+b_2
\end{equation*}
and $b_2$ is as given in (\ref{eq:Sd}). This implies that the left nullspace, spanned by $\breve{U}_d$, is empty for both Mixed and Open boundary conditions.

\section{Preliminary Lemmas}\label{sec:useful-results}

In this section, we will provide preliminary lemmas that will be used to prove the main result.

First we consider the SVD of the diffusion-only stoichiometry matrices. In the following lemma we prove that, if a given singular value repeats across $\hat{\Sigma}_{d}$, $\hat{\Sigma}_{d_1}$, and $\hat{\Sigma}_{d_2}$, then it must be in all of these matrices. That is, a singular value will occur in either one or all three matrices.

\begin{lemma}\label{lemma:eig-repeats}
	Consider a system with Zero Flux, Mixed, or Open boundary conditions and the singular values defined in the matrices $\hat{\Sigma}_d$, $\hat{\Sigma}_{d_1}$, and $\hat{\Sigma}_{d_2}$. If a singular value is in two of these matrices then it is in all three.

   For singular values that are in all three matrices, the corresponding singular vectors are related as follows:
	\begin{align}
	u_d^{(j)} &=
	\begin{bmatrix}
	C_1 u_{d_1}^{(j_1)} \\
	(-1)^{n_1-j_1} C_2 u_{d_2}^{(j_2)}
	\end{bmatrix} \label{eq:Ud_divided}\\
	v_d^{(j)} &=
	\begin{bmatrix}
	C_1 \left(v_{d_1}^{(j_1)}\right)_{1:n_1} \\
	(-1)^{n_1-j_1} C_2 v_{d_2}^{(j_2)}
	\end{bmatrix} \label{eq:Vd_divided}
	\end{align}
   where $j$, $j_1$, and $j_2$ are such that $\sigma_d^{(j)} =\sigma_{d_1}^{(j_1)} = \sigma_{d_2}^{(j_2)}$ and $\left(v_{d_1}^{(j_1)}\right)_{1:n_1}$ represents the first $n_1$ entrees of $v_{d_1}^{(j_1)}$. Additionally, the indices $j$, $j_1$, and $j_2$ satisfy $j_1 = C_1^2 j$, $j_2 = C_2^2 j$, and $j=j_1 + j_2$.
\end{lemma}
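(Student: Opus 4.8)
The plan is to argue directly from the explicit SVDs of $S_d$, $S_{d_1}$, and $S_{d_2}$ tabulated in the Supplemental Material. In every one of the three boundary-condition cases the singular values have the form $\sigma_d^{(j)} = 2\sin(\theta_j)$, where the angle $\theta_j$ is a fixed constant times the ratio $j/(2n+b_1+b_2)$, and analogously $\sigma_{d_i}^{(j_i)} = 2\sin(\theta_{i,j_i})$ with $\theta_{i,j_i}$ the same constant times $j_i/(2n_i+b_i)$ (the irrelevant boundary value being suppressed, per (\ref{eq:Sd})). The associated left singular vectors are discrete cosine modes sampled on the compartments, and the right singular vectors are the companion edge modes. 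First I would record these formulas together with the two arithmetic identities $(2n_1+b_1)+(2n_2+b_2)=2n+b_1+b_2$ and, from (\ref{eq:C12}), $C_1^2+C_2^2=1$; these facts drive the entire argument.

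For the first assertion---that membership in two of the three spectra forces membership in all three---the crucial point is that all relevant angles lie in an interval on which $\sin$ is injective, so two singular values coincide if and only if their defining ratios coincide. Suppose, for instance, $\sigma_{d_1}^{(j_1)} = \sigma_{d_2}^{(j_2)}$; then $j_1/(2n_1+b_1) = j_2/(2n_2+b_2)$, and the equal-ratio (mediant) identity $\frac{a}{A}=\frac{b}{B}\Rightarrow\frac{a}{A}=\frac{a+b}{A+B}$ shows this common value equals $(j_1+j_2)/(2n+b_1+b_2)$; taking $j := j_1+j_2$, which lies in the admissible index range of $S_d$ because $j_1$ and $j_2$ lie in those of $S_{d_1}$ and $S_{d_2}$, produces a singular value of $S_d$ of the same magnitude. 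If instead the shared value lies in $\hat{\Sigma}_d$ and one subregion, the same conclusion follows from the difference form $\frac{a}{A}=\frac{b}{B}\Rightarrow\frac{a-b}{A-B}=\frac{a}{A}$. In every case the index relations $j_1 = C_1^2 j$, $j_2 = C_2^2 j$, and $j = j_1 + j_2$ drop out immediately once the common ratio is rewritten through the definitions (\ref{eq:C12}) of $C_1$ and $C_2$.

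For the singular-vector relations (\ref{eq:Ud_divided}) and (\ref{eq:Vd_divided}), I would substitute the matched frequency into the explicit entrywise formulas and verify the claimed identities compartment by compartment. On the first $n_1$ compartments the global mode $u_d^{(j)}$ carries exactly the frequency of the subregion-1 mode $u_{d_1}^{(j_1)}$, so the two agree up to a single scalar; checking that this scalar is $C_1$ amounts to comparing normalizations, which is precisely what the definition of $C_1$ encodes. On the last $n_2$ compartments the same frequency matches $u_{d_2}^{(j_2)}$ up to $C_2$, but since the subregion-2 mode is phased from its own left boundary at compartment $n_1+1$ rather than from $x=0$, re-indexing the global cosine to the local origin produces the alignment sign $(-1)^{n_1-j_1}$ via a standard angle-difference expansion evaluated at the barrier. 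The computation for $v_d^{(j)}$ is identical, the only subtlety being that $S_{d_1}$ possesses one extra (barrier) edge, so the restriction retains only the first $n_1$ entries $\left(v_{d_1}^{(j_1)}\right)_{1:n_1}$, exactly as stated.

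I expect the principal obstacle to be the vector bookkeeping rather than the value matching: fixing the sign $(-1)^{n_1-j_1}$ and confirming that the proportionality constants are exactly $C_1$ and $C_2$ uniformly across the Zero-Flux, Mixed, and Open cases demands care, because the precise cosine/sine arguments and the admissible ranges of $j$, $j_1$, $j_2$ shift between boundary conditions. By contrast, the ``two implies three'' statement reduces cleanly to the mediant identity once injectivity of $\sin$ on the relevant interval is in hand.
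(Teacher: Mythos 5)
Your proposal is correct and takes essentially the same route as the paper's proof: both argue directly from the explicit trigonometric SVD formulas in Supplemental Material~\ref{sec-app:SVD-Sd}, use injectivity of the trig function on the relevant angular range to convert equality of singular values into equality of index ratios, read off $j_1 = C_1^2 j$, $j_2 = C_2^2 j$, $j = j_1+j_2$, and verify (\ref{eq:Ud_divided})--(\ref{eq:Vd_divided}) entrywise, with the sign $(-1)^{n_1-j_1}$ produced by the phase shift at the barrier. Two small remarks: your mediant/difference framing is a tidy unification of what the paper does in three separate boundary-condition cases (and it explicitly covers the subregion--subregion pairing, which the paper handles only implicitly by starting from a value shared between $\hat{\Sigma}_d$ and $\hat{\Sigma}_{d_1}$), but your claim that the sine angle is a constant multiple of $j/(2n+b_1+b_2)$ is literally false --- the angle is affine in that ratio --- though rewriting $\sigma_d^{(j)} = 2\cos\bigl(\pi j/(2n+b_1+b_2)\bigr)$ repairs this harmlessly and the argument goes through unchanged.
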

\noindent The proof of this claim is given in Appendix~\ref{sec-app:proofs}.

We next derive formulas for the dimensions of the four fundamental subspaces of $S$, as given by (\ref{eq:S}). This allows us to verify that the SVD has the correct number of singular vectors in each space.

\begin{lemma}\label{lemma:rank}
	The rank of $S$, as given by (\ref{eq:S}), is
	\begin{equation}
   	q =
         \begin{cases}
      	  (n-2)m + m_+ + q_{\bar{r}} & \text{Zero Flux} \\
      	  (n-1)m + m_+ + q_{r_1,-} & \text{Mixed} \\
      	  nm & \text{Open}.
      	\end{cases}
	\end{equation}
	The dimension of the nullspace is
	\begin{equation}
   	\breve{q} =
         \begin{cases}
         	(n_1-1)p_1 + (n_2-1)p_2 + 3m - m_+ + \breve{q}_{\bar{r}} & \text{Zero Flux} \\
         	(n_1-1)p_1 + n_2 p _2 + 2m - m_+ + \breve{q}_{r_1,-} & \text{Mixed} \\
         	n_1 p_1 + n_2 p_2 + m & \text{Open}.
      	\end{cases}
	\end{equation}
	and the dimension of the left nullspace is
	\begin{equation}\label{eq:breve-q-Lb}
   	\breve{q}_{\ell} =
         \begin{cases}
         	m + m_- - q_{\bar{r}} & \text{Zero Flux} \\
         	m_- - q_{r_1,-} & \text{Mixed} \\
         	0 & \text{Open}.
      	\end{cases}
	\end{equation}
\end{lemma}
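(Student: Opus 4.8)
The plan is to compute the three subspace dimensions of $S$ in (\ref{eq:S}) by combining a rank count for the diffusion block with a careful analysis of how the reaction block interacts with it, treating the three boundary conditions separately since each changes $q_d$ and the structure of $\breve{U}_d$. Throughout I would exploit the fundamental relation that, for an $N\times M$ matrix, $q + \breve{q}_\ell = N$ (rows) and $q + \breve{q} = M$ (columns). Here $S\in\mathbb{R}^{nm \times (n_1 p_1 + n_2 p_2 + (n+1)m)}$, so once I pin down the rank $q$, both nullspace dimensions follow by subtraction; the remaining content is computing $q$ and verifying the subtractions reproduce the stated piecewise formulas.

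\textbf{Computing the rank.} The natural approach is to identify the left nullspace of $S$ directly, since $\breve{q}_\ell = nm - q$. A vector $y\in\mathbb{R}^{nm}$ lies in $\ker(S^T)$ iff it annihilates both column blocks of $S$: it must kill $\gamma(I_{n_1}\otimes S_{r_1})\oplus(I_{n_2}\otimes S_{r_2})$ (the reaction block) and $S_d\otimes D_+ + (S_d-H)\otimes D_-$ (the diffusion block). The diffusion-block condition forces $y$ to lie in a space built from $\breve U_d$ (the left nullspace of $S_d$) tensored with appropriate species components; the reaction-block condition then further restricts $y$ to the left-nullspace directions $\breve U_{r_1}, \breve U_{r_2}$ (for non-diffusing species) or $\breve U_{\bar r}$ (for the merged matrix $S_{\bar r}$ capturing diffusing species that are tied across the barrier). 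I would make this precise boundary-condition by boundary-condition: for \emph{Open}, $\breve U_d$ is empty (since $q_d = n$ so $S_d$ has full row rank), which should immediately give $\breve q_\ell = 0$ and hence $q = nm$; for \emph{Mixed}, $q_d = n-1$ contributes one diffusion-nullspace direction, and intersecting with the reaction constraint should yield the $m_- - q_{r_1,-}$ count; for \emph{Zero Flux}, $q_d = n-1$ again but with a richer $\breve U_d$, and the intersection with $\breve U_{\bar r}$ produces the $m + m_- - q_{\bar r}$ term.

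\textbf{Assembling the three formulas.} Having obtained $\breve q_\ell$ in each case, I would set $q = nm - \breve q_\ell$ and simplify: e.g.\ Zero Flux gives $q = nm - (m + m_- - q_{\bar r}) = (n-1)m - m_- + q_{\bar r} = (n-2)m + m_+ + q_{\bar r}$ using $m = m_+ + m_-$, matching the claim. Then $\breve q$ comes from $\breve q = M - q$ with $M = n_1 p_1 + n_2 p_2 + (n+1)m$; the bookkeeping should collapse the $(n+1)m$ and $nm$ terms into the stated $(n_1-1)p_1+(n_2-1)p_2+3m-m_++\breve q_{\bar r}$ form (and analogously for Mixed and Open). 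I expect the genuine technical work, and the main obstacle, to be the Zero-Flux left-nullspace computation: showing that the diffusion constraint and the reaction constraint interact exactly through the merged matrix $S_{\bar r}$ (rather than independently through $S_{r_1}$ and $S_{r_2}$), which is precisely what the tensor coupling $S_d\otimes D_+ + (S_d - H)\otimes D_-$ enforces and what Lemma~\ref{lemma:eig-repeats} is designed to support. Care is also needed to confirm the excluded direction $\breve U_{\bar r,ex}$ in (\ref{eq:breveU_ex}) does not spuriously inflate or deflate the count. Once that intersection is correctly characterized, the other two boundary conditions are comparatively routine specializations.
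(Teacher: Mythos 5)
The paper does not actually prove this lemma: it states only that the (omitted) proof ``involves a sequence of row and column operations on $S$,'' i.e.\ a direct elimination argument. Your route --- characterizing $\ker(S^T)$ and recovering $q$ and $\breve{q}$ by rank--nullity --- is therefore a genuinely different approach, and it is sound. Since every column of the diffusion block has the form $s \otimes e_k$, the left-null condition decouples by species into $y_k^T S_d = 0$ for $k \in \mathcal{M}_+$ and $y_k^T(S_d - H) = 0$ for $k \in \mathcal{M}_-$; the resulting constancy of $y_k$ within subregions reduces the reaction-block condition to a pair of conditions $(z^{(1)})^T S_{r_1} = 0$, $(z^{(2)})^T S_{r_2} = 0$ with $D_+ z^{(1)}$ and $D_+ z^{(2)}$ proportionally tied through $C_1$, $C_2$ (by Lemma~\ref{lemma:eig-repeats}, $\breve{U}_d = [C_1 \breve{U}_{d_1}; C_2 \breve{U}_{d_2}]$ for Zero Flux), and this coupled system is exactly $\ker(S_{\bar{r}}^T)$ modulo the $m_+$-dimensional span of $\breve{U}_{\bar{r},ex}$ --- the kernel of the correspondence $u \mapsto (z^{(1)},z^{(2)})$ is precisely \eqref{eq:breveU_ex} --- giving $\breve{q}_{\ell} = (2m - q_{\bar{r}}) - m_+ = m + m_- - q_{\bar{r}}$, as you anticipated. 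Your bookkeeping also checks out: using $q_{\bar{r}} + \breve{q}_{\bar{r}} = p_1 + p_2$ and $q_{r_1,-} + \breve{q}_{r_1,-} = p_1$ converts $\breve{q} = n_1p_1 + n_2p_2 + (n+1)m - q$ into the stated piecewise formulas. A side benefit of your route over elimination is that it produces the explicit basis vectors that reappear as $\breve{U}_1$ and $\breve{U}_2$ in Theorem~\ref{thm:svd-S}, not merely the dimensions.

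One concrete error needs fixing, however. For Mixed boundary conditions you assert $q_d = n-1$ with one nullspace direction contributed by $\breve{U}_d$. By the paper's formula $q_d = n - 1 + b_2$ (Section~\ref{subsec:notation-SVD}), Mixed has $b_2 = 1$, so $q_d = n$ and $\breve{U}_d$ is \emph{empty}; the diffusing species therefore admit no left-null direction at all ($y_k = 0$ for $k \in \mathcal{M}_+$). The single surviving direction comes instead from $\breve{U}_{d_1}$, since $S_{d_1}$ has zero-flux-type conditions at both ends ($b_1 = 0$ and $b_2$ replaced by $0$ in its construction), while $\breve{U}_{d_2}$ is empty. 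The count $m_- - q_{r_1,-}$ then follows from the constraint $a^T S_{r_1,-} = 0$ on the $\mathcal{M}_-$ components --- note this involves the row-restricted matrix $S_{r_1,-}$, not the full $S_{r_1}$ as your general description of the ``non-diffusing'' restriction suggests. Relatedly, for Open boundary conditions the conclusion $\breve{q}_{\ell} = 0$ requires that $\breve{U}_{d_1}$ and $\breve{U}_{d_2}$ be empty as well (they are: Mixed-Alt and Mixed conditions respectively give both subregion matrices full row rank); emptiness of $\breve{U}_d$ alone only eliminates the $\mathcal{M}_+$ directions. With these corrections your argument goes through in all three cases.
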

\noindent Here we are using the rank and nullspace size of $S_{r_1,-}$ and $S_{\bar{r}}$ defined in (\ref{eq:Sr-parts}) and (\ref{eq:S_rbar}). We omit the proof of this claim but note that it involves a sequence of row and column operations on $S$.

\section{Singular value decomposition derivation}\label{sec:SVD}

In this section, we present the proofs for Theorem \ref{thm:svd-S} and \ref{thm:svd-S-II}. Recall in Theorem \ref{thm:svd-S} we provide the approximate SVD for a system with a barrier, whereas in Theorem \ref{thm:svd-S-II} we consider a system without a barrier and derive an exact SVD for all relative diffusion/reaction time scales. We will also provide an alternative basis for the nullspace of the system with a barrier (Proposition~\ref{prop:breveV}).

To prove Theorem \ref{thm:svd-S}, we will apply concepts from linear perturbation theory and derive the SVD in the limit as diffusion becomes much faster than reactions. Specifically, we first consider the system at $\gamma=0$, and derive a set of left singular vectors and singular values (i.e., the eigenvectors and eigenvalues of $S S^T$ when $\gamma=0$). Because this system necessarily has repeating eigenvalues, the associated eigenvectors are not unique and are not necessarily continuous with respect to $\gamma$. However, we can apply results from Lemma \ref{lemma:eig-repeats} to find the unique orthonormal eigenprojection associated with each eigenvalue. Using these eigenprojections and perturbation theory results, we find the basis of eigenvectors that the system converges to continuously as $\gamma \rightarrow 0$. For a review of the necessary concepts from perturbation theory that are used in the proof see Appendix \ref{sec-app:perturbation-theory}.

To prove Theorem \ref{thm:svd-S-II}, we show directly that the given equations are equivalent to the SVD. We also show that the SVD given by Theorem \ref{thm:svd-S-II} is a simplified version of the SVD given by Theorem \ref{thm:svd-S} (see Corollary~\ref{cor:ItoII}).

\subsection{The perturbed and unperturbed systems}

The left singular vectors of $S$ are given by the solutions to the following
eigenvalue problem
\begin{equation*}\label{eq:SbSbT-eig}
	\begin{aligned}
		S &S^T u_i\\
		&=\left(\gamma^{2}
		\begin{bmatrix}
		I_{n_1} \otimes S_{r_1} S_{r_1}^T & 0 \\
		0 & I_{n_2} \otimes S_{r_2} S_{r_2}^T
		\end{bmatrix}
		+ S_{d}S_{d}^{T}\otimes D_+ + \left(S_{d}S_{d}^{T}-HH^{T}\right)\otimes D_-\right)u_{i} \\
		&=\lambda_{i}u_{i}.
	\end{aligned}
\end{equation*}
We will consider solutions to this eigenvalue problem in the limit as diffusion becomes much faster than reactions (i.e. $\gamma \rightarrow 0$). To consider this in the context of perturbation theory, we rewrite the eigenvalue problem as follows
\begin{equation}\label{eq:perturbationProblem}
   T(\gamma)u_i(\gamma) = (T+\gamma^2 T^{(1)})u_i(\gamma) = \lambda_i(\gamma)u_i(\gamma)
\end{equation}
where now we are explicitly including the dependency of $u_{i}$ and $\lambda_{i}$ on $\gamma$. The unperturbed matrix is
\begin{equation}\label{eq:T}
   T := S_{d}S_{d}^{T}\otimes D_+ + \left(S_{d}S_{d}^{T}-HH^{T}\right)\otimes D_-
\end{equation}
and the perturbation matrix is
\begin{equation}\label{eq:T-reac}
   T^{(1)} :=
   \begin{bmatrix}
   I_{n_1} \otimes S_{r_1} S_{r_1}^T & 0 \\
   0 & I_{n_2} \otimes S_{r_2} S_{r_2}^T.
   \end{bmatrix}.
\end{equation}
Given appropriate choices for the eigenvectors $u_{i}(\gamma)$, the eigenvectors and eigenvalues will be continuous functions of $\gamma$ in the neighborhood of $\gamma=0$.

\subsection{The eigenvalues and eigenprojections of the unperturbed system}

In this section we provide an orthonormal eigendecomposition for the unperturbed matrix $T$ (Lemma~\ref{lemma:unpertbEigval}). We then use this eigendecomposition along with the results from Lemma~\ref{lemma:eig-repeats} to find the unique orthonormal eigenprojections associated with each eigenvalue (Lemma~\ref{lemma:eigprojects}).

\begin{lemma}\label{lemma:unpertbEigval}
	An orthonormal eigendecomposition of $T$ is given as
	\begin{equation}
		T = Q_T \Lambda_T Q_T^T =
			\begin{bmatrix}
				\hat{Q}_T & \breve{Q}_T
			\end{bmatrix}
			\begin{bmatrix}
				\hat{\Lambda}_T & 0 \\
				0 & 0
			\end{bmatrix}
			\begin{bmatrix}
				\hat{Q}_T & \breve{Q}_T
			\end{bmatrix}^T
	\end{equation} where $\hat{\Lambda}_T$ contains the nonzero eigenvalues of $T$ and
	\begin{align}
		\hat{Q}_T &= \begin{bmatrix}
			\hat{Q}_{T,1} & \hat{Q}_{T,2} & \hat{Q}_{T,3}
		\end{bmatrix} \\
		\breve{Q}_T &=
		\begin{bmatrix}
			\breve{Q}_{T,1} & \breve{Q}_{T,2} & \breve{Q}_{T,3}
		\end{bmatrix} \\
		\hat{\Lambda}_T &= \begin{bmatrix}
			\hat{\Lambda}_{T,1} \\
			& \hat{\Lambda}_{T,2} \\
			& & \hat{\Lambda}_{T,3}
		\end{bmatrix}
	\end{align}where
	\begin{align}
		\label{eq:W}
		\hat{Q}_{T,1} &= \hat{U}_d \otimes I_m^{\mathcal{M}_+}, &
		\hat{Q}_{T,2} &=
		\begin{bmatrix}
			\hat{U}_{d_1} \\
			0
		\end{bmatrix}
		\otimes I_m^{\mathcal{M}_-}, &
		\hat{Q}_{T,3} &=
		\begin{bmatrix}
			0 \\
			\hat{U}_{d_2}
		\end{bmatrix} \otimes I_m^{\mathcal{M}_-} \\
		\label{eq:W-null}
		\breve{Q}_{T,1} &= \breve{U}_{d} \otimes I_m^{\mathcal{M}_+}, &
		\breve{Q}_{T,2} &=
		\begin{bmatrix}
			\breve{U}_{d_1} \\
			0
		\end{bmatrix}
		\otimes I_m^{\mathcal{M}_-}, &
		\breve{Q}_{T,3} &=
		\begin{bmatrix}
			0 \\
			\breve{U}_{d_2}
		\end{bmatrix}
		\otimes I_m^{\mathcal{M}_-} \\
		\label{eq:Sigma-W}
		\hat{\Lambda}_{T,1} &= \hat{\Sigma}_d^2 \otimes I_{m_+}, &
		\hat{\Lambda}_{T,2} &= \hat{\Sigma}_{d_1}^2 \otimes I_{m_-},  &
		\hat{\Lambda}_{T,3} &= \hat{\Sigma}_{d_2}^2 \otimes I_{m_-}.
	\end{align}
\end{lemma}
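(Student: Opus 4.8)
The plan is to verify the claimed decomposition directly: show that each listed block $\hat{Q}_{T,i},\breve{Q}_{T,i}$ consists of eigenvectors of $T$ with the stated eigenvalues, and that together the columns of $\begin{bmatrix}\hat{Q}_T & \breve{Q}_T\end{bmatrix}$ form an orthonormal basis of $\mathbb{R}^{nm}$. Since $T$ is real symmetric (a sum of Kronecker products of symmetric matrices), exhibiting $nm$ orthonormal eigenvectors suffices to conclude it is a valid orthonormal eigendecomposition; separating the positive eigenvalues from the zero eigenvalues then yields the split into $\hat{Q}_T$ and $\breve{Q}_T$, with the positivity coming from the fact that $S_dS_d^T$, $S_{d_1}S_{d_1}^T$, $S_{d_2}S_{d_2}^T$ are positive semidefinite and their nonzero eigenvalues are the $\hat{\Sigma}^2$ blocks.

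The conceptual engine is that $D_+$ and $D_-$ are complementary orthogonal projections ($D_++D_-=I_m$, $D_+D_-=0$) acting in the second Kronecker factor, with the columns of $I_m^{\mathcal{M}_+}$ and $I_m^{\mathcal{M}_-}$ spanning their respective ranges. Concretely $D_+I_m^{\mathcal{M}_+}=I_m^{\mathcal{M}_+}$, $D_-I_m^{\mathcal{M}_+}=0$, and symmetrically for $\mathcal{M}_-$. Hence $\mathbb{R}^{nm}$ splits into an $\mathcal{M}_+$ sector on which $T$ reduces to $S_dS_d^T\otimes I_{m_+}$ and an $\mathcal{M}_-$ sector on which it reduces to $(S_dS_d^T-HH^T)\otimes I_{m_-}$. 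I would make this precise with the mixed-product rule $(A\otimes B)(C\otimes D)=AC\otimes BD$: for the first block, $T(\hat{U}_d\otimes I_m^{\mathcal{M}_+})=(S_dS_d^T\hat{U}_d)\otimes I_m^{\mathcal{M}_+}=(\hat{U}_d\otimes I_m^{\mathcal{M}_+})(\hat{\Sigma}_d^2\otimes I_{m_+})$, using $S_dS_d^T\hat{U}_d=\hat{U}_d\hat{\Sigma}_d^2$; the zero block $\breve{Q}_{T,1}$ follows identically from $S_dS_d^T\breve{U}_d=0$.

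The step doing the real work is the $\mathcal{M}_-$ sector, which requires identifying $S_dS_d^T-HH^T$. I would first establish the block diagonalization
\begin{equation*}
S_dS_d^T-HH^T=(S_d-H)(S_d-H)^T=\begin{bmatrix} S_{d_1}S_{d_1}^T & 0 \\ 0 & S_{d_2}S_{d_2}^T\end{bmatrix}.
\end{equation*}
The first equality reduces to the cross-term identity $S_dH^T=HS_d^T=HH^T$, which holds because the only nonzero column of $H$ (column $n_1+1$) equals $-e_{n_1}+e_{n_1+1}$, which is exactly the $(n_1+1)$th column of $S_d$; the second equality is the relation between $S_d$, $S_{d_1}$, $S_{d_2}$ from Section~\ref{sec:notation}, together with the observation that the columns dropped in passing to $S_{d_1}^{(1:n_1)}$ and $S_{d_2}^{(2:n_2+1)}$ are zero (since $b_2=0$ for $S_{d_1}$ and $b_1=0$ for $S_{d_2}$), so they do not affect the Gram products. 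With this identity, $\begin{bmatrix}\hat{U}_{d_1}\\0\end{bmatrix}$ and $\begin{bmatrix}0\\\hat{U}_{d_2}\end{bmatrix}$ are eigenvectors of the block-diagonal matrix with eigenvalues $\hat{\Sigma}_{d_1}^2$ and $\hat{\Sigma}_{d_2}^2$, and the verifications for $\hat{Q}_{T,2},\hat{Q}_{T,3},\breve{Q}_{T,2},\breve{Q}_{T,3}$ proceed exactly as in the $\mathcal{M}_+$ case.

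Finally I would check orthonormality and completeness. Orthogonality across sectors is immediate from $(I_m^{\mathcal{M}_+})^TI_m^{\mathcal{M}_-}=0$; within a sector it follows from orthonormality of the columns of $U_d=\begin{bmatrix}\hat{U}_d & \breve{U}_d\end{bmatrix}$, $U_{d_1}$, $U_{d_2}$ and the disjoint supports of blocks $2$ and $3$, using $(A\otimes B)^T(C\otimes D)=A^TC\otimes B^TD$. A dimension count then gives completeness: the $\mathcal{M}_+$ blocks contribute $(q_d+\breve{q}_d)m_+=nm_+$ columns and the $\mathcal{M}_-$ blocks contribute $(n_1+n_2)m_-=nm_-$, for a total of $nm$ orthonormal vectors in $\mathbb{R}^{nm}$. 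The main obstacle is the block-diagonalization identity for $S_dS_d^T-HH^T$; everything else is bookkeeping with the mixed-product rule and the projection identities for $D_\pm$.
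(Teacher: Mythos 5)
Your proposal is correct and follows essentially the same route as the paper's proof: direct verification via the mixed-product rule that each block $\hat{Q}_{T,i}$, $\breve{Q}_{T,i}$ consists of eigenvectors with the stated eigenvalues, together with the dimension count $nm_+ + n_1 m_- + n_2 m_- = nm$ and orthonormality checks. In fact you supply details the paper leaves implicit --- notably the justification $S_d H^T = H S_d^T = H H^T$ behind the identity $S_d S_d^T - HH^T = (S_d - H)(S_d - H)^T$ and the observation that the dropped boundary columns of $S_{d_1}$, $S_{d_2}$ are zero --- so your write-up is, if anything, more complete than the paper's, which leaves the orthogonality verification as an exercise.
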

\noindent The proof of this lemma is given in Supplemental Material~\ref{sec-app:proofs}. From (\ref{eq:Sigma-W}) it is immediately clear that $T$ has repeating eigenvalues. This implies that the eigenvectors in the matrices given by (\ref{eq:W}) and (\ref{eq:W-null}) are not unique, and therefore, likely not the eigenvectors the system converges to as $\gamma \rightarrow 0$.

Lemma~\ref{lemma:eig-repeats} along with (\ref{eq:Sigma-W}) imply that eigenvalues of $T$ either repeat $m_+$, $m_-$, or $m + m_-$ times. Using this result and the set definitions defined in (\ref{eq:J}), we next identify each unique eigenvalue and find the associated orthonormal eigenprojection.
\begin{lemma}\label{lemma:eigprojects}
   The unique eigenvalues of $T$ are contained in the following three sets
   \begin{equation}\label{eq:uniqueEigs}
   	\begin{aligned}
   		&\left(\sigma_d^{(j)}\right)^2 \text{ for }j=1,...,n, &
   		&\left(\sigma_{d_1}^{(j)}\right)^2 \text{ for }j \in \mathcal{J}_1, &
   		&\left(\sigma_{d_2}^{(j)}\right)^2 \text{ for }j \in \mathcal{J}_2.
   	\end{aligned}
   \end{equation}
   The corresponding unique orthonormal projections are, respectively,
   \begin{align}
   	P_{j} &= u_d^{(j)} (u_d^{(j)})^T \otimes D_+ & j \in \mathcal{J}  \label{eq:P1} \\
      P_{j} & = u_d^{(j)} (u_d^{(j)})^T \otimes D_+ + \begin{bmatrix}
      u_{d_1}^{(j_1)}(u_{d_1}^{(j_1)})^T & 0 \\
      0 & u_{d_2}^{(j_2)}(u_{d_2}^{(j_2)})^T
      \end{bmatrix}
      \otimes D_- & j \in \mathcal{J}^C  \label{eq:P2}\\
      P_{{n_1},j} &=
   	\begin{bmatrix}
   		u_{d_1}^{(j)}(u_{d_1}^{(j)})^T & 0 \\
   		0 & 0
   	\end{bmatrix}
   	\otimes D_- & j \in \mathcal{J}_1 \label{eq:P3}\\
   	P_{n_2,j} &=
   	\begin{bmatrix}
   		0 & 0 \\
   		0 & u_{d_2}^{(j)}(u_{d_2}^{(j)})^T
   	\end{bmatrix}
   	\otimes D_- & j \in \mathcal{J}_2  \label{eq:P4}
   \end{align}
   where $j_1$ and $j_2$ are such that $\sigma_d^{(j)} =\sigma_{d_1}^{(j_1)} = \sigma_{d_2}^{(j_2)}$.
\end{lemma}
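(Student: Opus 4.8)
The plan is to build the eigenprojections directly from the orthonormal eigendecomposition already supplied by Lemma~\ref{lemma:unpertbEigval}, regrouping its columns according to the numerical value of their eigenvalue. Since the orthogonal projection onto the eigenspace of a value $\lambda$ is $P=\sum_{i:\lambda_i=\lambda}q_iq_i^T$ for any orthonormal eigenbasis $\{q_i\}$, the whole task reduces to two things: (i) deciding which of the eigenvectors appearing in (\ref{eq:W})--(\ref{eq:W-null}) share a common eigenvalue, and (ii) summing the corresponding rank-one outer products into the closed Kronecker forms (\ref{eq:P1})--(\ref{eq:P4}).

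First I would record the three raw families of eigenpairs read off from Lemma~\ref{lemma:unpertbEigval} and (\ref{eq:Sigma-W}): the columns of $\hat{U}_d\otimes I_m^{\mathcal{M}_+}$, together with the null columns $\breve{U}_d\otimes I_m^{\mathcal{M}_+}$, carry eigenvalue $(\sigma_d^{(j)})^2$ with multiplicity $m_+$; the columns built from $\hat{U}_{d_1}$ (and $\breve{U}_{d_1}$) carry $(\sigma_{d_1}^{(j)})^2$ with multiplicity $m_-$; and those built from $\hat{U}_{d_2}$ (and $\breve{U}_{d_2}$) carry $(\sigma_{d_2}^{(j)})^2$ with multiplicity $m_-$. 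Because $D_+=\sum_{k\in\mathcal{M}_+}e_ke_k^T$ and $D_-=\sum_{k\in\mathcal{M}_-}e_ke_k^T$, the $m_+$ (resp. $m_-$) copies that accompany a single diffusion mode sum to $u_d^{(j)}(u_d^{(j)})^T\otimes D_+$ (resp. $u_{d_i}^{(j)}(u_{d_i}^{(j)})^T\otimes D_-$); this identity is what converts each collection of rank-one outer products into the Kronecker terms of (\ref{eq:P1})--(\ref{eq:P4}).

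The heart of the argument is the value-grouping, for which I would invoke Lemma~\ref{lemma:eig-repeats}: a nonzero diffusion singular value lies in exactly one, or in all three, of $\Sigma_d,\Sigma_{d_1},\Sigma_{d_2}$. Splitting $\{1,\dots,n\}$ into $\mathcal{J}$ and $\mathcal{J}^C$ (and using $\mathcal{J}_1,\mathcal{J}_2$) then produces four cases. For $j\in\mathcal{J}$ the value $(\sigma_d^{(j)})^2$ is unmatched, only the $\mathcal{M}_+$ family contributes, and the projection is (\ref{eq:P1}). For $j\in\mathcal{J}^C$ there are unique indices $j_1,j_2$ with $\sigma_d^{(j)}=\sigma_{d_1}^{(j_1)}=\sigma_{d_2}^{(j_2)}$, so all three families contribute and their outer products combine into the block form (\ref{eq:P2}). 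For $j\in\mathcal{J}_1$ (resp. $\mathcal{J}_2$) the value is unique to $S_{d_1}$ (resp. $S_{d_2}$), giving the single-block projections (\ref{eq:P3})--(\ref{eq:P4}). Distinctness of the listed values (and hence mutual orthogonality of the resulting projections) follows because the singular values inside each of $S_d,S_{d_1},S_{d_2}$ are simple, while all cross-matrix coincidences are controlled by Lemma~\ref{lemma:eig-repeats}; orthogonality between the $\mathcal{M}_+$ and $\mathcal{M}_-$ pieces is automatic from $D_+D_-=0$, and idempotency of each $P$ follows from the same identity together with the unit-norm of the diffusion singular vectors.

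I expect the main obstacle to be the bookkeeping of the coincidence structure, ensuring that each projection captures its \emph{entire} eigenspace and nothing more, and then confirming completeness. For the latter I would close with a dimension count: writing $c:=|\mathcal{J}^C|=|\mathcal{J}_1^C|=|\mathcal{J}_2^C|$ for the common number of shared values, the ranks sum to $(n-c)m_++c(m_++2m_-)+(n_1-c)m_-+(n_2-c)m_-=nm_++nm_-=nm$, which matches the size of $T$ and certifies that the projections form a complete resolution of the identity. A minor separate point is the zero eigenvalue, which is not covered by Lemma~\ref{lemma:eig-repeats}; there the same grouping applies using the null vectors $\breve{U}_d,\breve{U}_{d_1},\breve{U}_{d_2}$ in place of the hatted ones, and its coincidence pattern is read off directly from the ranks $q_d,q_{d_1},q_{d_2}$ (recalling $q_d=n-1+b_2$ so that $S_d$ has at most a one-dimensional null space).
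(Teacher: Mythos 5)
Your proposal follows essentially the same route as the paper's proof: it takes the orthonormal eigendecomposition of $T$ from Lemma~\ref{lemma:unpertbEigval}, groups the eigenvectors by eigenvalue using the coincidence structure of Lemma~\ref{lemma:eig-repeats}, and sums the rank-one outer products into the Kronecker forms (\ref{eq:P1})--(\ref{eq:P4}) exactly as in the paper's computation (\ref{eq:first-eigprojection}). Your two additions --- the completeness dimension count and the explicit treatment of the zero eigenvalue via $\breve{U}_d$, $\breve{U}_{d_1}$, $\breve{U}_{d_2}$ (which Lemma~\ref{lemma:eig-repeats}, being stated for the hatted matrices, does not formally cover) --- are correct refinements of details the paper leaves implicit.
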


\begin{proof}
	From Lemma~\ref{lemma:unpertbEigval}, we see that every eigenvalue of $T$ is contained in the sets defined by (\ref{eq:uniqueEigs}) and from Lemma~\ref{lemma:eig-repeats} it follows that a given eigenvalue is only contained in one of the sets. Therefore,  (\ref{eq:uniqueEigs}) contains the unique eigenvalues of $T$.

	To find the orthonormal eigenprojection associated with each eigenvalue we will use the eigenvectors as defined in Lemma~\ref{lemma:unpertbEigval}. Specifically, using the eigenvectors given by (\ref{eq:W}) and (\ref{eq:W-null}), we can use (\ref{eq:sumProj}) in Appendix \ref{sec-app:perturbation-theory} to obtain the unique orthonormal eigenprojection.

	For $j \in \mathcal{J}$ the $m_+$ eigenvectors associated with $(\sigma_d^{(j)})^2$ are the columns of $u_d^{(j)} \otimes I_m^{\mathcal{M}_+}$ (see Lemma~\ref{lemma:unpertbEigval}). Therefore, the associated eigenprojection is
	\begin{equation}\label{eq:first-eigprojection}
		\begin{aligned}
		P_j &= \sum_{i\in \mathcal{M}_+} \left(u_d^{(j)} \otimes I_m^{(i)}\right)\left(u_d^{(j)} \otimes I_m^{(i)}\right)^T \\
			&= \sum_{i\in \mathcal{M}_+} u_d^{(j)} (u_d^{(j)})^T \otimes I_m^{(i)} (I_m)_i \\
			&= u_d^{(j)} (u_d^{(j)})^T \otimes D_+.
		\end{aligned}
	\end{equation}
	For $j \in \mathcal{J}^C$, there are $m$+$m_-$ eigenvectors associated with $(\sigma_d^{(j)})^2$. These eigenvectors are given by the columns of the following three matrices
	\begin{equation*}
		u_d^{(j)} \otimes I_m^{\mathcal{M}_+}, \quad \quad
		\begin{bmatrix}
			u^{(j_1)}_{d_1} \\
			0
		\end{bmatrix}
		\otimes I_m^{\mathcal{M}_-}, \quad \quad
		\begin{bmatrix}
			0 \\
			u^{(j_2)}_{d_2}
		\end{bmatrix}
		\otimes I_m^{\mathcal{M}_-},
	\end{equation*}
	where $j_1$ and $j_2$ are as given by Lemma~\ref{lemma:eig-repeats}. Using the same logic as shown in (\ref{eq:first-eigprojection}), we have that the eigenprojection can be written as
	\begin{equation}
		P_j =
		u_d^{(j)} (u_d^{(j)})^T \otimes D_+ + \begin{bmatrix}
		u_{d_1}^{(j_1)}(u_{d_1}^{(j_1)})^T & 0 \\
		0 & 0
		\end{bmatrix}
		\otimes D_- +
		\begin{bmatrix}
		0 & 0 \\
		0 & u_{d_2}^{(j_2)}(u_{d_2}^{(j_2)})^T
		\end{bmatrix}
		\otimes D_-.
	\end{equation}
	Applying analogous logic for $(\sigma_{d_1}^{(j)})^2$ for $j\in\mathcal{J}_1$ and $(\sigma_{d_2}^{(j)})^2$ for $j\in\mathcal{J}_2$ leads to the eigenprojections $P_{n_1,j}$ and $P_{n_2,j}$, respectively, as written in the claim.
\end{proof}

\subsection{The approximate left singular vectors of $S$}

We next use the eigenprojections given in Lemma~\ref{lemma:eigprojects} to derive the left singular vectors (i.e., eigenvectors of $T$) that the system converges to continuously as $\gamma \rightarrow 0$. This provides an approximate orthonormal basis for the column space and left null space of $S$ as $\gamma \rightarrow 0$.

\begin{proposition}\label{prop:left-singular-vectors}
   A complete set of left singular vectors of $S$ and corresponding singular values is given by the columns/diagonal elements of the following matrices:
   \begin{equation*}
      \begin{aligned}
         U_{1,j} &= u_d^{(j)} \otimes I_m^{\mathcal{M}_+} U_{\bar{r}_+,j} &
         \Sigma_{1,j}^2 &= \left(\sigma_d^{(j)}\right)^2 I_{m_+} +		\gamma^2 \left(\hat{\Sigma}_{\bar{r}_+,j}^2\right)_{m_+} \quad j \in \mathcal{J}\\
         U_2 &=
      		\begin{bmatrix}
      			U_{d_1}^{\mathcal{J}_1} \\
      			0
      		\end{bmatrix} \otimes I_m^\mathcal{M_-} U_{r_1,-} &
         \Sigma_2^2 &= (\Sigma_{d_1}^2)^{\mathcal{J}_1} \oplus \gamma^2 \left(\hat{\Sigma}_{r_1,-}^2\right)_{m_-} \\
         U_3 &=
      		\begin{bmatrix}
      			0 \\
      			U_{d_2}^{\hat{\mathcal{J}}_2}
      		\end{bmatrix}
      		\otimes I_m^{\mathcal{M}_-}U_{r_2,-} &
         \Sigma_3^2 &= (\Sigma_{d_2}^2)^{\hat{\mathcal{J}}_2} \oplus \gamma^2 \left(\hat{\Sigma}_{r_2,-}^2\right)_{m_-} \\
         U_4 &=
      		\begin{bmatrix}
      			\frac{1}{C_1} U_{d,s_1}^{\mathcal{J}^C} \otimes U_{\bar{r},m_1} \\
      			\frac{1}{C_2} U_{d,s_2}^{\mathcal{J}^C} \otimes U_{\bar{r},m_2}
      		\end{bmatrix} &
         \Sigma_4^2 &=
            (\Sigma_d^2)^{\mathcal{J}^C} \oplus
               \gamma^2\left(\hat{\Sigma}_{\bar{r}}^2\right)_{q_{\bar{r}}+\breve{q}_{\bar{r}}}.
		\end{aligned}
	\end{equation*}
\end{proposition}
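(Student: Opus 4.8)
The plan is to use the fact that the left singular vectors of $S$ are the eigenvectors of $SS^{T}=T(\gamma)=T+\gamma^{2}T^{(1)}$ and to identify, via degenerate perturbation theory, the orthonormal eigenbasis that this family converges to as $\gamma\to 0$. Because $T$ has the highly repeated eigenvalues catalogued in Lemma~\ref{lemma:eigprojects}, the naive eigenvectors of Lemma~\ref{lemma:unpertbEigval} are not the correct limits. Instead, for each unique eigenvalue $\lambda_{0}$ of $T$ with orthonormal eigenprojection $P$, I would choose an orthonormal basis $W$ of $\mathrm{range}(P)$, form the reduced matrix $W^{T}T^{(1)}W$, and diagonalize it. By the perturbation results of Appendix~\ref{sec-app:perturbation-theory} (in particular (\ref{eq:sumProj})), the eigenvectors of $W^{T}T^{(1)}W$, lifted by $W$, are the limiting left singular vectors, with singular values $\sigma^{2}=\lambda_{0}+\gamma^{2}\mu$, where $\mu$ ranges over the eigenvalues of the reduced matrix. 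The whole proposition therefore reduces to identifying $W^{T}T^{(1)}W$ for each of the four eigenprojection types of Lemma~\ref{lemma:eigprojects}.

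For the three ``diagonal'' projections the computation is short. For $j\in\mathcal{J}$ take $W=u_{d}^{(j)}\otimes I_{m}^{\mathcal{M}_{+}}$; the Kronecker identity $(a\otimes B)^{T}(I\otimes C)(a\otimes B)=\lVert a\rVert^{2}B^{T}CB$ together with $(I_{m}^{\mathcal{M}_{+}})^{T}S_{r_{i}}=S_{r_{i},+}$ gives $W^{T}T^{(1)}W=\lvert u_{d,s_1}^{(j)}\rvert^{2}S_{r_1,+}S_{r_1,+}^{T}+\lvert u_{d,s_2}^{(j)}\rvert^{2}S_{r_2,+}S_{r_2,+}^{T}$, which by (\ref{eq:S_rbar_+_j}) is exactly $S_{\bar{r}_+,j}S_{\bar{r}_+,j}^{T}$; its eigenpairs are $(U_{\bar{r}_+,j},\hat{\Sigma}_{\bar{r}_+,j}^{2})$, and lifting by $W$ produces $U_{1,j}$ and $\Sigma_{1,j}^{2}$. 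For $j\in\mathcal{J}_{1}$ (resp.\ $\mathcal{J}_{2}$) the projection is supported on a single subregion and on $D_{-}$, so with $W=\bigl[\begin{smallmatrix}u_{d_1}^{(j)}\\0\end{smallmatrix}\bigr]\otimes I_{m}^{\mathcal{M}_{-}}$ only one block of $T^{(1)}$ survives and $W^{T}T^{(1)}W=S_{r_1,-}S_{r_1,-}^{T}$; lifting gives $U_{2},\Sigma_{2}$ (resp.\ $U_{3},\Sigma_{3}$).

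The substance of the proof is the case $j\in\mathcal{J}^{C}$, where $\mathrm{range}(P)$ has dimension $m+m_{-}$ and is spanned by the isometric block $W=[\,W_{+}\ \ W_{1}\ \ W_{2}\,]$ with $W_{+}=u_{d}^{(j)}\otimes I_{m}^{\mathcal{M}_{+}}$, $W_{1}=\bigl[\begin{smallmatrix}u_{d_1}^{(j_1)}\\0\end{smallmatrix}\bigr]\otimes I_{m}^{\mathcal{M}_{-}}$, and $W_{2}=\bigl[\begin{smallmatrix}0\\u_{d_2}^{(j_2)}\end{smallmatrix}\bigr]\otimes I_{m}^{\mathcal{M}_{-}}$. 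Using Lemma~\ref{lemma:eig-repeats} to substitute $u_{d,s_1}^{(j)}=C_{1}u_{d_1}^{(j_1)}$ and $u_{d,s_2}^{(j)}=(-1)^{n_1-j_1}C_{2}u_{d_2}^{(j_2)}$, a block computation shows $W^{T}T^{(1)}W$ equals the merger matrix of (\ref{eq:B})--(\ref{eq:Bi}) up to a factor $(-1)^{n_1-j_1}$ on the cross blocks. Conjugating by $D_{\mathrm{sign}}=\mathrm{diag}(I_{m_+},I_{m_-},(-1)^{n_1-j_1}I_{m_-})$ removes this sign and yields a $j$-independent reduced operator $\tilde{M}$. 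The key step is that $\tilde{M}$ is an isometric compression of $B=S_{\bar{r}}S_{\bar{r}}^{T}$: setting
\[
   \Phi=\begin{bmatrix} C_{1}I_{m}^{\mathcal{M}_{+}} & I_{m}^{\mathcal{M}_{-}} & 0 \\ C_{2}I_{m}^{\mathcal{M}_{+}} & 0 & I_{m}^{\mathcal{M}_{-}} \end{bmatrix},
\]
the relation $C_{1}^{2}+C_{2}^{2}=1$ (from (\ref{eq:C12}) and $n=n_1+n_2$) gives $\Phi^{T}\Phi=I$, and a direct calculation gives
\[
   \Phi^{T}S_{\bar{r}}=\begin{bmatrix} C_{1}S_{r_1,+} & C_{2}S_{r_2,+} \\ S_{r_1,-} & 0 \\ 0 & S_{r_2,-} \end{bmatrix},
\]
so that $\Phi^{T}B\Phi=\tilde{M}$. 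Moreover $\mathrm{range}(\Phi)$ is exactly the orthogonal complement of the excluded space $\breve{U}_{\bar{r},ex}$ of (\ref{eq:breveU_ex}): one checks $\breve{U}_{\bar{r},ex}^{T}S_{\bar{r}}=0$, hence $B\breve{U}_{\bar{r},ex}=0$ and $\Phi\Phi^{T}B=B$. Consequently the eigenvectors of $\tilde{M}$ correspond, through $\Phi$, to the eigenvectors of $B$ orthogonal to $\breve{U}_{\bar{r},ex}$, i.e.\ to the columns of $U_{\bar{r}}$ with eigenvalues $\Sigma_{\bar{r}}^{2}$. Lifting back by $WD_{\mathrm{sign}}\Phi^{T}$ and using the orthogonality relation $C_{2}D_{+}u_{\bar{r},m_1}=C_{1}D_{+}u_{\bar{r},m_2}$ — which is precisely $\breve{U}_{\bar{r},ex}^{T}u_{\bar{r}}=0$ — collapses the tensor structure to the single block $\bigl[\begin{smallmatrix}\frac{1}{C_{1}}u_{d,s_1}^{(j)}\otimes u_{\bar{r},m_1}\\[2pt]\frac{1}{C_{2}}u_{d,s_2}^{(j)}\otimes u_{\bar{r},m_2}\end{smallmatrix}\bigr]$. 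Ranging over $j\in\mathcal{J}^{C}$ and over the columns of $U_{\bar{r}}$ assembles $U_{4}$ and $\Sigma_{4}^{2}$, and this is exactly where the exclusion of $\breve{U}_{\bar{r},ex}$ ``becomes clear.''

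I expect this $\mathcal{J}^{C}$ step to be the main obstacle, both in discovering the compression $\Phi$ and in recognizing that orthogonality to $\breve{U}_{\bar{r},ex}$ is the mechanism that decouples the two subregions in the lifted vectors. To finish, the plan is to verify completeness by a dimension count: the eigenspaces carry multiplicities $m_{+},\,m+m_{-},\,m_{-},\,m_{-}$ over $\mathcal{J},\mathcal{J}^{C},\mathcal{J}_{1},\mathcal{J}_{2}$, and Lemma~\ref{lemma:eig-repeats} forces $|\mathcal{J}_{1}|=n_1-|\mathcal{J}^{C}|$ and $|\mathcal{J}_{2}|=n_2-|\mathcal{J}^{C}|$, so the listed vectors total $nm$, the number of rows of $S$. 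Orthonormality across families is automatic since distinct eigenprojections are orthogonal, and within each family it follows because $W$ and $WD_{\mathrm{sign}}\Phi^{T}$ send orthonormal inputs to orthonormal outputs. The residual bookkeeping is the careful handling of the zero singular values of $S_{d_1}$ and $S_{d_2}$, which fixes the precise index sets (e.g.\ $\mathcal{J}_{1}$ versus $\hat{\mathcal{J}}_{2}$) and determines how null modes of $T$ acquire $O(\gamma)$ singular values under the perturbation.
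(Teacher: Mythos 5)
Your proposal is correct, and its skeleton coincides with the paper's proof: both pass to the eigenvalue problem for $SS^{T}=T+\gamma^{2}T^{(1)}$, take the eigenprojections of Lemma~\ref{lemma:eigprojects}, diagonalize the reduced perturbation on each eigenspace, and read off the singular values from the linear approximation (\ref{eq:lambdakx}); for the cases $\mathcal{J}$, $\mathcal{J}_{1}$, $\mathcal{J}_{2}$ your compression $W^{T}T^{(1)}W$ is exactly the paper's $\tilde{T}^{(1)}_{j}=P_{j}T^{(1)}P_{j}$ written in coordinates, and the two arguments are interchangeable. Where you genuinely diverge is the case $j\in\mathcal{J}^{C}$. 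The paper expands $P_{j}T^{(1)}P_{j}$ into the Kronecker blocks built from $B_{1},\dots,B_{5}$ of (\ref{eq:Bi}), posits the ansatz $v=[u_{1}\otimes v_{1};\,\alpha u_{2}\otimes v_{2}]$, reduces via Property~\ref{app-property:Kron3} to $B[v_{1};v_{2}]=\lambda[v_{1};v_{2}]$ with $B=S_{\bar{r}}S_{\bar{r}}^{T}$, and only afterwards checks that the excluded columns $\breve{U}_{\bar{r},ex}$ of (\ref{eq:breveU_ex}) satisfy $P_{j}v=0$. You instead conjugate the compressed operator into $\Phi^{T}B\Phi$ via an explicit isometry $\Phi$ (with $\Phi^{T}\Phi=I$ following from $C_{1}^{2}+C_{2}^{2}=1$); I verified your identities $\Phi^{T}S_{\bar{r}}=[\,C_{1}S_{r_1,+}\ C_{2}S_{r_2,+};\ S_{r_1,-}\ 0;\ 0\ S_{r_2,-}\,]$, $\breve{U}_{\bar{r},ex}^{T}S_{\bar{r}}=0$, $\Phi^{T}\breve{U}_{\bar{r},ex}=0$, and the collapse of the lifted vector to the $U_{4}$ block form using $C_{2}D_{+}u_{\bar{r},m_{1}}=C_{1}D_{+}u_{\bar{r},m_{2}}$, and they all hold. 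Your route buys two things the paper leaves implicit: completeness within $\mathrm{range}(P_{j})$ is automatic, since the compression is a genuine similarity on an $(m+m_{-})$-dimensional space and no argument is needed that the ansatz family spans; and the exclusion of $\breve{U}_{\bar{r},ex}$---which the paper introduces in Section~\ref{subsec:notation-SVD} with only the remark that the reason ``will become clear''---becomes structural, via $\mathrm{range}(\Phi)=\mathrm{span}(\breve{U}_{\bar{r},ex})^{\perp}\supseteq\mathrm{range}(B)$, rather than an after-the-fact cancellation. The paper's ansatz, conversely, stays entirely within its stated Kronecker toolbox (Properties \ref{app-property:Kron1}--\ref{app-property:Kron3}) and avoids inventing $\Phi$. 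Two small loose ends in your write-up, on which the paper's own proof is equally brisk: the index bookkeeping you defer is what explains the asymmetry between $U_{2}$ (carrying $\mathcal{J}_{1}$) and $U_{3}$ (carrying $\hat{\mathcal{J}}_{2}$)---a zero singular value of $S_{d_{2}}$ arises only under Zero Flux, where it also appears in $\Sigma_{d}$, so $\mathcal{J}_{2}=\hat{\mathcal{J}}_{2}$ always, whereas under Mixed conditions $S_{d_{1}}$ has a zero singular value absent from the full-rank $\Sigma_{d}$, so $0\in\mathcal{J}_{1}\setminus\hat{\mathcal{J}}_{1}$; and your multiplicity count $|\mathcal{J}_{1}|=n_{1}-|\mathcal{J}^{C}|$, $|\mathcal{J}_{2}|=n_{2}-|\mathcal{J}^{C}|$ tacitly uses that the singular values within each of $\Sigma_{d}$, $\Sigma_{d_{1}}$, $\Sigma_{d_{2}}$ are pairwise distinct, which does hold but only by the explicit sine formulas of Supplemental Material~\ref{sec-app:SVD-Sd}.
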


Unlike Theorem \ref{thm:svd-S}, we are not identifying which singular vectors correspond to nonzero verse zero singular values. This is because the definitions provided in Proposition~\ref{prop:left-singular-vectors} allow for a direct comparison with the eigenprojections given by Lemma~\ref{lemma:eigprojects}. However, Proposition~\ref{prop:left-singular-vectors} immediately gives the left singular vectors in Theorem \ref{thm:svd-S}. To see this note that
\begin{align*}
   \hat{U}_{1,j} &= U_{1,j} \\
   \begin{bmatrix}\hat{U}_2 & \hat{U}_3 & \breve{U}_1 \end{bmatrix} &=  U_2 P_1 \\
   \hat{U}_4 &= U_3 \\
   \begin{bmatrix}\hat{U}_5 & \hat{U}_6 & \breve{U}_2\end{bmatrix} &= U_4 P_2
\end{align*}
where the matrices on the left of the equality represent those defined in Theorem~\ref{thm:svd-S} and the matrices on the right represent those defined in Proposition \ref{prop:left-singular-vectors}. The permutation matrices $P_1$ and $P_2$ are required to ensure the columns are in the correct order for comparison. Note that the singular values are related analogously.

\begin{proof}[Proof of Proposition \ref{prop:left-singular-vectors}]
	We will prove this result by considering the eigenprojections of $T$ defined in Lemma~\ref{lemma:eigprojects}. For each eigenprojection we calculate $\tilde{T}^{(1)}$, given be (\ref{eq:Ttilde}), and its eigendecomposition. We will write
	\begin{equation}\label{eq:T1star}
	\tilde{T}^{(1)}_{\bigcdot} = P_{\bigcdot} T^{(1)} P_{\bigcdot}
	\end{equation}
	where $\bigcdot$ depends on the eigenvalue/eigenprojection we are considering and $T^{(1)}$ is given by (\ref{eq:T-reac}). In the limit as $\gamma \rightarrow 0$, the eigenvectors of (\ref{eq:T1star}) in the range of $P_{\bigcdot}$ are equivalent to the left singular vectors. Additionally, the eigenvalues of (\ref{eq:T1star}) are used to find linear approximations of the singular values as shown by (\ref{eq:lambdakx}).

	First for $j \in \mathcal{J}$, consider the eigenprojection given by (\ref{eq:P1}). We have that
	\begin{equation}\label{eq:Tj1}
	\tilde{T}_j^{(1)} = P_j T^{(1)} P_j.
	\end{equation}
	To find the eigendecomposition of (\ref{eq:Tj1}), we apply Property~\ref{app-property:Kron1} given in Supplemental Material \ref{sec-app:Kronecker-formulas} to show that
	\begin{align*}\label{eq:Qj-J}
	\tilde{T}_j^{(1)} &= u_d^{(j)} (u_d^{(j)})^T \otimes D_+ \left(|u_{d,\mathcal{N}_1}^{(j)}|^2 S_{r_1} S_{r_1}^T + |u_{d,\mathcal{N}_2}^{(j)}|^2 S_{r_2} S_{r_2}^T\right) D_+ \\
	&= u_d^{(j)} (u_d^{(j)})^T \otimes I_m^{\mathcal{M}_+} S_{\bar{r}_+,j} S_{\bar{r}_+,j}^T I_{m,\mathcal{M}_+}
	\end{align*}
	where recall $S_{\bar{r}_+,j}$ is given by (\ref{eq:S_rbar_+_j}). The eigenvectors of $\tilde{T}_j^{(1)}$ in the range of $P_j$ are the columns of
	\begin{equation}
		\hat{U}_{1,j} = u_d^{(j)} \otimes I_m^{\mathcal{M}_+} U_{\bar{r}_+,j}.
	\end{equation}
	and the corresponding eigenvalues are contained in $(\Sigma^2_{\bar{r}_+,j})_{m_+}$. Using (\ref{eq:lambdakx}) this leads to the following linear approximation of the singular values
	\begin{equation}
		\hat{\Sigma}_{1,j} = \sqrt{\left(\sigma_d^{(j)}\right)^2 I_{m_+} + \gamma^2 \left(\Sigma^2_{\bar{r}_+,j}\right)_{m_+}}.
	\end{equation}

	Similarly, consider the eigenprojections given by (\ref{eq:P3}) and (\ref{eq:P4}). Using \ref{eq:T1star} and Property~\ref{app-property:Kron2} given in Appendix \ref{sec-app:Kronecker-formulas}, we have that
	\begin{equation}
		\tilde{T}^{(1)}_{n_1,j} =
			\begin{bmatrix}
			u_{d_1}^{(j)}(u_{d_1}^{(j)})^T & 0 \\
			0 & 0
			\end{bmatrix} \otimes
			D_- \left(S_{r_1} S_{r_1}^T\right) D_-
	\end{equation}
	and
	\begin{equation}
		\tilde{T}^{(1)}_{n_2,j} =
			\begin{bmatrix}
				0 & 0 \\
				0 & u_{d_2}^{(j)}(u_{d_2}^{(j)})^T
			\end{bmatrix} \otimes
			D_- \left(S_{r_2} S_{r_2}^T\right) D_-.
	\end{equation}
	The eigenvectors of $\tilde{T}^{(1)}_{n_1,j}$ and $\tilde{T}^{(1)}_{n_2,j}$ that are in the range of $P_{n_1,j}$ and $P_{n_2,j}$ are
	\begin{equation}
	U_{n_1,j} =
		\begin{bmatrix}
		u_{d_1}^{(j)} \\
		0
		\end{bmatrix}
		\otimes I_m^{\mathcal{M}_-} U_{r_1,-} \quad \text{and} \quad
		U_{n_2,j} =
		\begin{bmatrix}
			0 \\
			u_{d_2}^{(j)}
		\end{bmatrix}
		\otimes I_m^{\mathcal{M}_-}U_{r_2,-}
	\end{equation}
	and the corresponding eigenvalues are $\Sigma^2_{r_1,-}$ and $\Sigma^2_{r_2,-}$, respectively. Recall the definition of $S_{r_i,-}$ is given by (\ref{eq:Sr-parts}). Again, using (\ref{eq:lambdakx}) this leads to the singular values given by $\Sigma_2$ and $\Sigma_3$.

	Finally, suppose $j\in \mathcal{J}^C$ and consider the eigenprojection $P_j$ given by (\ref{eq:P2}). Let $j_1$ and $j_2$ be as given by Lemma~\ref{lemma:eig-repeats}. For notational simplicity we will make the following substitutions
	\begin{align*}
		u &= u_d^{(j)} \\
		u_1 &= u_{d_1}^{(j_1)} \\
		u_2 &= u_{d_2}^{(j_2)} \\
		\alpha &= (-1)^{n_1-j_1}
	\end{align*}
	We will also use the matrices $B_i$ for $i=1,...,5$ given by (\ref{eq:Bi}). Using Property~\ref{app-property:Kron1} and \ref{app-property:Kron2} given in Appendix \ref{sec-app:Kronecker-formulas}, the eigenvector relationship given by (\ref{eq:Ud_divided}) in Lemma~\ref{lemma:eig-repeats}, and (\ref{eq:T1star}) we have that
	\begin{align*}
		\tilde{T}^{(1)}_j &= u u^T \otimes B_1
			+ \begin{bmatrix}
				u_1 u_1^T & 0 \\
				0 & 0
			\end{bmatrix} \otimes B_2
			+ \begin{bmatrix}
				0 & 0 \\
				0 & u_2 u_2^T
			\end{bmatrix} \otimes B_3 \\
			&+ \begin{bmatrix}
				C_1^2 u_1 u_1^T  &
				0 \\
				\alpha C_1 C_2 u_2 u_1^T &
				0
			\end{bmatrix} \otimes B_4
			+ \begin{bmatrix}
				0 &
				\alpha C_1 C_2 u_1 u_2^T  \\
				0 &
				C_2^2 u_2 u_2^T
			\end{bmatrix} \otimes B_5 \\
			&+ \begin{bmatrix}
				C_1^2 u_1 u_1^T &
				\alpha C_1 C_2 u_1 u_2^T \\
				0 &
				0
			\end{bmatrix} \otimes B_4^T
			+ \begin{bmatrix}
				0 &
				0 \\
				\alpha C_1 C_2 u_2 u_1^T &
				C_2^2 u_2 u_2^T
			\end{bmatrix} \otimes B_5^T.
	\end{align*}

	To obtain the eigendecomposition of $\tilde{T}_j^{(1)}$ for $j \in \mathcal{J}^C$, we will suppose that the eigenvectors take the form
	\begin{equation}
		v =
			\begin{bmatrix}
				u_1 \otimes v_1 \\
				\alpha u_2 \otimes v_2
			\end{bmatrix}.
	\end{equation}
	and derive the values of $v_1, v_2 \in \mathbb{R}^{m\times 1}$.

	We have that
		\begin{multline}
			\tilde{T}^{(1)}_j v =
				\begin{bmatrix}
					u_1 \otimes (C_1^2 B_1 v_1 + C_1 C_2 B_1 v_2)\\
					\alpha u_2 \otimes (C_1 C_2 B_1 v_1 + C_2^2 B_1 v_2)
				\end{bmatrix}
				+
				\begin{bmatrix}
					u_1 \otimes B_2 v_1 \\
					0
				\end{bmatrix}
				+
				\begin{bmatrix}
					0 \\
					\alpha u_{2} \otimes B_3 v_2
				\end{bmatrix} \\
				+
				\begin{bmatrix}
					u_{1} \otimes C_1^2 B_4 v_1 \\
					\alpha u_{2} \otimes C_1 C_2 B_4 v_1
				\end{bmatrix}
				+
				\begin{bmatrix}
					u_{1} \otimes C_1 C_2 B_5 v_2 \\
					\alpha u_{2} \otimes C_2^2 B_5 v_2
				\end{bmatrix} \\
				+
				\begin{bmatrix}
					u_{1} \otimes B_4^T (C_1^2 v_1 +  C_2^2 v_2) \\
					0
				\end{bmatrix}
				+
				\begin{bmatrix}
					0 \\
					\alpha u_{2} \otimes B_5^T (C_1^2 v_1 + C_2^2 v_2)
				\end{bmatrix}
				= \lambda
				\begin{bmatrix}
					u_{1} \otimes v_1 \\
					\alpha u_{2} \otimes v_2
				\end{bmatrix}.
		\end{multline}
   where we are using Property~\ref{app-property:Kron3} to calculate $(uu^T \otimes B_1)v$. Therefore, for $v$ to be an eigenvector of $\tilde{T}_j^{(1)}$ the following smaller eigenvalue problem must hold
	\begin{equation*}
		B
		\begin{bmatrix}
			v_1 \\ v_2
		\end{bmatrix} = \lambda
		\begin{bmatrix}
			v_1 \\ v_2
		\end{bmatrix}.
	\end{equation*}
   where $B$ is given by (\ref{eq:B}). This implies that $[v_1;v_2]$ is equal to a left singular vector of $S_{\bar{r}}$, given by (\ref{eq:S_rbar}). Note that since $[v_1;v_2]$ is a unit vector, $v$ is also a unit vector and, thus, properly normalized.

   Only some of the singular vectors of $S_{\bar{r}}$ result in eigenvectors $v$ that are in the range of $P_j$. Specifically, note the singular vectors contained in the columns of $\breve{U}_{\bar{r},ex}$, see (\ref{eq:breveU_ex}), result in eigenvectors that are not in the range of $P_j$. To see this note that, when $[v_1,v_2] = \breve{U}_{\bar{r},ex}$,
   \begin{equation*}
   \begin{aligned}
      P_j v &= \left(u u^T \otimes D_+ + \begin{bmatrix}
         u_1 u_1^T & 0 \\
         0 & u_2 u_2^T
         \end{bmatrix}
         \otimes D_- \right)
         \begin{bmatrix}
   			u_1 \otimes C_2 I_m^{\mathcal{M}^+} \\
   			\alpha u_2 \otimes -C_1 I_m^{\mathcal{M}^+}
   		\end{bmatrix} \\
      &= \begin{bmatrix}
         C_1^2 u_1 \otimes C_2 I_m^{\mathcal{M}^+} - C_1 C_2 u_1 \otimes C_1 I_m^{\mathcal{M}^+} \\
         \alpha C_1 C_2 u_2 \otimes C_2 I_m^{\mathcal{M}^+} - \alpha C_2^2 u_2 \otimes C_1 I_m^{\mathcal{M}^+}
      \end{bmatrix} \\
      &= 0.
   \end{aligned}
   \end{equation*}

   Coupled with the left singular vector relationship given by (\ref{eq:Ud_divided}), this completes our derivation of the singular vectors contained in $U_4$. Using (\ref{eq:lambdakx}), we obtain the singular values given by $\Sigma_4$.
\end{proof}

\subsection{Right singular vectors}

Next we will approximate the right singular vectors of the system in the limit as $\gamma \rightarrow 0$. To derive the right singular vectors that represent a basis for the row space, we use the following equation and the results from Proposition \ref{prop:left-singular-vectors}. For $i=1,...,5$,
\begin{equation}\label{eq:V-from-U}
	\hat{V}_i = S^T \hat{U}_i \hat{\Sigma}_i^{-1}
\end{equation}
Note that the equations for $\hat{U}_i$ and $\hat{\Sigma}_i$ are given by Theorem~\ref{thm:svd-S}, however their derivation is found in the proof to Proposition \ref{prop:left-singular-vectors}. Using this equation, we obtain the set of right singular vectors given by Theorem \ref{thm:svd-S}.

To complete the proof of Theorem \ref{thm:svd-S}, it remains to show that $\breve{V}$ defines a orthonormal basis for the nullspace that the system approaches as $\gamma \rightarrow 0$. The complete proof of this is given in Supplemental Material~\ref{sec-app:proofs}. Note that, an alternative asymptotic nullspace can be found. The nullspace given by Theorem \ref{thm:svd-S} has the property that it is orthogonal for small values of $\gamma$ and $S \breve{V} \rightarrow 0$ as $\gamma \rightarrow 0$. It is possible to instead find a basis such that $S \breve{V} = 0$ for small values of $\gamma$ and the basis approaches orthogonal in the limit as $\gamma \rightarrow 0$. The following lemma provides the equations for this alternative basis.

\begin{proposition}\label{prop:breveV}
	The column vectors in the follow matrices span the nullspace of $S$ as given by (\ref{eq:S}) and this basis is orthogonal in the limit as $\gamma \rightarrow 0$:
   \begin{equation}
      \breve{V} = \begin{bmatrix}
         \breve{V}_1 & \breve{V}_2 &	\breve{V}_3 & \breve{V}_4 & \breve{V}_5
      \end{bmatrix}
   \end{equation}
   where
   \begin{equation}
   	\centering
      \begin{aligned}
      \breve{V}_1 &=
         \left[\begin{array}{cc}
            U_{d_1}^{n\setminus (\mathcal{J}^C\cup\hat{\mathcal{J}}_1)} \otimes \breve{V}_{r_1,-} \\
            0 \\
            \hdashline \noalign{\vskip 3pt}
            -\gamma w_1 \otimes S_{r_1} \breve{V}_{r_1,-}
         \end{array}\right](I_{\breve{q}_{d_1}\breve{q}_{r_1,-}} + \gamma^2 W_1)^{-\frac{1}{2}}  \\
      \breve{V}_2 &=
         \left[\begin{array}{c}
            \hat{U}_{d_1} \hat{\Sigma}_{d_1} \otimes V_{r_1} \\
            0 \\
            \hdashline \noalign{\vskip 3pt}
            -\gamma \hat{V}_{d_1} \otimes U_{r_1} \Sigma_{r_1} \\
            0
         \end{array}\right] \left(\hat{\Sigma}_{d_1}^2 \oplus \gamma^2 \left(\hat{\Sigma}_{r_1}^2\right)_{p_1}\right)^{-\frac{1}{2}} \\
      \breve{V}_3 &=
         \left[\begin{array}{c} \vspace{5pt}
            \frac{1}{C_1} \breve{U}_{d,s_1} \otimes \breve{V}_{\bar{r},s_1} \\
            \frac{1}{C_2} \breve{U}_{d,s_2} \otimes \breve{V}_{\bar{r},s_2} \\
            \hdashline \noalign{\vskip 3pt}
            -\gamma w_2 \otimes S_{r_1}  \breve{V}_{\bar{r},s_1}
         \end{array}\right](I_{\breve{q}_d\breve{q}_{\bar{r}}} + \gamma^2 W_2)^{-\frac{1}{2}} \\
      \breve{V}_4 &=
         \left[\begin{array}{c}
            0 \\
            \hat{U}_{d_2} \hat{\Sigma}_{d_2} \otimes V_{r_2} \vspace{3pt}\\
            \hdashline \noalign{\vskip 2pt}
            0 \\
            -\gamma \hat{V}_{d_2} \otimes U_{r_2} \Sigma_{r_2}
         \end{array}\right] \left(\hat{\Sigma}_{d_2}^2 \oplus \gamma^2 \left(\hat{\Sigma}_{r_2}^2\right)_{p_2}\right)^{-\frac{1}{2}} \\
     \breve{V}_5 &=
         \left[\begin{array}{cc}
            0 & 0\\
            \hdashline \noalign{\vskip 3pt}
             \breve{V}_{d} \otimes I_m^{\mathcal{M}_+} & I_{n+1}^{\mathcal{B}} \otimes I_m^{\mathcal{M}_-}
         \end{array}\right].
   	\end{aligned}
   \end{equation}
   where
   \begin{align*}
      w_1 &= \hat{V}_d \hat{\Sigma}_d^{-1} \hat{U}_d^T
      \begin{bmatrix}
         U_{d_1}^{n\setminus (\mathcal{J}^C\cup\hat{\mathcal{J}}_1)} \\
         0
      \end{bmatrix} &
      \left(W_1\right)_{ii} &= w_2^T w_2 \left(
      S_{r_1} \breve{V}_{r_1,-}^{(i)}
      \right)^T S_{r_1} \breve{V}_{r_1,-}^{(i)}.
   \end{align*}
	and
	\begin{align*}
		w_2 &= \frac{n}{n_2 + \sqrt{n_1 n_2}}\hat{V}_d \hat{\Sigma}_d^{-1} \hat{U}_d^T
		\begin{bmatrix}
			\frac{1}{C_1}\breve{U}_{d,s_1} \\
			-\frac{1}{C_2}\breve{U}_{d,s_2}
		\end{bmatrix} &
		\left(W_2\right)_{ii} &= w_1^T w_1 \left(
		S_{r_1}  \breve{V}_{\bar{r},s_1}^{(i)}
		\right)^T S_{r_1} \breve{V}_{\bar{r},s_1}^{(i)}.
	\end{align*}
\end{proposition}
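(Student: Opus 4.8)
The plan is to establish three facts: that every column of each block $\breve V_k$ lies exactly in the nullspace of $S$, that the five blocks are linearly independent with total column count equal to $\breve q$ of Lemma~\ref{lemma:rank}, and that $\breve V^T\breve V\to I$ as $\gamma\to 0$. In contrast to the basis in Theorem~\ref{thm:svd-S}, here I would insist on exact annihilation $S\breve V_k=0$ and accept that orthonormality emerges only in the limit. Throughout I would write $S=[\,\gamma R \mid G\,]$, where $R$ is the block-diagonal reaction operator and $G=S_d\otimes D_+ + (S_d-H)\otimes D_-$, split each candidate vector along the dashed line into a reactive part $v_{\text{reac}}$ and a diffusive part $v_{\text{diff}}$, and verify $\gamma R\,v_{\text{reac}} + G\,v_{\text{diff}}=0$ via the mixed-product rule $(A\otimes B)(C\otimes D)=AC\otimes BD$.

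For the easy blocks I would argue as follows. The block $\breve V_5$ is purely diffusive; applying $G$ and using $S_d\breve V_d=0$, $D_-I_m^{\mathcal M_+}=0$, $D_+I_m^{\mathcal M_-}=0$, together with the fact that the columns of $S_d-H$ indexed by $\mathcal B$ vanish (the boundary columns of $S_d$ are already zero, and the barrier column $n_1+1$ is exactly cancelled by $H$), gives $G\breve V_5=0$. For $\breve V_2$ and $\breve V_4$ the fluxes are confined to one subregion, and the decisive observations are that $\hat V_{d_i}$ has a vanishing barrier component (the last column of $S_{d_i}$ is zero, so that coordinate lies in $\operatorname{null}(S_{d_i})$ and is orthogonal to $\hat V_{d_i}$), which forces $H\,\tilde V=0$ and hence $S_d\tilde V=(S_d-H)\tilde V=\bigl[\,\hat U_{d_i}\hat\Sigma_{d_i};\,0\,\bigr]$, and that $S_{r_i}V_{r_i}=U_{r_i}\Sigma_{r_i}$. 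Combining these with $D_++D_-=I_m$ collapses the diffusive contribution to exactly $-\gamma R v_{\text{reac}}$.

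The substantive work is in $\breve V_1$ and $\breve V_3$. For $\breve V_1$ I would use $S_{r_1,-}\breve V_{r_1,-}=0$ to kill the $D_-$ piece of $G v_{\text{diff}}$, so the diffusive term reduces to $-\gamma (S_d w_1)\otimes D_+S_{r_1}\breve V_{r_1,-}$; writing $w_1=S_d^+ b_1$ with $b_1$ the stacked left-singular-vector block appearing in $\breve V_1$, and invoking Lemma~\ref{lemma:eig-repeats} to guarantee $b_1\in\operatorname{range}(S_d)$, yields $S_d w_1=b_1$, which cancels $\gamma R v_{\text{reac}}$. For $\breve V_3$ I would first extract from $S_{\bar r}\breve V_{\bar r}=0$, by projecting each block equation onto $D_+$ and $D_-$ and using $I_m^{\mathcal M_\pm}S_{r_i,\pm}=D_\pm S_{r_i}$, the relations $D_-S_{r_1}\breve V_{\bar r,s_1}=0$, $D_-S_{r_2}\breve V_{\bar r,s_2}=0$, and the balance $C_1 S_{r_1}\breve V_{\bar r,s_1}+C_2 S_{r_2}\breve V_{\bar r,s_2}=0$. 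These reduce the identity $S\breve V_3=0$ to the single requirement
\begin{equation*}
   S_d w_2 = \begin{bmatrix} \tfrac{1}{C_1}\breve U_{d,s_1} \\ -\tfrac{C_1}{C_2^2}\breve U_{d,s_2}\end{bmatrix}.
\end{equation*}
I expect this last identity to be the main obstacle: one must substitute the zero-flux forms $C_i=\sqrt{n_i/n}$ and $\breve U_{d,s_i}=\tfrac{1}{\sqrt n}\mathbf 1_{n_i}$, compute $S_d w_2=\kappa\,S_dS_d^+\bigl[\,\tfrac1{C_1}\breve U_{d,s_1};\,-\tfrac1{C_2}\breve U_{d,s_2}\,\bigr]$ as $\kappa$ times the projection of that vector onto $\operatorname{col}(S_d)=\mathbf 1_n^\perp$, and check that the constant $\kappa=n/(n_2+\sqrt{n_1 n_2})$ is precisely the value that rescales the projection back onto the prescribed target. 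The target is orthogonal to $\mathbf 1_n$, hence lies in $\operatorname{col}(S_d)$, which is exactly what makes the identity solvable.

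Finally I would count columns block by block and match the sum against $\breve q$ in Lemma~\ref{lemma:rank} for each boundary condition, recalling that $\breve V_3$ is nonempty only under zero flux and $\breve V_1$ only under mixed conditions. For limiting orthonormality, the normalization factors such as $(I+\gamma^2 W_1)^{-1/2}$ and $(\hat\Sigma_{d_1}^2\oplus\gamma^2(\hat\Sigma_{r_1}^2)_{p_1})^{-1/2}$ are chosen so that each block already has orthonormal columns for every $\gamma$; since these factors converge to invertible limits as $\gamma\to0$, it remains only to show the cross-block Gram entries vanish in the limit. This follows because the diffusive (lower) parts shared by two distinct blocks all carry an explicit factor of $\gamma$, while the reactive (upper) parts either live in different subregions or in mutually orthogonal singular and null subspaces of the reaction matrices, so every off-diagonal inner product is $O(\gamma)$.
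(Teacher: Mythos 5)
Your proposal is correct and follows essentially the same route as the paper's proof: exact annihilation $S\breve{V}=0$ verified blockwise via the two Kronecker-product conditions obtained by splitting $S$ into its reactive and diffusive column blocks, the relations $S_{r_1,-}\breve{V}_{\bar{r},s_1}=0$, $S_{r_2,-}\breve{V}_{\bar{r},s_2}=0$, $C_1 S_{r_1,+}\breve{V}_{\bar{r},s_1}=-C_2 S_{r_2,+}\breve{V}_{\bar{r},s_2}$ extracted from $S_{\bar{r}}\breve{V}_{\bar{r}}=0$, and the rank-one projection identity for $S_d w_2$ checked with the zero-flux constants $C_i=\sqrt{n_i/n}$ --- in fact you treat more blocks explicitly than the paper, which details only $\breve{V}_3$ (inheriting $\breve{V}_2$, $\breve{V}_4$, $\breve{V}_5$ from the Theorem~\ref{thm:svd-S} nullspace proof) and leaves $\breve{V}_1$, the column count, and limiting orthogonality as exercises. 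Two cosmetic slips worth fixing: for $\breve{V}_1$ the fact that $b_1\in\operatorname{range}(S_d)$ follows not from Lemma~\ref{lemma:eig-repeats} but simply from $S_d$ having full row rank under Mixed boundary conditions (the only case in which $\breve{V}_1$ is nonempty), and since $W_1,W_2$ are diagonal while the true Gram matrices of $\breve{V}_1,\breve{V}_3$ retain $O(\gamma^2)$ off-diagonal entries of the form $\left(S_{r_1}\breve{V}^{(i)}\right)^T S_{r_1}\breve{V}^{(j)}$, those blocks are orthonormal only in the limit $\gamma\rightarrow 0$ rather than for every $\gamma$ --- which is all the proposition asserts, so your argument stands.
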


Notice that only $\breve{V}_1$ and $\breve{V}_3$ have changed when compared to Theorem~\ref{thm:svd-S}.

\subsection{SVD for systems with spatially homogeneous reactions and diffusion}\label{sec:SVD-1}

In the previous section we presented the approximate SVD for a system with a spatial barrier. Here, we will consider a specific scenario where there is no barrier and the reactions are the same across the domain. In terms of the previous notation, this is equivalent to setting $m_+ = m$, $m_-=0$ and $S_{r_1}=S_{r_2}$. Under these conditions, we will show that the SVD reduces to a simplified form (Corollary~\ref{cor:ItoII} and \ref{cor:ItoIIb}) and becomes exact for all values of $\gamma$, i.e., prove Theorem~\ref{thm:svd-S-II}. Below we set $S_r = S_{r_1}$ and refer to the singular value decomposition of $S_r$ using the notation given in (\ref{eq:svd-Sstar}).

First note, that under these conditions the SVDs of the stoichiometry-like matrices are simplified. We have that the SVD of $S_{\bar{r}_+,j}$ is
   \begin{equation}\label{eq:simp-SVD-1}
      U_{\bar{r}_+,j} = U_r, \quad \quad V_{\bar{r}_+,j} = \begin{bmatrix} |u_{d,s_1}^{(j)}| V_r \\ |u_{d,s_2}^{(j)}| V_r \end{bmatrix}, \quad \quad \hat{\Sigma}_{\bar{r}_+,j} = \hat{\Sigma}_r
   \end{equation}
   and the SVD of $S_{\bar{r}}$ is
   \begin{equation}\label{eq:simp-SVD-2}
      U_{\bar{r}} = \begin{bmatrix}
			C_1 U_r \\
			C_2 U_r
		\end{bmatrix}, \quad \quad V_{\bar{r}} = \begin{bmatrix}
			C_1 V_r \\
			C_2 V_r
		\end{bmatrix}, \quad \quad
      \hat{\Sigma}_{\bar{r}} = \hat{\Sigma}_r.
   \end{equation}
This result is shown by considering the equations for $S_{\bar{r}}$ and $S_{\bar{r}_+,j}$ as given by (\ref{eq:S_rbar}) and (\ref{eq:S_rbar_+_j}), respectively. Using (\ref{eq:simp-SVD-1}) and (\ref{eq:simp-SVD-2}), we next show that the SVD given by Theorem~\ref{thm:svd-S} reduces to a simplified form.

\begin{corollary}\label{cor:ItoII}
	The left singular vectors given by Theorem~\ref{thm:svd-S} reduce to the columns of following matrix
	\begin{equation}
		U = U_d \otimes U_r
	\end{equation}
	and the singular values reduce to the diagonal of
	\begin{equation}
		\Sigma = \Sigma_d \otimes \Sigma_r.
	\end{equation}
\end{corollary}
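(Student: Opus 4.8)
The plan is to specialize the six left–singular–vector blocks of Theorem~\ref{thm:svd-S} to the homogeneous setting $m_+=m$, $m_-=0$, $S_{r_1}=S_{r_2}=S_r$, substitute the collapsed stoichiometry-like SVDs (\ref{eq:simp-SVD-1})--(\ref{eq:simp-SVD-2}), and reassemble the pieces with elementary Kronecker identities. The two identities I would lean on are the column rule $\begin{bmatrix} A & B \end{bmatrix}\otimes C = \begin{bmatrix} A\otimes C & B\otimes C\end{bmatrix}$ and the block rule $\begin{bmatrix} A \\ B\end{bmatrix}\otimes C = \begin{bmatrix} A\otimes C \\ B\otimes C\end{bmatrix}$, together with $I_m^{\mathcal{M}_+}=I_m$ and the fact that $I_m^{\mathcal{M}_-}$ is empty when $m_-=0$.

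First I would discard the blocks attached to non-diffusing species. Since $m_-=0$, the matrices $S_{r_1,-}$ and $S_{r_2,-}$ (hence $U_{r_1,-}$, $\hat U_{r_1,-}$, $\breve U_{r_1,-}$, $U_{r_2,-}$) and every factor $I_m^{\mathcal{M}_-}$ are empty, so $\hat U_2$, $\hat U_3$, $\hat U_4$, and $\breve U_1$ disappear along with their paired singular-value blocks. Only $\hat U_{1,j}$ (for $j\in\hat{\mathcal{J}}$), $\hat U_5$, $\hat U_6$, and the left–nullspace block $\breve U_2$ remain.

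Next I would simplify each survivor. By (\ref{eq:simp-SVD-1}) we have $U_{\bar r_+,j}=U_r$, giving $\hat U_{1,j}=u_d^{(j)}\otimes U_r$. By (\ref{eq:simp-SVD-2}) the halves of $U_{\bar r}$ are $U_{\bar r,m_1}=C_1U_r$ and $U_{\bar r,m_2}=C_2U_r$, so the prefactors $1/C_1$, $1/C_2$ in $\hat U_5$ cancel and the block rule yields $\hat U_5=U_d^{\hat{\mathcal{J}}^C}\otimes U_r$; the identical cancellation in $\hat U_6$ and $\breve U_2$ (using $\hat U_{\bar r,m_i}=C_i\hat U_r$ and $\breve U_{\bar r,m_i}=C_i\breve U_r$) gives $\hat U_6=\breve U_d\otimes\hat U_r$ and $\breve U_2=\breve U_d\otimes\breve U_r$. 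Collecting $\hat U_{1,j}$ over $j\in\hat{\mathcal{J}}$ forms $U_d^{\hat{\mathcal{J}}}\otimes U_r$; since $\hat{\mathcal{J}}\cup\hat{\mathcal{J}}^C=\{1,\dots,q_d\}$ indexes exactly the nonzero singular values of $S_d$, merging with $\hat U_5$ produces $\hat U_d\otimes U_r$. Concatenating this with $\hat U_6$ and $\breve U_2$ and invoking $U_r=\begin{bmatrix}\hat U_r & \breve U_r\end{bmatrix}$, $U_d=\begin{bmatrix}\hat U_d & \breve U_d\end{bmatrix}$ and the column rule assembles $\begin{bmatrix}\hat U_d\otimes U_r & \breve U_d\otimes U_r\end{bmatrix}=U_d\otimes U_r$; a column count $q_dm+(n-q_d)q_r+(n-q_d)(m-q_r)=nm$ confirms completeness. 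The singular-value blocks $\hat\Sigma_{1,j}$, $\hat\Sigma_5$, $\hat\Sigma_6$ collapse under the same substitutions and index-set unions, after which I would read off the resulting diagonal and identify it with the claimed $\Sigma$ through the same Kronecker pairing of vectors to values.

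I expect the main difficulty to be purely organizational rather than conceptual: tracking $\hat{\mathcal{J}}$ versus $\hat{\mathcal{J}}^C$ and the $\hat U_d$/$\breve U_d$ split so the merged columns reproduce $U_d\otimes U_r$ exactly (up to the harmless reordering noted after Proposition~\ref{prop:left-singular-vectors}), and confirming that the exclusion of $\breve U_{\bar r,ex}$ already built into (\ref{eq:simp-SVD-2}) leaves precisely the $m$ columns $\begin{bmatrix}C_1U_r \\ C_2U_r\end{bmatrix}$, so that the $C_1,C_2$ factors cancel without residue. A parallel care in the singular-value tally is needed to ensure the zero/nonzero pattern of $\sigma_d$ and $\sigma_r$ aligns with the $\hat U$–versus–$\breve U$ assignment.
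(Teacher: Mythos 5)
Your proposal is correct and takes essentially the same route as the paper's proof: the paper performs the identical steps---discarding the $m_-$-blocks, substituting (\ref{eq:simp-SVD-1})--(\ref{eq:simp-SVD-2}) so the $C_1,C_2$ factors cancel, and merging the index sets---only it starts from the four pre-sorted blocks $U_{1,j}$, $U_2$, $U_3$, $U_4$ of Proposition~\ref{prop:left-singular-vectors} (with $\mathcal{J}$, $\mathcal{J}^C$) rather than the six hatted blocks plus $\breve{U}_1,\breve{U}_2$ of Theorem~\ref{thm:svd-S}, and these two presentations were already shown in the paper to agree up to column permutation. Your additional bookkeeping (treating $\hat{U}_6$ and $\breve{U}_2$ separately and counting $q_dm+(n-q_d)q_r+(n-q_d)(m-q_r)=nm$) is precisely what that repackaging absorbs, so the arguments coincide in substance.
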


\begin{proof}
	To prove this corollary we will examine the SVD for a system with a barrier. Specifically, we consider the left singular vectors and values as written in Proposition~\ref{prop:left-singular-vectors}. We will show that these vectors and values reduce to the singular vectors and values given in the corollary statement

First, note that $U_2$ and $U_3$ are empty matrices since $\mathcal{M}_-$ is an empty set. For $U_{1,j}$, $j \in \mathcal{J}$, using that $I_m^{\mathcal{M}_+} = I_m$. and the SVD given by \ref{eq:simp-SVD-1} we have that,
	\begin{equation}
		\hat{U}_{1,j} = u_d^{(j)} \otimes U_r \quad \quad \Sigma_{1,j}^2 = \left( \sigma_d^{(j)} \right)^2 I_m + \gamma^2 \Sigma_r^2.
	\end{equation}
For $U_4$ using the SVD given by \ref{eq:simp-SVD-2} and that $q_{\bar{r}}=q_r$ and $\breve{q}_{\bar{r}}=m-q_r$, we have that
	\begin{equation}
		U_4 = U_d^{\mathcal{J}^C} \otimes U_r \quad \quad \Sigma_4 = (\Sigma_d^2)^{\mathcal{J}^C} \oplus \gamma^2 \left(\hat{\Sigma}_r^2\right)_m.
	\end{equation}
	Putting these results together we obtain the set of left singular vectors and singular values given by the corollary statement.
\end{proof}

\begin{corollary}\label{cor:ItoIIb}
   The right singular vectors of $S$ given by Theorem~\ref{thm:svd-S} reduce to the following for the simplified system.
   \begin{align}
      \hat{V} &=
   	\begin{bmatrix}
   		\gamma\left(\hat{U}_{d}\otimes \hat{V}_{r}\hat{\Sigma}_{r}\right)\tilde{\Sigma}^{-1} & 0 & \breve{U}_{d}\otimes \hat{V}_{r}\\
   		\left(\hat{V}_{d}\hat{\Sigma}_{d}\otimes \hat{U}_{r}\right)\tilde{\Sigma}^{-1} & \hat{V}_{d}\otimes \breve{U}_{r} & 0
   	\end{bmatrix} \\
   	\breve{V} &=
   	\begin{bmatrix}
   		\left(\hat{U}_{d}\hat{\Sigma}_{d}\otimes \hat{V}_{r}\right)\tilde{\Sigma}^{-1} & \hat{U}_{d}\otimes \breve{V}_{r} & 0\\
   		-\gamma\left(\hat{V}_{d}\otimes \hat{U}_{r}\hat{\Sigma}_{r}\right)\tilde{\Sigma}^{-1} & 0 & \breve{V}_{d}\otimes U_{r}
   	\end{bmatrix}
   \end{align}
\end{corollary}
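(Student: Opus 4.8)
The plan is to mirror the argument used for the left singular vectors in Corollary~\ref{cor:ItoII}: substitute the simplified stoichiometry-like SVDs (\ref{eq:simp-SVD-1}) and (\ref{eq:simp-SVD-2}) into the right singular vectors of Theorem~\ref{thm:svd-S} and collect the surviving terms. Since $\mathcal{M}_- = \emptyset$ (so $m_- = 0$, $D_- = 0$, and $I_m^{\mathcal{M}_+} = I_m$), every component attached to the non-crossing species collapses; in particular $\hat{V}_2$, $\hat{V}_3$, and $\hat{V}_4$ drop out, exactly as $\hat{U}_2$, $\hat{U}_3$, $\hat{U}_4$ did in Corollary~\ref{cor:ItoII}. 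Moreover, because $D_- = 0$ the subregion-only eigenprojections (\ref{eq:P3})–(\ref{eq:P4}) vanish identically, so the subregion diffusion quantities $S_{d_1}$, $S_{d_2}$ can survive only through the \emph{shared} diffusion modes, where Lemma~\ref{lemma:eig-repeats} (via (\ref{eq:Ud_divided})–(\ref{eq:Vd_divided})) reassembles $u_{d_1}^{(j_1)}, u_{d_2}^{(j_2)}$ into the full-domain $u_d^{(j)}, v_d^{(j)}$; throughout, the factors $C_1, C_2$ of (\ref{eq:simp-SVD-2}) cancel against the $1/C_1, 1/C_2$ prefactors in $\hat{V}_5, \hat{V}_6, \breve{V}_3, \breve{V}_5$.

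For the row-space vectors $\hat{V}$ I would bypass the term-by-term reduction and instead use (\ref{eq:V-from-U}) with the reduced data from Corollary~\ref{cor:ItoII}. Writing the simplified matrix as $S = [\,\gamma I_n\otimes S_r \mid S_d\otimes I_m\,]$ from (\ref{eq:S_II}) and taking $\hat{U} = U_d\otimes U_r$ restricted to the nonzero singular values, one computes
\begin{equation*}
   S^T\bigl(u_d^{(i)}\otimes u_r^{(k)}\bigr) =
   \begin{bmatrix}
      \gamma\,\sigma_r^{(k)}\,u_d^{(i)}\otimes v_r^{(k)} \\
      \sigma_d^{(i)}\,v_d^{(i)}\otimes u_r^{(k)}
   \end{bmatrix},
\end{equation*}
using $S_r^T u_r^{(k)} = \sigma_r^{(k)} v_r^{(k)}$ and $S_d^T u_d^{(i)} = \sigma_d^{(i)} v_d^{(i)}$. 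Dividing by the associated singular value $\sqrt{(\sigma_d^{(i)})^2 + \gamma^2(\sigma_r^{(k)})^2}$ and sorting columns by whether $\sigma_d^{(i)}$ and $\sigma_r^{(k)}$ vanish yields the three column blocks of $\hat{V}$: pairs with both nonzero give the $\tilde{\Sigma}^{-1}$ block, pairs with $\sigma_r^{(k)} = 0$ (so $u_r^{(k)}$ a column of $\breve{U}_r$) give $\hat{V}_d\otimes\breve{U}_r$, and pairs with $\sigma_d^{(i)} = 0$ (so $u_d^{(i)}$ a column of $\breve{U}_d$) give $\breve{U}_d\otimes\hat{V}_r$. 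This reproduces $\hat{V}$ as stated.

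For the nullspace I would reduce the explicit blocks $\breve{V}_1,\dots,\breve{V}_5$ of Theorem~\ref{thm:svd-S}. With $m_- = 0$ the block $\breve{V}_5$ becomes $[\,0;\ \breve{V}_d\otimes I_m\,]$, which spans the same subspace as the claimed $[\,0;\ \breve{V}_d\otimes U_r\,]$; since $U_r$ is orthogonal these two differ only by the in-block rotation $I\otimes U_r$ and hence are equal as orthonormal bases of that piece of the nullspace. The $\tilde{\Sigma}^{-1}$-normalized block is assembled from $\breve{V}_2$ and $\breve{V}_4$: after setting $S_{r_1} = S_{r_2} = S_r$ and splitting their columns by whether $\sigma_r$ vanishes, the $\sigma_r\neq 0$ columns recombine through Lemma~\ref{lemma:eig-repeats} into $[\,(\hat{U}_d\hat{\Sigma}_d\otimes\hat{V}_r)\tilde{\Sigma}^{-1};\ -\gamma(\hat{V}_d\otimes\hat{U}_r\hat{\Sigma}_r)\tilde{\Sigma}^{-1}\,]$, while the $\sigma_r = 0$ columns, together with the reduced $\breve{V}_3$, assemble the reactive-flux block $\hat{U}_d\otimes\breve{V}_r$. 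Collecting these gives $\breve{V}$ in the corollary.

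The main obstacle is the bookkeeping in the $\breve{V}$ reduction. The blocks $\breve{V}_1, \breve{V}_2, \breve{V}_3$ are nonempty only for specific boundary conditions, and the diffusion singular vectors they contain are indexed by the subregion sets $\mathcal{J}_1, \mathcal{J}_2$ and the full-domain sets $\mathcal{J}, \mathcal{J}^C$. The delicate step is verifying that the subregion-only ($\mathcal{J}_1, \mathcal{J}_2$) contributions cancel while the shared modes recombine—via the index identities $j = j_1 + j_2$ and the sign $(-1)^{n_1-j_1}$ of Lemma~\ref{lemma:eig-repeats}—into exactly the full-domain vectors $\hat{U}_d, \hat{V}_d, \breve{V}_d$ with the correct column multiplicities in each of the three boundary-condition cases. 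Once this index matching is settled, the remaining algebra is routine and parallels the reduction carried out for the left singular vectors.
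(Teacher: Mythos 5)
Your treatment of $\hat{V}$ is correct and takes a genuinely different route from the paper: the paper's proof substitutes the simplified SVDs (\ref{eq:simp-SVD-1})--(\ref{eq:simp-SVD-2}) into the blocks $\hat{V}_{1,j}$, $\hat{V}_5$, $\hat{V}_6$ of Theorem~\ref{thm:svd-S} term by term and recollects columns, whereas you recompute $S^T\hat{U}\hat{\Sigma}^{-1}$ on the simplified $S$ using the reduced left data of Corollary~\ref{cor:ItoII}. Since the $\hat{V}_i$ of Theorem~\ref{thm:svd-S} are themselves defined through (\ref{eq:V-from-U}), the two derivations are equivalent, and yours is arguably cleaner: it replaces the heaviest bookkeeping (the $\hat{V}_5$, $\hat{V}_6$ reductions with their $C_1$, $C_2$ cancellations) with a one-line computation of $S^T\bigl(u_d^{(i)}\otimes u_r^{(k)}\bigr)$ and a sort of the index pairs by vanishing of $\sigma_d^{(i)}$ and $\sigma_r^{(k)}$.

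The $\breve{V}$ half, however, contains a genuine gap, and it sits exactly where you defer to ``bookkeeping.'' First, your premise that with $m_-=0$ every component attached to non-crossing species collapses fails for $\breve{V}_1$: the left-singular factors $U_{r_i,-}$ are indeed empty (the matrices $S_{r_i,-}$ have no rows), but $\breve{V}_{r_1,-}$ is the \emph{right} nullspace of the $0\times p_1$ matrix $S_{r_1,-}$, which is all of $\mathbb{R}^{p_1}$. So $\breve{V}_1$ survives (it is nonempty for Mixed boundary conditions) and must supply part of the reactive-flux block in that case, while the block you invoke for that purpose, $\breve{V}_3$, is nonempty only for Zero Flux. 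Second, for Zero Flux your assembly cannot terminate at the printed formula: reducing $\breve{V}_3$ with (\ref{eq:simp-SVD-2}) gives $\bigl[\breve{U}_d\otimes\breve{V}_r;\,0\bigr]$ (the $1/C_1$, $1/C_2$ prefactors cancel), which is \emph{orthogonal} to the block $\hat{U}_d\otimes\breve{V}_r$ you claim it assembles into; moreover a column count against Lemma~\ref{lemma:rank} (in the simplified Zero Flux case $\breve{q}=np+2m-q_r$, versus the $(n-1)p+2m$ columns printed in the corollary) shows the reduced Theorem-\ref{thm:svd-S} nullspace carries $(n-q_d)(p-q_r)$ more columns than the corollary's $\breve{V}$. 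A careful execution of your plan would therefore produce $U_d\otimes\breve{V}_r$ (full $U_d$) in the middle block: the vectors $\bigl[\breve{u}_d\otimes\breve{v}_r;\,0\bigr]$ are exactly null for all $\gamma$ and orthogonal to every printed column, so the mismatch is real and is inherited from the statement itself --- the paper's own proof carries out only the $\hat{V}$ reduction and asserts that $\breve{V}$ ``follows analogously,'' so it never confronts this. Your closing paragraph correctly flags the index matching as the delicate step, but the specific assignments you commit to ($\breve{V}_3$ feeding $\hat{U}_d\otimes\breve{V}_r$, and $\breve{V}_1$ omitted from the assembly) are precisely the step that would fail.
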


\begin{proof}
   For the right singular vectors we will use the equations as given in Theorem~\ref{thm:svd-S}. We will show the proof for $\check{V}$ and note that the proof for $\breve{V}$ follows analogously. Note that $\hat{V}_2$, $\hat{V}_3$, and $\hat{V}_4$ are empty.

   For $\hat{V}_{1,j}$, using the SVD given by (\ref{eq:simp-SVD-1}), we have that
   \begin{align}
      \hat{V}_{1,j} &= \left[
         \begin{array}{c}
            \frac{\gamma}{|u_{d,s_1}^{(j)}|}  u_{d,s_1}^{(j)} \otimes  V_{\bar{r}_+,j,s_1} \Sigma_{\bar{r}_+,j}^T \\
            \frac{\gamma}{|u_{d,s_2}^{(j)}|} u_{d,s_2}^{(j)} \otimes  V_{\bar{r}_+,j,s_2} \Sigma_{+,j}^T
            \vspace{3pt}\\
            \hdashline \noalign{\vskip 3pt}
            v_d^{(j)} \sigma_d^{(j)} \otimes
            I_m^{\mathcal{M}_+} U_{\bar{r}_+,j}
         \end{array}\right] \hat{\Sigma}_{1,j}^{-1} \\
      &= \left[\begin{array}{c}
         \gamma u_{d}^{(j)} \otimes  V_{r} \Sigma_r^T
         \vspace{3pt}\\
         \hdashline \noalign{\vskip 3pt}
         v_d^{(j)} \sigma_d^{(j)} \otimes
         U_r
      \end{array}\right] \left(\left(\sigma_d^{(j)}\right)^2 I_{m} + \gamma^2 \left(\Sigma_r^2\right)_{m}\right)^{-1/2}.
   \end{align}

For $\hat{V}_5$ and $\hat{V}_6$, using the SVD given by (\ref{eq:simp-SVD-2}), we have that
\begin{align*}
   \hat{V}_5 &=
      \left[\begin{array}{cc} \vspace{5pt}
         \frac{\gamma}{C_1} U_{d,s_1}^{\hat{\mathcal{J}}^C} \otimes
         \begin{bmatrix}
            \hat{V}_{\bar{r},s_1} \hat{\Sigma}_{\bar{r}} & 0
         \end{bmatrix} \\
         \frac{\gamma}{C_2} U_{d,s_2}^{\hat{\mathcal{J}}^C} \otimes
         \begin{bmatrix}
            \hat{V}_{\bar{r},s_2} \hat{\Sigma}_{\bar{r}} & 0
         \end{bmatrix} \vspace{3pt} \\
         \hdashline \noalign{\vskip 3pt}
         \frac{1}{C_1}V_{d,s_1}^{\hat{\mathcal{J}}^C} \Sigma_d^{\hat{\mathcal{J}}^C} \otimes
         U_{\bar{r},m_1} \\
         \frac{1}{C_2}V_{d,s_2}^{\hat{\mathcal{J}}^C} \Sigma_d^{\hat{\mathcal{J}}^C} \otimes
         U_{\bar{r},m_2}
      \end{array}\right] \hat{\Sigma}_5^{-1} \\
   &=
      \left[\begin{array}{cc} \vspace{5pt}
         \gamma U_{d}^{\hat{\mathcal{J}}^C} \otimes
            V_r \Sigma_r \\
         \hdashline \noalign{\vskip 3pt}
         V_{d}^{\hat{\mathcal{J}}^C} \Sigma_d^{\hat{\mathcal{J}}^C} \otimes
         U_r
      \end{array}\right] \left((\hat{\Sigma}_d^2)^{\hat{\mathcal{J}}^C} \oplus
         \gamma^2\left(\hat{\Sigma}_r^2\right)_{m}\right)^{-1/2}. \\
   \hat{V}_6 &=
      \left[\begin{array}{cc} \vspace{5pt}
         \frac{\gamma}{C_1} \breve{U}_{d,s_1} \otimes \hat{V}_{\bar{r},s_1} \hat{\Sigma}_{\bar{r}} \\
         \frac{\gamma}{C_2} \breve{U}_{d,s_2} \otimes \hat{V}_{\bar{r},s_2} \hat{\Sigma}_{\bar{r}} \\
         \hdashline \noalign{\vskip 3pt}
         0
      \end{array}\right] \hat{\Sigma}_6^{-1} \\
   &=
      \left[\begin{array}{cc} \vspace{5pt}
         \gamma \breve{U}_{d} \otimes \hat{V}_{r} \hat{\Sigma}_r \\
         \hdashline \noalign{\vskip 3pt}
         0
      \end{array}\right] \left(
         I_{n-q_d} \otimes \gamma^2 \hat{\Sigma}_{r}^2\right)^{-1/2} \\
   &=
      \left[\begin{array}{cc} \vspace{5pt}
         \gamma \breve{U}_{d} \otimes \hat{V}_{r} \\
         \hdashline \noalign{\vskip 3pt}
         0
      \end{array}\right]. \\
\end{align*}
Putting this together and rearranging columns we obtain the equation for $\hat{V}$ given in the corollary statement.
\end{proof}

Finally, we will prove the main result that the SVD of the simplified $S$ is valid for all values of $\gamma$.

\begin{proof}[Proof of Theorem \ref{thm:svd-S-II}]

	To show that (\ref{eq:SVDFull}) is the SVD of $S$ as given by (\ref{eq:S_II}), it is suffices to show that, $U$ and $V$ are orthogonal matrices and $S =\hat{U}\hat{\Sigma}\hat{V}^T$.

	We first show that $U$ and $V$ are orthogonal matrices. Recall that $U \in \mathbb{R}^{nm \times nm}$ where $U = U_d \otimes U_r$. We have that
	\begin{equation}
      (U_d \otimes U_r)^T (U_d \otimes U_r) = U_d^T U_d \otimes U_r^T U_r = I_{nm}
	\end{equation}
	It follows that $U$ is orthogonal.

   For $V$, we have that
   \begin{align*}
      \hat{V}^T \hat{V}
      &=\left[\begin{smallmatrix}
   		\gamma\tilde{\Sigma}^{-1}\left(\hat{U}_{d}^T\otimes \hat{\Sigma}_{r} \hat{V}_{r}^T\right)
         & \tilde{\Sigma}^{-1}\left(\hat{\Sigma}_{d}\hat{V}_{d}^T\otimes \hat{U}_{r}^T\right)  \\
         0 & \hat{V}_{d}^T\otimes \breve{U}^T_{r} \\
         \breve{U}_{d}^T\otimes \hat{V}_{r}^T & 0
   	\end{smallmatrix}\right]
      \left[\begin{smallmatrix}
   		\gamma\left(\hat{U}_{d}\otimes \hat{V}_{r}\hat{\Sigma}_{r}\right)\tilde{\Sigma}^{-1} & 0 & \breve{U}_{d}\otimes \hat{V}_{r}\\
   		\left(\hat{V}_{d}\hat{\Sigma}_{d}\otimes \hat{U}_{r}\right)\tilde{\Sigma}^{-1} & \hat{V}_{d}\otimes \breve{U}_{r} & 0
   	\end{smallmatrix}\right]\\
      &=
         \begin{bmatrix}
            \gamma^2 \tilde{\Sigma}^{-1} \left(\hat{U}_d^T\hat{U}_{d}\otimes \hat{\Sigma}_r^2 + \hat{\Sigma}_{d}^2 \otimes \hat{U}_{r}^T \hat{U}_r\right)\tilde{\Sigma}^{-1} &
            0 & 0 \\
            0 & \hat{V}_{d}^T \hat{V}_d \otimes \breve{U}^T_{r} \breve{U}_r & 0 \\
            0 & 0 & \breve{U}_{d}^T \breve{U}_d \otimes \hat{V}_{r}^T \hat{V}_r
         \end{bmatrix} \\
      &= I
   \end{align*}
   where recall that $\tilde{\Sigma}^2 = \hat{\Sigma}_{d}^{2}\oplus\left(\gamma\hat{\Sigma}_{r}\right)^{2}$. It can similarly be shown that $\hat{V}^T\breve{V} = 0$, $\breve{V}\hat{V}^T = 0$, and $\breve{V}^T\breve{V}=I$.

   Next, we will show that $S=\hat{U}\hat{\Sigma}\hat{V}^T$.
   \begin{align*}
      \hat{U} \hat{\Sigma}\hat{V}^T &=
         \begin{bmatrix}\hat{U}_{d}\otimes \hat{U}_{r} & \hat{U}_{d}\otimes \breve{U}_{r} & \breve{U}_{d}\otimes \hat{U}_{r}\end{bmatrix}
            \begin{bmatrix}
         		\gamma\left(\hat{U}_{d}^T\otimes \hat{\Sigma}_{r}\hat{V}_{r}^T\right) &
               \left(\hat{\Sigma}_{d}\hat{V}_{d}^T\otimes \hat{U}_{r}^T\right) \\
               0 & \hat{\Sigma}_{d}\hat{V}_{d}^T\otimes \breve{U}_{r}^T \\
               \gamma\breve{U}_{d}^T\otimes \hat{\Sigma}_{r}\hat{V}_{r}^T & 0
         	\end{bmatrix} \\
      &= \begin{bmatrix}
         \gamma (\hat{U}_d \hat{U}_d^T + \breve{U}_d \breve{U}_d^T) \otimes S_r &
         S_d \otimes (\hat{U}_r \hat{U}_{r}^T + \breve{U}_r \breve{U}_{r}^T) \\
      \end{bmatrix} \\
      &= \begin{bmatrix}
        \gamma U_d U_d^T \otimes S_r &
        S_d \otimes U_r U_r^T \\
     \end{bmatrix} \\
      &= \begin{bmatrix}
         \gamma I_n \otimes S_r &
         S_d \otimes I_m \\
        \end{bmatrix}
   \end{align*}

	Therefore, Theorem~\ref{thm:svd-S} gives the SVD of $S$ at all values of $\gamma$.
\end{proof}

\subsection{Error analysis for example system}

In this section we present an error analysis for the approximate SVD of an example stoichiometry matrix. We demonstrate numerically that the approximate SVD presented in Theorem~\ref{thm:svd-S} converges to the true SVD in the limit as $\gamma \rightarrow 0$. We will consider a simplified set of equations that describes part of the Calvin Cycle in cyanobacteria. Specifically, cyanobacteria have cellular compartments called carboxysomes that serve to concentrate carbon within the cell \cite{Price2008}. It is thought that this compartmentalization increases the amount of carbon fixation and decreases the flux through the competing photorespiration reaction. This is an example of the type of system that could, in the future, be investigated more thoroughly with the approach presented here.

We consider a system with $n=8$ compartments, where $n_1=2$ and $n_2=6$. The first subregion in the domain represents the carboxysome and the second region represents the cytoplasm. We will consider the scenario of Mixed boundary conditions where fluxes are allowed only into the right side of the domain (i.e., into the cytoplasm region). Biologically, this scenario could represent a radially symmetric region in the cell centered on a carboxysome. The species in this system, as ordered in the stoichiometry matrix, are bicarbonate (\ce{HCO_3^-}), Ribulose 1,5-bisphosphate (\ce{RuBP}), carbon dioxide (\ce{CO_2}), 3-phosphogylcerate (\ce{3PGA}), Oxygen (\ce{O_2}), and 2-phosphoglycolate (\ce{2PG}). The reactions are given as
\begin{equation}\label{eq:ex-reactions}
    \begin{aligned}
        \text{(R1)} & & \ce{RuBP + CO_2 &-> 2 (3PGA)}  & \text{(Carbon Fixation})\\
        \text{(R2)} & & \ce{RuBP + O_2 &-> 3PGA + 2PG} & \text{(Photorespiration)}\\
        \text{(R3)} & & \ce{ CO_2 &<=> HCO_3^-} \\
    \end{aligned}
\end{equation}
It is known that \ce{O_2} and \ce{CO_2} cannot diffuse into the carboxysome \cite{Kinney2011,Mahinthichaichan2018}. Therefore we set $\mathcal{M}_- = \{3,5\}$ and $\mathcal{M}_+ = \{1,2,4,6\}$.

Given that \ce{O_2} is not present in the carboxysome, we know that only R1 and R3 occur in the first subregion. This leads to the following reaction-only stoichiometry matrices in the first and second region, respectively,
\begin{equation}\label{eq:Sr-ex}
    S_{r_1} =
        \begin{bmatrix}
            0 & -1 \\
            -1 & 0 \\
            -1 & 1 \\
            2 & 0 \\
            0 & 0 \\
            0 & 0
        \end{bmatrix}
   \quad \quad S_{r_2} =
      \begin{bmatrix}
           0 & 0 & -1 \\
           -1 & -1 & 0 \\
           -1 & 0 & 1 \\
           2 & 1 & 0 \\
           0 & -1 & 0 \\
           0 & 1 & 0
      \end{bmatrix}.
\end{equation}

Using the defined parameters, we applied the equations in Theorem~\ref{thm:svd-S} at multiple values of $\gamma$ and compared the results to the numerical SVD in MATLAB (Figure~\ref{fig}). As expected we find that the error approaches zero as $\gamma \rightarrow 0$. In this comparison the singular vectors/values are sorted by the magnitude of the singular value. Singular values between the numerical and approximate SVD (and hence singular vectors) are paired by finding those that are closest to each other in size. Note that in the error analysis in Figure~\ref{fig}, we only consider the nonzero singular values and corresponding singular vectors. Similar results are observed for the right and left null space (e.g., $S \check{V} \rightarrow 0$ as $\gamma \rightarrow 0$).

\begin{figure}
    \centering
    \includegraphics[scale=.8,trim=0 20 0 10, clip]{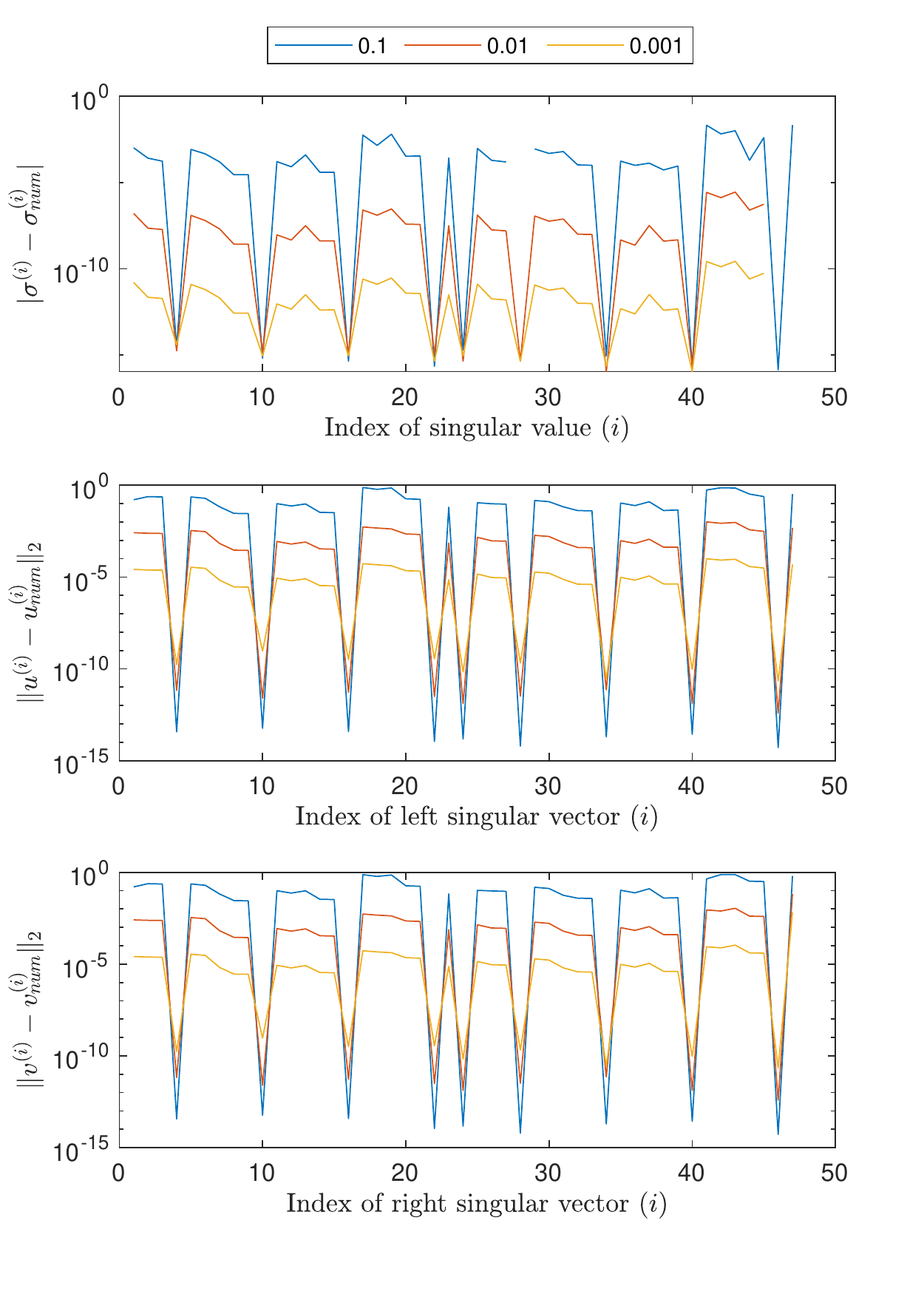}
    \caption{Error of approximate SVD at decreasing values of $\gamma$. The `$num$' subscript refers to singular vectors/values obtained numerically using MATLAB.}
    \label{fig}
\end{figure}

\section{Discussion}\label{sec:discussion}

In this paper we derived the approximate SVD of the stoichiometry matrix for a one dimensional discrete reaction-diffusion system partitioned into two subregions. Between these two subregions only certain species are allowed to diffuse. We additionally presented the exact SVD in the scenario where diffusion is allowed freely throughout the domain. This work provides a framework that can be applied and expanded upon to examine a variety of reaction-diffusion scenarios. For example, we hypothesize that in more complex scenarios (e.g., species-dependent boundary conditions) a Kronecker product formulation can still be used to write the SVD. Additionally, the formulas given by Theorem~\ref{thm:svd-S} and \ref{thm:svd-S-II} allow for future analysis looking at the effects of spatial properties in compartmentalized systems.

Computationally, the results of Theorem~\ref{thm:svd-S} and \ref{thm:svd-S-II} allow for the efficient estimation of the SVD of the RD stoichiometry matrix. Importantly, the approximate SVD is fully determined by the SVDs of smaller matrices. Either the SVD of these smaller matrices is known analytically or the dimension of the matrices is independent of the number of spatial compartments in the system. For example, consider a system with $m$ species, $p$ reactions, and either $n$ or $2n$ compartments. The diffusion-only stoichiometry matrices for this system are known analytically. The other matrices that the SVD depends on have dimensions proportional to $m$ and/or $p$. Notably the total number of required smaller matrix decompositions will increase linearly with the number of compartments.

\subsection{Intuition for SVD results}

The approximate SVD for a system with a barrier provides intuition for how the system's structure influences dynamical and steady state properties. As written in Theorem~\ref{thm:svd-S}, we have partitioned the singular vectors  into multiple sets, which we will refer to as \textit{eigenreaction sets}. For example, the singular vectors in $\hat{U}_{1,j}$ and $\hat{V}_{1,j}$ and the singular values in $\hat{\Sigma}_{1,j}$ for $j \in \mathcal{J}$ represent the first eigenreaction set.  Recall that the SVD defines eigenreactions, which represent decoupled linear combinations of species that are moved by a linear combinations of fluxes, e.g.,
\begin{equation}
   \frac{d}{dt} \left(u^{(i)}\right)^T w = \sigma_i \left(v^{(i)}\right)^T f.
\end{equation}
where $w$ is the species concentration vector and $f$ is the vector of fluxes, e.g., see (\ref{eq:MassBalance}).

Each eigenreaction set describes the movement of species with similar diffusive properties. The first eigenreaction set describes the movement of species that are able to diffuse across the barrier. The second, third and fourth eigenreaction sets all describe the movement of species that are unable to diffuse across the barrier. The second eigenreaction set includes both reactive and diffusive movement in the first subregion, the third eigenreaction set describes only reactive movement in the first subregion, and, finally, the fourth eigenreaction set describes reactive and diffusive movement in the second subregion. Recall that the third eigenreaction set is only nonempty for Mixed boundary conditions.

The fifth and sixth eigenreaction sets are unique in that they describe the movement of all species in the system. This movement is coupled due to the repeating singular values in the diffusion-only stoichiometry matrices (i.e., $S_d$, $S_{d_1}$, and $S_{d_2}$). This demonstrates that even in the regime where diffusion is much faster than reactions, there is still a coupling between species with different diffusive processes.

The basis for the nullspace of $S$ is also partition into multiple sets, given by Theorem~\ref{thm:svd-S} (i.e., $\breve{V}_i$ for $i=1,...,5$). We will refer to these as \textit{steady-state flux sets}, since they represent fluxes that can exist under steady-state conditions. The first and third steady-state flux sets include only reactive fluxes in the first subregion and throughout the domain, respectively. Note that the first steady-state flux set is only nonempty for Mixed boundary conditions, whereas the third steady-state flux set is only nonempty for Zero Flux boundary conditions. The second and fourth steady-state flux sets represent reactive and diffusive flux combinations in the first subregion and second subregion, respectively. Finally the fifth steady-state flux set represents fluxes that are in the nullspace of $S$ due to their infeasibility. That is, for a given dynamical system, these fluxes will never contribute since they define fluxes across barriers/boundaries that are not allowed.

\subsection{Conclusion and Future Work}

To find the SVD of the RD stoichiometry matrix, we first used linear perturbation theory to calculate the left singular vectors and values. We then used the resulting vectors and values to find the right singular vectors. An alternative approach would be to instead derive the right singular vectors directly using perturbation theory. Although this approach may provide additional insight into the system properties, it is slightly more complex as it involves additional terms in the expansions used in the perturbation analysis. Therefore, this analysis is the topic of future research.

The key assumption used to derive the approximate SVD is that diffusion is much faster than the reactions. Whether this is a valid assumption depends on the specific biological system under consideration. Indeed, the relative time-scales of diffusion and reactions in biological systems can very greatly and is a complex topic \cite{Soh2010}. A similar approach, as presented in this paper, could be applied to derive the approximate SVD in a system where reactions occur much faster than diffusion. That is, we would instead consider the perturbation problem in the limit as the diffusive term of (\ref{eq:SbSbT-eig}) goes to zero. Rigorously showing whether the approach applied here could work in this alternative case is a topic of future research.

Our motivation in deriving the approximate SVD in terms of reduced systems is to gain insight into how including spatial barriers and diffusion impact a biological system. The SVD for the reaction-only system has provided valuable insight in comparing genome-scale metabolic networks \cite{Famili2003} and finding connections between biochemical processes \cite{Palese2012}. By including spatial parameters, the work presented here provides tools that computational biologists can apply to understand how reactive processes are coupled across space.

\section*{Acknowledgments}

We would like to thank Professor Jeffrey Cameron at CU Boulder for insightful discussions about this work and potential biological applications.

\bibliographystyle{siamplain}
\bibliography{/home/dmbortz/Desktop/SVD_Stoic_Mat_18April2021/bibfile}

\appendix
\section{Background on linear perturbation theory}\label{sec-app:perturbation-theory}

Here, we present background information on concepts from linear perturbation theory that is used to derive the approximate singular value decomposition (SVD) of the stoichiometry matrix for the reaction-diffusion system with a barrier. We refer the reader to \cite{Kato1995} for a more thorough description of this material. Our discussion here focuses on symmetric martrices. This allows us to assume that the eigenvalues are semisimple and, therefore, the eigennilopotents (denoted with a $D$ in \cite{Kato1995}) vanish.

 Consider the following matrix
\begin{equation}
	T(x) = T + x T^{(1)}
\end{equation}
where $T \in \mathbb{R}^{n\times n}$ and $T^{(1)} \in \mathbb{R}^{n\times n}$ are symmetric matrices and $x\ge0$. We will refer to $T$ as the unperturbed matrix to $T^{(1)}$ as the perturbation matrix.

Our goal is to find an approximate eigendecomposition of $T(x)$ at small $x$. Consider the following eigenvalue problem
\begin{equation}
	T(x)u_i(x) = \lambda_i(x) u_i(x).
\end{equation}
Additionally, $\lambda_i(x)$ is a continuous function of $x$ (see Theorem 2.3 from \cite{Kato1995}, Chapter 2, Section 2.3), implying that as $x\rightarrow 0$ the eigenvalues of $T(x)$ are equal to the eigenvalues of $T$. However, the same statement does not hold for the eigenvectors. That is, suppose there exists $j \ne i$ such that $\lambda_j(0) = \lambda_i(0)$, but for arbitrarily small $x>0$ $\lambda_k(x) \ne \lambda_i(x)$ for all $k \ne i$. In this scenario, the eigenvector that corresponds to $\lambda_i(x)$ is unique, but the eigenvector that corresponds to $\lambda_i(0)$ is not. Our task is to find the `correct' set of eigenvectors such that $u_i(x)$ converges to $u_i(0)$ as $x \rightarrow 0$.

More generally, let $\lambda$ be an eigenvalue of $T$ with multiplicity $m$ and denote the $m$ eigenvalues such that $\lambda_k(0) = \lambda$ as the $\lambda$-group. Without loss of generality, suppose this is the first $m$ eigenvalues. Let $(u_k)_{i=1}^m$ represent a set of orthogonal eigenvectors that solve the eigenvalue problem $Tu_k(0)=\lambda u_k(0)$. Let
\begin{equation}\label{eq:proj}
	P = \sum_{k=1}^m u_k u_k^T
\end{equation} be the unique orthogonal eigenprojection associated with $\lambda$ (i.e., $P^2=P$ and $TP = \lambda P$). We will additionally consider the sum of projections at small $x$ for the entire $\lambda$-group
\begin{equation}\label{eq:sumProj}
	P(x) = \sum_{k=1}^m u_k(x) u_k(x)^T.
\end{equation}
Since, in practice, it is difficult to find $u_k(x)$, we can instead write $P(x)$ using a contour integral of the resolvent. That is, let the resolvent of $T(x)$ at the point $\eta$ be given as
\begin{equation}\label{eq:R}
	R(\zeta,x)=(T(x)-\eta I)^{-1}
\end{equation}
and let $\Gamma$ be a closed positively-oriented curve in the resolvent set that encloses $\lambda$ and no other eigenvalues of $T$. The projection
\begin{equation}\label{eq:projInt}
	P(x) = -\frac{1}{2\pi i} \int_{\Gamma} R(\zeta,x)d\zeta
\end{equation}
is equal to the sum of the eigenprojections for eigenvalues of $T(x)$ that lie inside $\Gamma$ (see \cite{Kato1995}, Chapter 2, Section 1.4).

To find the eigendecomposition of $T(x)$ at small $x$, we will instead consider the equivalent eigenvalue problem for
\begin{equation}\label{eq:Ttilde}
	\tilde{T}^{(1)}(x) = \frac{1}{x}(T(x)-\lambda I)P(x).
\end{equation}
as $x \rightarrow 0$. To see that these eigenvalue problems are equivalent, first note that since the eigenvalues $\lambda_k(x)$ are continuously differentiable in a neighborhood of $x=0$ (see Theorem 2.3 from \cite{Kato1995}, Chapter 2, Section 2.3), we can write the power series expansion of $\lambda_k(x)$ as
\begin{equation}\label{eq:lambdakx}
   \lambda_k(x) = \lambda + x\lambda_k^{(1)} + \mathcal{O}(x^2).
\end{equation}
Then, using (\ref{eq:lambdakx}) and $P(x)u_k(x)=u_k(x)$ for $k=1,...,m$ we obtain
\begin{align*}
	T(x)u_k(x) &= \lambda_k(x) u_k(x) \\
	T(x)P(x)u_k(x) &= (\lambda + x\lambda_k^{(1)} +\mathcal{O}(x^2))u_k(x) \\
	(T(x)-\lambda I)P(x)u_k(x) &= (x\lambda_k^{(1)} +\mathcal{O}(x^2))u_k(x) \\
	\frac{1}{x}(T(x)-\lambda I)P(x)u_k(x) &= (\lambda_k^{(1)} +\mathcal{O}(x))u_k(x).
\end{align*}
Therefore, the eigenvectors of $T(x)$ are equal to the eigenvectors of $\tilde{T}^{(1)}(x)$, and the associated eigenvalues of $T(x)$ can be written as
\begin{equation}
	\lambda_k(x) = \lambda + x\lambda_{1,k} +\mathcal{O}(x^2),
\end{equation}
where $\lambda_{1,k}=\lambda_k^{(1)}+\mathcal{O}(x)$ is the eigenvalue of $\tilde{T}^{(1)}(x)$ associated with eigenvector $u_k(x)$.

Next, we can use power series expansions to show that
\begin{equation}
	\tilde{T}^{(1)}(x) = \tilde{T}^{(1)} + \mathcal{O}(x).
\end{equation}
   First, note the resolvent can be written as
\begin{equation}\label{eq:resExp}
	R(\zeta,x) = R(\zeta) - x R(\zeta)T^{(1)}R(\zeta) + \mathcal{O}(x^2)
\end{equation}
where $R(\zeta) = R(\zeta,0)$ (see Chapter 2, Section 1.3 of \cite{Kato1995} for derivation). Using (\ref{eq:resExp}), we can write the sum of eigenprojections for the $\lambda$-group as
\begin{equation}
	P(x) = P + x P^{(1)} + \mathcal{O}(x^2) \quad \text{ where } \quad P^{(1)} = -\frac{1}{2\pi i} \int_{\Gamma} R(\zeta)T^{(1)}R(\zeta) d \zeta.
\end{equation}

Using (\ref{eq:R}), (\ref{eq:projInt}), (\ref{eq:resExp}), and $(T-\lambda I)P=0$ we have that
\begin{equation}
	\begin{aligned}
		(T(x)-\lambda I)P(x) &= -\frac{1}{2\pi i} (T(x)-\lambda I)\int_{\Gamma} R(\zeta,x)d\zeta \\
      &= -\frac{1}{2\pi i} \int_{\Gamma} (T(x)-\lambda I) (T(x)-\eta I)^{-1} d\zeta \\
      &= - \frac{1}{2\pi i} \int_{\Gamma} I + (\zeta-\lambda) R(\zeta,x)d\zeta \\
      &= - \frac{1}{2\pi i} \int_{\Gamma}  (\zeta-\lambda) (R(\zeta) - x R(\zeta)T^{(1)}R(\zeta))d\zeta + \mathcal{O}(x^2) \\
      &= (T-\lambda I)P + \frac{1}{2\pi i} \int_{\Gamma}  (\zeta-\lambda) (x R(\zeta)T^{(1)}R(\zeta))d\zeta + \mathcal{O}(x^2) \\
		&= x \tilde{T}^{(1)} + \mathcal{O}(x^2)
	\end{aligned}
\end{equation}
where
\begin{equation}
	\tilde{T}^{(1)} = \frac{1}{2\pi i} \int_{\Gamma} R(\zeta) T^{(1)} R(\zeta)(\zeta-\lambda) d\zeta.
\end{equation}
We can evaluate this integral by substituting $R(\zeta)$ by its Laurent expansion at $\zeta=\lambda$, i.e.,
\begin{equation}
	R(\zeta) = \sum_{n=-1}^{\infty} (\zeta-\lambda)^n S^{(n+1)}
\end{equation}
where
\begin{equation}
   S^{(0)}=-P, \quad S^{(n)} = S^n
\end{equation}
where $S=S(\lambda)$ is the value at $\zeta=\lambda$ of the reduced resolvent of $T$. Using the Cauchy residue theorem

\begin{equation}
   \begin{aligned}
   \tilde{T}^{(1)} &= \frac{1}{2\pi i} \int_{\Gamma} (\zeta-\lambda) \left(\sum_{n=-1}^{\infty} (\zeta-\lambda)^n S^{(n+1)}\right) T^{(1)} \left(\sum_{n=-1}^{\infty} (\zeta-\lambda)^n S^{(n+1)}\right)d\zeta \\
   &= \frac{1}{2\pi i} \int_{\Gamma} (\zeta-\lambda) \left((\zeta-\lambda)^{-1} S^{(0)}\right) T^{(1)} \left((\zeta-\lambda)^{-1} S^{(0)}\right)d\zeta \\
   &= \frac{1}{2\pi i} \int_{\Gamma} (\zeta-\lambda)^{-1} P T^{(1)}  P\ d\zeta \\
   &= P T^{(1)}  P \\
   \end{aligned}
\end{equation}
Notice that terms with $(\zeta - \lambda)^n$ where $n>0$ in the integral vanish since there is no singularity.

Putting these results together, if $\lambda$ is an eigenvalue of $T(0)$ that repeats $m$ times, then at small $x$ the associated eigenvalues and eigenvectors of $T(x)$ can be approximated, for $k=1,...,m$ as
\begin{align}
   u_k(x) &= \tilde{u}_k(x)\\
   \lambda_k(x) &= \lambda + x\tilde{\lambda}_k\
\end{align}
where $\tilde{u}_k(x)$ are the eigenvectors of $\tilde{T}^{(1)}$ that are in the range of $P$ and $\tilde{\lambda}_k$ are the corresponding eigenvalues of $\tilde{T}^{(1)}$.

\section{Singular value decomposition of $S_d$}
\label{sec-app:SVD-Sd}

In this section we provide the explicit SVD of the diffusion-only stoichiometry matrix, $S_d$, as given by (\ref{eq:Sd}). We will consider a system with $n$ compartments, but note that by replacing $n$ with $n_1$ or $n_2$ this notation can be used to define the SVD of the diffusion-only stoichiometry matrices for the two subregions, $S_{d_1}$ and $S_{d_2}$.

In the main manuscript we present three possible boundary conditions: Zero Flux, Mixed, and Open. We will additionally included formulas for what we call \textit{Mixed-Alt} boundary conditions, which can be thought of as the opposite of Mixed boundary conditions (i.e., where input/output flux is allowed at $x=0$ but not at $x=n$). We include this additional boundary condition because it is used to describe the first subregion in a system with a barrier and Dirichlet boundary conditions.

The SVD will depend on the following constants for $j=1,...,n$,
\begin{align}
a_{n,j} &= \frac{\pi(n-j)}{2n} \\
b_{n,j} &= \frac{\pi(n-j+\nicefrac{1}{2})}{2n+1} \\
c_{n,j} &= \frac{\pi (n-j+1)}{2(n+1)}.
\end{align}
Next we define the left singular vectors that correspond to the column space and left nullspace. For the left singular vectors in the column space, the $i$th element of the $j$th left singular vector is, for $i=1,...,n$ and $j=1,...,q_d$,
\begin{equation}
\left(u_{d}^{(j)}\right)_i=
\begin{cases}
\sqrt{\frac{2}{n}}\cos\left(2a_{n,j}(i-\nicefrac{1}{2})\right) & \text{Zero Flux}\\
\sqrt{\frac{2}{n+\nicefrac{1}{2}}}
\cos\left(2b_{n,j}(i-\nicefrac{1}{2})\right) &
\text{Mixed} \\
\sqrt{\frac{2}{n+\nicefrac{1}{2}}}
\sin\left(2b_{n,j}i\right) &
\text{Mixed-Alt} \\
\sqrt{\frac{2}{n+1}}\sin\left(2c_{n,j}i\right) &
\text{Open}.
\end{cases}
\end{equation}
The left nullspace is only nonempty for Zero flux boundary conditions and we have that
\begin{equation}
   \left(u_{d}^{(n)}\right)_i = \frac{1}{\sqrt{n}}.
\end{equation}

For the right singular vectors, the $i$th element of the $j$th right singular vector associated with nonzero singular values is, for $i=1,...,n+1$ and $j=1,...,q_d$,
\begin{equation}
\left(v_{d}^{(j)}\right)_i=
\begin{cases}
-\sqrt{\frac{2}{n}}\sin\left(2a_{n,j}(i-1)\right) & \text{Zero Flux} \\
-\sqrt{\frac{2}{n+\nicefrac{1}{2}}}\sin\left(2 b_{n,j} (i-1) \right)
& \text{Mixed} \\
\sqrt{\frac{2}{n+\nicefrac{1}{2}}}\cos\left(2 b_{n,j} (i-\nicefrac{1}{2})\right)
& \text{Mixed Alt} \\
\sqrt{\frac{2}{n+1}}\cos\left(2 c_{n,j} (i-\nicefrac{1}{2})\right) & \text{Open}.
\end{cases}
\end{equation}
For the right singular vectors in the nullspace of $S_d$, for Zero Flux boundary conditions, we have that
\begin{align}
   v_{d}^{(n)} = e_1, \quad \quad
   v_{d}^{(n+1)} = e_{n+1}\
\end{align}
where $e_i$ represents the vector with zeros and a one at the $i$th index.
For Mixed, Mixed-Alt, and Open boundary conditions, we have
\begin{align}
   v_d^{(n+1)} = \begin{cases}
      e_1 & \text{Mixed} \\
      e_{n+1} & \text{Mixed-Alt} \\
      \bm{1}\sqrt{\frac{1}{n+1}} & \text{Open}
   \end{cases}
\end{align}
where $\bm{1}$ is a vector of ones.

Finally, the $j$th singular value, for each of the boundary conditions, is
\begin{equation}
\sigma_{d}^{(j)}=
\begin{cases}
2 \sin\left(a_{n,j}\right) & \text{Zero Flux} \\
2 \sin\left(b_{n,j}\right) & \text{Mixed and Mixed-Alt} \\
2 \sin\left(c_{n,j}\right) & \text{Open}. \\
\end{cases}
\label{app-eq:sigmad}
\end{equation}

\section{Kronecker product formulas}\label{sec-app:Kronecker-formulas}

In this section we provide some Kronecker product relations that are needed to prove Theorem \ref{thm:svd-S}. We omit the proof of these properties but note that they can be shown through a series of algebraic manipulations.

\begin{property}\label{app-property:Kron1}
	Let $u \in \mathbb{R}^{n \times 1}$ be related to $u_1 \in \mathbb{R}^{n_1 \times 1}$ and $u_2 \in \mathbb{R}^{n_2 \times 2}$ such that
	\begin{equation}
	u = \begin{bmatrix}
	u_1 \\ u_2
	\end{bmatrix}.
	\end{equation}
	Let $A_1,A_2,B_1,B_2$ be square matrices of the same size. Then
	\begin{equation}
	\left(uu^T \otimes A_1\right)\begin{bmatrix}
	I_{n_1} \otimes B_1 & 0 \\
	0 & I_{n_2} \otimes B_2 \\
	\end{bmatrix} \left(u u^T \otimes A_2\right) =
	uu^T \otimes A_1 \left(|u_1|^2 B_1 + |u_2|^2 B_2 \right) A_2.
	\end{equation}
\end{property}

\begin{property}\label{app-property:Kron2}
	Suppose $u_1$ and $u_2$ are unit vectors and $a_i \in \mathbb{R}$ for $i=1,..,8$. Let $A_1,A_2,B_1,B_2$ be square matrices of the same size. Then
	\begin{multline}
	\left(\begin{bmatrix}
	a_1 u_1 u_1^T & a_2 u_1 u_2^T \\
	a_3 u_2 u_1^T & a_4 u_2 u_2^T
	\end{bmatrix} \otimes A_1 \right)
	\begin{bmatrix}
	I_{n_1} \otimes B_1 & 0 \\
	0 & I_{n_2} \otimes B_2 \\
	\end{bmatrix}
	\left(\begin{bmatrix}
	a_5 u_1 u_1^T & a_6 u_1 u_2^T \\
	a_7 u_2 u_1^T & a_8 u_2 u_2^T
	\end{bmatrix} \otimes A_2 \right)\\
	=
	\begin{cases} \vspace{3pt}
	\begin{bmatrix}
	a_5 u_1 u_1^T &
	a_6 u_1 u_2^T \\
	0 &
	0
	\end{bmatrix} \otimes A_1 \left(a_1 B_1\right) A_2, & \text{ if }a_2,a_3,a_4=0 \\ \vspace{3pt}
	\begin{bmatrix}
	0 &
	0 \\
	a_7 u_2 u_1^T &
	a_8 u_2 u_2^T
	\end{bmatrix} \otimes A_1 \left(a_4 B_2 \right) A_2, & \text{ if }a_1,a_2,a_3=0  \\ \vspace{3pt}
	\begin{bmatrix}
	a_1 u_1 u_1^T  &
	0 \\
	a_3 u_2 u_1^T &
	0
	\end{bmatrix} \otimes A_1 \left(a_5 B_1 \right) A_2, & \text{ if }a_6,a_7,a_8=0  \\
	\begin{bmatrix}
	0 &
	a_2 u_1 u_2^T  \\
	0 &
	a_4 u_2 u_2^T
	\end{bmatrix} \otimes A_1 \left(a_8 B_2 \right) A_2, & \text{ if }a_5,a_6,a_7=0.
	\end{cases}
	\end{multline}
\end{property}

\begin{property}\label{app-property:Kron3}
	Suppose that $u_1$ and $u_2$ are unit column vectors and $a_i \in \mathbb{R}$ for $i=1,..,6$ such that
	\begin{equation}
	  a_1 = \frac{a_3 a_5}{a_6} \quad \text{and} \quad
	  a_4 = \frac{a_2 a_6}{a_5}.
	\end{equation}
	Let $A_1 \in \mathbb{R}^{m\times n}$ and $v_1,v_2 \in \mathbb{R}^{m \times 1}$.	We then have that
	\begin{equation}
   	\begin{bmatrix}
   	a_1 u_1 u_1^T \otimes A_1 & a_2 u_1 u_2^T \otimes A_1 \\
   	a_3 u_2 u_1^T \otimes A_1 & a_4 u_2 u_2^T \otimes A_1
   	\end{bmatrix}
   	\begin{bmatrix}
   	a_5 u_1 \otimes v_1 \\
   	a_6 u_2 \otimes v_2
   	\end{bmatrix}
   	=
   	\begin{bmatrix}
   	a_5 u_1\\
   	a_6 u_2
   	\end{bmatrix} \otimes A_1\left(a_1 v_1 + a_4 v_2\right).
	\end{equation}
\end{property}

\section{Proofs}\label{sec-app:proofs}

This section contains supplemental proofs for the results presented in Section~\ref{sec:useful-results} and \ref{sec:SVD}. We first provide the proof of Lemma~\ref{lemma:eig-repeats}, which provides a relationship for the eigenvalues and eigenvectors of $S_d$, $S_{d_1}$ and $S_{d_2}$.

\begin{proof}[Proof of Lemma~\ref{lemma:eig-repeats}]

   We will consider the three possible boundary conditions independently.

   \begin{description}\item[Case 1: Homogeneous Neumann boundary conditions.]
		In this scenario, both subregions have homogeneous Neumann boundary conditions. Suppose that, for $j \in \{1,2,...,n\}$, there exists $j_1 \in \{1,2,...,n_1\}$ such that
		\begin{equation}
      \begin{aligned}
   		a_{n,j} &= \frac{\pi(n-j)}{2n} = \frac{\pi(n_1-j_1)}{2n_1} = a_{n_1,j_1} \\
   		&\implies j_1 = \frac{n_1j}{n} = C_1^2 j.
   		\label{eq:j1-Neumann}
      \end{aligned}
		\end{equation}
		From (\ref{app-eq:sigmad}) this implies that $\sigma_{d_1}^{(j_1)}=\sigma^{(j)}$.  Let $j_2 = j-j_1 = C_2^2 j$ and note that by definition $j_2 \in \{1,...,n_2\}$. Additionally, $\sigma^{(j_2)}_{d_2}=\sigma^{(j)}$ since
		\begin{equation}
		a_{n,j} = \frac{\pi(n-j)}{2n} = \frac{\pi(n_2- C_2^2 j)}{2n_2} = \frac{\pi(n_2-j_2)}{2n_2} = a_{n_2,j_2}.
		\label{eq:j2-Neumann}
		\end{equation}

		Next we will prove (\ref{eq:Ud_divided}) and (\ref{eq:Vd_divided}). In what follows let $\ell=i-n_1$. For $i=1,...,n_1$, (\ref{eq:j1-Neumann}) and (\ref{eq:j2-Neumann}) imply that
		\begin{align*}
   		\left(u_{d}^{(j)}\right)_i
   		&= \sqrt{\frac{2}{n}}\cos\left(2 a_{n,j}(i-\nicefrac{1}{2})\right)
   		= \sqrt{\frac{2}{n}}\cos\left(2 a_{n_1,j_1}(i-\nicefrac{1}{2})\right)
   		= C_1 \left(u_{d_1}^{(j)}\right)_i \\
   		\left(v_{d}^{(j)}\right)_i
   		&= \sqrt{\frac{2}{n}}\sin\left(2 a_{n,j}(i-1)\right)
   		= \sqrt{\frac{2}{n}}\sin\left(2 a_{n_1,j_1}(i-1)\right)
   		= C_1 \left(v_{d_1}^{(j_1)}\right)_i.
		\end{align*}

		For $i=n_1+1,...,n$, (\ref{eq:j1-Neumann}) and (\ref{eq:j2-Neumann}) imply that
      \begin{equation*}
		\begin{aligned}
			\left(u_{d}^{(j)}\right)_{n_1+\ell}
			&= \sqrt{\frac{2}{n}}\cos\left(2a_{n,j}\left(n_1+\ell-\frac{1}{2}\right)\right) \\
			&= \sqrt{\frac{2}{n}}\cos\left(\frac{2\pi(n_1-j_1)n_1}{2n_1}+2a_{n,j}\left(\ell-\frac{1}{2}\right)\right) \\
			&= (-1)^{n_1-j_1} \sqrt{\frac{2}{n}}\cos\left(2 a_{n_2,j_2}\left(\ell-\frac{1}{2}\right)\right) \\
			&= (-1)^{n_1-j_1} C_2 \left(u_{d_2}^{(j_2)}\right)_\ell
		\end{aligned}
      \end{equation*}
		and
		\begin{equation*}
      \begin{aligned}
			\left(v_{d}^{(j)}\right)_{n_1+\ell}
			&= \sqrt{\frac{2}{n}}\sin\left(2a_{n,j} (n_1+\ell-1)\right) \\
			&= \sqrt{\frac{2}{n}}\sin\left(\frac{2\pi(n_1-j_1)n_1}{2n_1}+2a_{n,j}(\ell-1)\right) \\
			&= (-1)^{n_1-j_1} \sqrt{\frac{2}{n}}\sin\left(2a_{n_2,j_2}(\ell-1)\right) \\
			&= (-1)^{n_1-j_1} C_2 \left(v_{d_2}^{(j_2)}\right)_\ell.
      \end{aligned}
      \end{equation*}
		\item[Case 2: Mixed boundary conditions.]
		In this scenario, the first subregion with $n_1$ compartments has homogeneous Neumann boundary conditions and the second subregion with $n_2$ compartment has Mixed boundary conditions. For $j\in \{1,...,n\}$ suppose there exists a $j_1 \in \{1,...,n_1\}$ such that
		\begin{equation}
      \begin{aligned}
         \label{eq:j1-Mixed}
   		b_{n,j} &= \frac{\pi(n-j+\nicefrac{1}{2})}{2n+1} = \frac{\pi(n_1-j_1)}{2n_1} = a_{n_1,j_1} \\
   		&\implies j_1 = \frac{2n_1j}{2n+1} = C_1^2 j.
      \end{aligned}
      \end{equation}
		From (\ref{app-eq:sigmad}) this implies that $\sigma_{d_1}^{(j_1)}=\sigma^{(j)}$. Let $j_2 = j-j_1 = C_2^2 j$ and note by definition that $j_2 \in \{1,..,n_2\}$. Additionally, $\sigma_{d_2}^{(j_2)} = \sigma^{(j)}$, since
		\begin{equation}
      \begin{aligned}\label{eq:j2-Mixed}
   		b_{n,j} &= \frac{\pi(n-j+\nicefrac{1}{2})}{2n+1} \\ &= \frac{\pi(n_2+\nicefrac{1}{2}-C_2^2 j)}{2n_2+1} \\
            &=	\frac{\pi(n_2-j_2+\nicefrac{1}{2})}{2n_2+1} = b_{n_2,j_2}
      \end{aligned}
      \end{equation}

		Next we will prove (\ref{eq:Ud_divided}) and (\ref{eq:Vd_divided}). For $i=1,...,n_1$, (\ref{eq:j1-Mixed}) implies that
		\begin{align*}
   		\left(u_{d}^{(j)}\right)_i
   		&= \sqrt{\frac{2}{n+\nicefrac{1}{2}}}
   		\cos\left(2b_{n,j} (i-\nicefrac{1}{2})\right) \\
   		&= \sqrt{\frac{2}{n+\nicefrac{1}{2}}}\cos\left(2 a_{n_1,j_1}(i-\nicefrac{1}{2})\right)
   		= C_1 \left(u_{d,n_1}^{(j_1)}\right)_i \\
   		\left(v_{d}^{(j)}\right)_i
   		&= \sqrt{\frac{2}{n+\nicefrac{1}{2}}}
   		\sin\left(2b_{n,j}(i-1)\right) \\
   		&= \sqrt{\frac{2}{n+\nicefrac{1}{2}}}\sin\left(2a_{n_1,j_1}(i-1)\right)
   		= C_1 \left(v_{d_1}^{(j_1)}\right)_i.
		\end{align*}
		For $i=n_1+1,...,n$, (\ref{eq:j1-Mixed}) and (\ref{eq:j2-Mixed}) imply that
      \begin{equation*}
		\begin{aligned}
			\left(u_{d}^{(j)}\right)_{n_1+\ell}
			&= \sqrt{\frac{2}{n+\nicefrac{1}{2}}}
			\cos\left(2b_{n,j}(n_1+\ell-\nicefrac{1}{2})\right) \\
			&= \sqrt{\frac{2}{n+\nicefrac{1}{2}}}
			\cos\left(\frac{2 \pi (n_1-j_1)n_1}{2n_1} +2b_{n,j}(\ell-\nicefrac{1}{2})\right) \\
			&= (-1)^{n_1-j_1}\sqrt{\frac{2}{n+\nicefrac{1}{2}}}
			\cos\left(2 b_{n_2,j_2}(i-\nicefrac{1}{2})\right) \\
			&= (-1)^{n_1-j_1}C_2\left(u_{d_2}^{(j_1)}\right)_\ell
		\end{aligned}
      \end{equation*}
		and
		\begin{equation*}
      \begin{aligned}
			\left(v_{d}^{(j)}\right)_{n_1+\ell}
			&= \sqrt{\frac{2}{n+\nicefrac{1}{2}}}
			\sin\left(2b_{n,j}(n_1+\ell-1)\right)\\
			&= \sqrt{\frac{2}{n+\nicefrac{1}{2}}}
			\sin\left( \frac{2\pi (n_1-j_1)n_1}{n_1} + 2b_{n,j}(\ell-1\right) \\
			&= (-1)^{n_1-j_1}\sqrt{\frac{2}{n+\nicefrac{1}{2}}}
			\sin\left(2b_{n_2,j_2}(i-1)\right)\\
			&= (-1)^{n_1-j_1}C_2\left(v_{d_2}^{(j_1)}\right)_\ell.
		\end{aligned}
      \end{equation*}
		\item[Case 3: Open boundary conditions.]
		In this case, the first subregion with $n_1$ compartments has Mixed-Alt boundary conditions (i.e., flux is only allowed at $x=0$) and the second subregion with $n_2$ compartments has Mixed boundary conditions. For $j\in\{1,...,n\}$ suppose there exists a $j_1 \in \{1,...,n_1\}$ such that
		\begin{equation}
      \begin{aligned}
      \label{eq:j1-Dirichlet}
		c_{n,j} &= \frac{\pi(n-j+1)}{2(n+1)} = \frac{\pi(n_1-j_1+\nicefrac{1}{2})}{2n_1+1} = b_{n_1,j_1} \\
		&\implies j_1 = \frac{j(2n_1+1)}{2(n+1)} = C_1^2 j.
      \end{aligned}
		\end{equation}
		From (\ref{app-eq:sigmad}) this implies that $\sigma_{d_1}^{(j_1)}=\sigma^{(j)}$. Let $j_2 = j-j_1 = C_2^2 j$ and note by definition that $j_2 \in \{1,..,n_2\}$. Additionally, $\sigma_{d_2}^{(j_2)} = \sigma^{(j)}$, since
		\begin{equation}
      \begin{aligned}
         \label{eq:j2-Dirichlet}
   		c_{n,j} &= \frac{\pi(n-j+1)}{2(n+1)} \\
         &=	\frac{\pi(2n_2 + 1 - 2 C_2^2j)}{2(2n_2+1)} \\
         &=
   		\frac{\pi(n_2-j_2+\nicefrac{1}{2})}{2n_2+1} = b_{n_2,j_2}.
      \end{aligned}
		\end{equation}

		Next we will prove (\ref{eq:Ud_divided}) and (\ref{eq:Vd_divided}). For $i=1,...,n_1$, (\ref{eq:j1-Dirichlet}) implies that
		\begin{align*}
   		\left(u_{d}^{(j)}\right)_i
   		&= \sqrt{\frac{2}{n+1}}
   		\sin\left(2ic_{n,j}\right) \\
   		&= \sqrt{\frac{2}{n+1}}
   		\sin\left(2ib_{n_1,j_1}\right)
   		= C_1 \left(u_{d_1}^{(j_1)}\right)_i \\
   		\left(v_{d}^{(j)}\right)_i
   		&= \sqrt{\frac{2}{n+1}}
   		\cos\left(2c_{n,j}(i-\nicefrac{1}{2})\right)\\
   		&= \sqrt{\frac{2}{n+1}}
   		\sin\left(2b_{n_1,j_1}(i-\nicefrac{1}{2})\right)
   		= C_1\left(v_{d_1}^{(j_1)}\right)_i.
		\end{align*}
		For $i=n_1+1,...,n$, (\ref{eq:j1-Dirichlet}) and (\ref{eq:j2-Dirichlet}) imply that
		\begin{equation*}
      \begin{aligned}
			\left(u_{d}^{(j)}\right)_{n_1+\ell}
			&= \sqrt{\frac{2}{n+1}}
			\sin\left(2c_{n,j}(n_1+\ell)\right)\\
			&= \sqrt{\frac{2}{n+1}}
			\sin\left(2c_{n,j}(n_1+\nicefrac{1}{2}) + 2c_{n,j}(\ell-\nicefrac{1}{2})\right) \\
			&= \sqrt{\frac{2}{n+1}}
			\sin\left(\frac{2\pi(n_1-j_1+\nicefrac{1}{2})(n_1+\nicefrac{1}{2})}{2n_1+1} + 2c_{n,j}(\ell-\nicefrac{1}{2})\right)\\
			&= (-1)^{n_1-j_1}\sqrt{\frac{2}{n+1}}
			\cos\left(2b_{n_2,j_2}(\ell-\nicefrac{1}{2})\right)\\
			&= (-1)^{n_1-j_1}C_2 \left(u_{d_2}^{(j_2)}\right)_\ell
			\end{aligned}
      \end{equation*}
		and
		\begin{equation*}
      \begin{aligned}
			\left(v_{d}^{(j)}\right)_{n_1+\ell}
			&= \sqrt{\frac{2}{n+1}}
			\cos\left(2 c_{n,j} (n_1+\ell-\nicefrac{1}{2})\right)\\
			&= \sqrt{\frac{2}{n+1}}
			\cos\left(2 c_{n,j}(n_1+\nicefrac{1}{2}) + 2 c_{n,j}(\ell-1)\right) \\
			&= \sqrt{\frac{2}{n+1}}
			\cos\left(\frac{2\pi(n_1-j_1+\nicefrac{1}{2})(n_1+\nicefrac{1}{2})}{2n_1 + 1} + 2c_{n,j}(\ell-1)\right)\\
			&= -(-1)^{n_1-j_1}\sqrt{\frac{2}{n+1}}
			\sin\left(2b_{n_2,j_2}(\ell-1)\right)\\
			&= (-1)^{n_1-j_1} C_2 \left(v_{d_2}^{(j_2)}\right)_\ell.
      \end{aligned}
      \end{equation*}
      which completes the proof.
	\end{description}
\end{proof}

Next, we prove Lemma~\ref{lemma:unpertbEigval}, which provides an eigendecomposition of $S S^T$ when $\gamma = 0$, recall $S$ is given by (\ref{eq:S}). This is equivalent to the non-unique eigendecomposition of the unperturbed matrix $T$, given by (\ref{eq:T}).

\begin{proof}[Proof of Lemma \ref{lemma:unpertbEigval}]
	First note that $T \in \mathbb{R}^{nm\times nm}$ and, as needed, the number of eigenvectors defined is $nm_+ + n_1m_- + n_2m_- = nm$.

	We will show that the matrices $\hat{Q}_{T,1}$, $\hat{Q}_{T,2}$, and $\hat{Q}_{T,3}$ contain eigenvectors of $T$ and $\hat{\Sigma}_{Q_{T,1}}$, $\hat{\Sigma}_{Q_{T,2}}$, and $\hat{\Sigma}_{Q_{T,3}}$ contain the corresponding nonzero eigenvalues. First, considering $\hat{Q}_{T,1}$, we have that
   \begin{equation*}
   \begin{aligned}
   	T\hat{Q}_{T,1}
   	&= \left(S_{d}S_{d}^{T}\otimes D_+ + \left(S_{d}S_{d}^{T}-HH^{T}\right)\otimes D_-\right)(\hat{U}_{d} \otimes I_m^{\mathcal{M}_+}) \\
   	&= S_d S_d^T \hat{U}_{d} \otimes I_m^{\mathcal{M}_+} = \hat{U}_{d} \hat{\Sigma}_d^2 \otimes I_m^{\mathcal{M}_+} = \hat{Q}_{T,1} \left(\hat{\Sigma}_d^2 \otimes I_{m_+}\right)
   \end{aligned}
	\end{equation*}
   Next for $\hat{Q}_{T,2}$, we have that
	\begin{equation*}
   \begin{aligned}
   	T\hat{Q}_{T,2}
      	&= \left(S_{d}S_{d}^{T}\otimes D_+ + \left(S_{d}S_{d}^{T}-HH^{T}\right)\otimes D_-\right)
      	\left(
      	\begin{bmatrix}
      	\hat{U}_{d_1} \\
      	0
      	\end{bmatrix}
      	\otimes I_m^{\mathcal{M}_-}
      	\right) \\
      	&= (S_d S_d^T - H H^T)
      	\begin{bmatrix}
      	\hat{U}_{d_1} \\
      	0
      	\end{bmatrix}
      	\otimes I_m^{\mathcal{M}_-}
      	= \begin{bmatrix}
      	\hat{U}_{d_1} \\
      	0
      	\end{bmatrix} \hat{\Sigma}_{d_1}^2 \otimes I_m^{\mathcal{M}_-}
      	= \hat{Q}_{T,2} \left(\hat{\Sigma}_{d_1}^2\otimes I_{m_-}\right).
   \end{aligned}
   \end{equation*}
	Finally, for $\hat{Q}_{T,3}$ we have that
	\begin{equation*}
   \begin{aligned}
   	T\hat{Q}_{T,3}
   	&= \left(S_{d}S_{d}^{T}\otimes D_+ + \left(S_{d}S_{d}^{T}-HH^{T}\right)\otimes D_-\right)\left(
   	\begin{bmatrix}
   	0 \\
   	\hat{U}_{d_2}
   	\end{bmatrix} \otimes I_m^{\mathcal{M}_-}\right) \\
   	&= (S_d S_d^T - H H^T)\begin{bmatrix}
   	0 \\
   	\hat{U}_{d_2}
   	\end{bmatrix} \otimes I_m^{\mathcal{M}_-}
   	= \hat{Q}_{T,3}\left(\hat{\Sigma}_{d_2}^2\otimes I_{m_-}\right)
   \end{aligned}
	\end{equation*}
	It can analogously be shown that $\breve{Q}_{T,1}$, $\breve{Q}_{T,2}$, and $\breve{Q}_{T,3}$ represent the nullspace of $T$. We leave it as an exercise to show that the eigenvectors and nullspace basis vectors are orthogonal.
\end{proof}

Next, we will show that Theorem~\ref{thm:svd-S} provides an approximate basis for the nullspace of $S$ (i.e., $\breve{V}$), where the basis is orthogonal at small gamma and satisfies $S\breve{V}=0$ in the limit as $\gamma$ goes to zero. We will also provide the proof to Proposition~\ref{prop:breveV}, which gives an exact basis for the nullspace of $S$ that is orthogonal in the limit as $\gamma \rightarrow 0$.

\begin{proof}[Proof of Theorem~\ref{thm:svd-S} (nullspace)]

	The dimension of the nullspace of $S$ is given by Lemma~\ref{lemma:rank}. Notice that this dimension matches the number of columns in $\breve{V}$ as defined in Theorem~\ref{thm:svd-S}. Specifically, for the five matrices that compose $\breve{V}$, i.e. $\breve{V}_i$ for $i=1,...,5$ the number of columns is
   \begin{equation*}
   \begin{aligned}
   	\breve{q}_1 &=
      \begin{cases}
      	\breve{q}_{r_1,-} & \text{Mixed} \\
      	0 & \text{Otherwise}
   	\end{cases} \\
   	\breve{q}_2 &= q_{d_1} p \\
   	\breve{q}_3 &=
         \begin{cases}
         	\breve{q}_{\bar{r}} & \text{Neumann} \\
         	0 & \text{Otherwise}
      	\end{cases} \\
   	\breve{q}_4 &= q_{d_2} p \\
   	\breve{q}_5 &=
         \begin{cases}
         	3m - m_+ & \text{Neumann} \\
            2m - m_+ & \text{Mixed} \\
            m & \text{Open} \\
      	\end{cases} \\
   \end{aligned}
	\end{equation*}
	and by inspection we see that the number of columns is equivlanet to the value of $\breve{q}$ given by Lemma~\ref{lemma:rank}.

We leave it as an exercise to show that all the vectors defined in these matrices are orthonormal.

	To show that the vectors are in the nullspace of $S$, write $S$ as follows
	\begin{equation*}
	S =
	\begin{bmatrix} \vspace{3pt}
	\gamma I_{n_1}\otimes S_{r_1} & 0
	& S_{d,s_1}\otimes D_+ + \begin{bmatrix} S_{d_1} & 0 \end{bmatrix} \otimes D_-\\
	0 & \gamma I_{n_2} \otimes S_{r_2}
	& S_{d,s_2}\otimes D_+ + \begin{bmatrix} 0 & S_{d_2} \end{bmatrix} \otimes D_-
	\end{bmatrix}.
	\end{equation*}
   where $S_{d,s_1}$ represents the first $n_1$ rows of $S_d$ and $S_{d,s_2}$ represents the last $n_2$ rows of $S_d$.
	Suppose a vector in the nullspace can be written as
	\begin{equation*}
	v = \begin{bmatrix}
	v_1 \otimes v_2 \\
	v_3 \otimes v_4 \\
	v_5 \otimes v_6
	\end{bmatrix}\Sigma.
	\end{equation*}
	where $\Sigma$ is a diagonal matrix. Multiplying $S$ by $v$, we obtain the following two equations that must be satisfied
	\begin{align}
	\label{eq:cond1}
	\gamma v_1 \otimes S_{r_1} v_2 + S_{d,s_1} v_5 \otimes D_+ v_6 +
	\begin{bmatrix} S_{d_1} & 0 \end{bmatrix} v_5 \otimes D_- v_6 &= 0 \\
	\label{eq:cond2}
	\gamma v_3 \otimes S_{r_2} v_4 + S_{d,s_2} v_5 \otimes D_+ v_6 +
	\begin{bmatrix} 0 & S_{d_2} \end{bmatrix} v_5 \otimes D_- v_6 &= 0
	\end{align}
	It is straightforward to show that the vectors given by the claim satisfy these equations in the limit as $\gamma \rightarrow 0$. In fact, for $\breve{V}_2$, $\breve{V}_4$ and $\breve{V}_6$ the equations are satisfied at small gamma. Below we will show the logic for $\breve{V}_2$. We leave it as an exercise to verify these results for $\breve{V}_4$ and $\breve{V}_6$. Additionally, it is trivial to show that as $\gamma \rightarrow 0$, $\breve{V}_1$ and $\breve{V}_3$ satisfy the condtions since, in this case, $v_5,v_6=0$.

	For $\breve{V}_2$ we have that
   \begin{equation*}
      \begin{aligned}
         v_1 &= \hat{U}_{d_1} \hat{\Sigma}_{d_1}, & v_2 &= V_{r_1}, &
         v_5 &= \begin{bmatrix}
      	-\gamma \hat{V}_{d_1} \\ 0
      	\end{bmatrix},  & v_6 &= U_{r_1} \Sigma_{r_1}
      \end{aligned}
   \end{equation*}
   ad $v_3,v_4=0$. The first condition, i.e., (\ref{eq:cond1}), is satisfied since
	\begin{equation*}
	\begin{split}
	\gamma \hat{U}_{d_1} \hat{\Sigma}_{d_1} &\otimes S_{r_1} V_{r_1}
	+ S_{d,s_1} \begin{bmatrix}
	-\gamma \hat{V}_{d_1} \\ 0
	\end{bmatrix} \otimes D_+ U_{r_1} \Sigma_{r_1}
	+ \begin{bmatrix} S_{d_1} & 0 \end{bmatrix} \begin{bmatrix}
	-\gamma \hat{V}_{d_1} \\ 0
	\end{bmatrix} \otimes D_- U_{r_1} \Sigma_{r_1} \\
	&= \gamma \hat{U}_{d_1} \hat{\Sigma}_{d_1} \otimes U_{r_1} \Sigma_{r_1}
	- \gamma \hat{U}_{d_1} \hat{\Sigma}_{d_1} \otimes D_+ U_{r_1} \Sigma_{r_1}
	- \gamma \hat{U}_{d_1} \hat{\Sigma}_{d_1} \otimes D_- U_{r_1} \Sigma_{r_1} \\
	&= 0
	\end{split}
	\end{equation*}
	Additionally, $\breve{V}_2$ satisfies (\ref{eq:cond2}) since
	\begin{equation*}
	\begin{split}
	S_{d,s_2} \begin{bmatrix}
	-\gamma \hat{V}_{d_1} \\ 0
	\end{bmatrix} &\otimes D_+ U_{r_1} \Sigma_{r_1} +
	\begin{bmatrix} 0 & S_{d_2} \end{bmatrix} \begin{bmatrix}
	-\gamma \hat{V}_{d_1} \\ 0
	\end{bmatrix} \otimes D_- U_{r_1} \Sigma_{r_1} \\
	&= 0.
	\end{split}
	\end{equation*}

\end{proof}

\begin{proof}[Proof of Proposition~\ref{prop:breveV}]

   The proof to this proposition closely follows the proof given for the nullspace presented in Theorem~\ref{thm:svd-S}. In addition to the logic of this proof we need to show that the basis vectors that differ (i.e., those in $\breve{V}_1$ and $\breve{V}_3$) satisfy the two conditions given in \ref{eq:cond1} and \ref{eq:cond2} at small values of $\gamma$. We will show the logic for $\breve{V}_3$ and leave it as an exercise to show that $\breve{V}_1$ satisfies the conditions.

   For $\breve{V}_3$, when considering the conditions given by \ref{eq:cond1} and \ref{eq:cond2}, we have that
   \begin{equation*}
      \begin{aligned}
         v_1 &= \frac{1}{C_1} \breve{U}_{d,s_1} & v_2 &= \breve{V}_{\bar{r},s_1}, & v_3 &= \frac{1}{C_2}  \breve{U}_{d,s_2} \\ v_4 &= \breve{V}_{\bar{r},s_2}, &
         v_5 &= \gamma w_1, & v_6 &= S_{r_1} \breve{V}_{\bar{r},s_1}
      \end{aligned}
   \end{equation*}
   Note that, by definition of $S_{\bar{r}}$, see (\ref{eq:S_rbar}), the following equations must be satisfied
	\begin{equation}
   \begin{aligned}\label{eq:relations}
   	C_1 S_{r_1,+} \breve{V}_{\bar{r},s_1} &= -C_2 S_{r_2,+} \breve{V}_{\bar{r},s_2} \\
   	S_{r_1,-} \breve{V}_{\bar{r},s_1} &= 0 \\
   	S_{r_2,-} \breve{V}_{\bar{r},s_2} &= 0
   \end{aligned}
	\end{equation}
	From this relations we have that $D_- S_{r_1}\breve{V}_{\bar{r},s_1} = 0$ and $S_{r_1}\breve{V}_{\bar{r},s_1} = D_+ S_{r_1}\breve{V}_{\bar{r},s_1}$.
	It follows that $\breve{V}_3$ satisfies (\ref{eq:cond1}) since
	\begin{equation*}
	\begin{split}
   	\frac{\gamma}{C_1} \breve{U}_{d,s_1} &\otimes S_{r_1} \breve{V}_{\bar{r},s_1}
   	- \gamma S_{d,s_1} w_1 \otimes D_+ S_{r_1}\breve{V}_{\bar{r},s_1}  - \gamma \begin{bmatrix} S_{d_1} & 0 \end{bmatrix} w_1 \otimes D_- S_{r_1}\breve{V}_{\bar{r},s_1}\\
   	&=
      	\frac{\gamma}{C_1} \breve{U}_{d,s_1} \otimes S_{r_1} \breve{V}_{\bar{r},s_1}
      	- \gamma \frac{n}{n_2+\sqrt{n_1 n_2}}S_{d,s_1} \hat{V}_d \hat{\Sigma}_d^{-1} \hat{U}_d^T
      	\begin{bmatrix}
      	\frac{1}{C_1}\breve{U}_{d,s_1} \\
      	-\frac{1}{C_2}\breve{U}_{d,s_2}
      	\end{bmatrix} \otimes S_{r_1} \breve{V}_{\bar{r},s_1} \\
   	&= \frac{\gamma}{C_1} \breve{U}_{d,s_1} \otimes S_{r_1} \breve{V}_{\bar{r},s_1}
      - \gamma \frac{n}{n_2 + \sqrt{n_1 n_2}} \hat{U}_{d,s_1} \hat{U}_d^T
         \begin{bmatrix}
         \frac{1}{C_1}\breve{U}_{d,s_1} \\
         -\frac{1}{C_2}\breve{U}_{d,s_2}
         \end{bmatrix} \otimes S_{r_1} \breve{V}_{\bar{r},s_1} \\
   	&= \frac{\gamma}{C_1} \breve{U}_{d,s_1} \otimes S_{r_1} \breve{V}_{\bar{r},s_1}
      - \frac{\gamma}{C_1} \frac{n_2 + \sqrt{n_1 n_2}}{n_2 + \sqrt{n_1 n_2}}\breve{U}_{d,s_1} \otimes S_{r_1} \breve{V}_{\bar{r},s_1}\\
   	&= 0.
	\end{split}
	\end{equation*}
   Here, we are using the fact that $\breve{U}_d=\frac{1}{\sqrt{n}} \bm{1}$ is a constant vector and
   \begin{equation}
   \begin{aligned}
      & \breve{U}_d\breve{U}_d^T + \hat{U}_d\hat{U}_d^T = I \\
      &\implies \hat{U}_{d,s_1} \hat{U}_{d,s_1}^T = I - \breve{U}_{d,s_1}\breve{U}_{d,s_1}^T \\
      &\implies \hat{U}_{d,s_1} \hat{U}_{d,s_2}^T = - \breve{U}_{d,s_1}\breve{U}_{d,s_2}^T
      \end{aligned}
   \end{equation}therefore
   \begin{equation}
   \begin{aligned}
      \hat{U}_{d,s_1} \hat{U}_d^T
         \begin{bmatrix}
         \frac{1}{C_1}\breve{U}_{d,s_1} \\
         -\frac{1}{C_2}\breve{U}_{d,s_2}
         \end{bmatrix} &=
         \frac{1}{\sqrt{n}}  \begin{bmatrix}
            \hat{U}_{d,s_1} \hat{U}_{d,s_1}^T & \hat{U}_{d,s_1} \hat{U}_{d,s_2}^T
         \end{bmatrix}
         \begin{bmatrix}
         \frac{1}{C_1}\bm{1} \\
         -\frac{1}{C_2}\bm{1}
         \end{bmatrix} \\
      &=  \frac{1}{\sqrt{n}} (
            \frac{1}{C_1}(I - \breve{U}_{d,s_1}\breve{U}_{d,s_1}^T) + \frac{1}{C_2} \breve{U}_{d,s_1}\breve{U}_{d,s_2}^T
         ) \bm{1} \\
      &=  \frac{1}{\sqrt{n}} \left(
            \frac{1}{C_1}(I - a^2 \bm{1}_{n_1\times n_1}) \bm{1}_{n_1} + a^2 \frac{1}{C_2} \bm{1}_{n_1 \times n_2} \bm{1}_{n_2}
         \right) \\
      &=  \frac{1}{\sqrt{n}} \left(
            \frac{1}{C_1}(1 - n_1/n) + \frac{1}{n C_2} n_2
            \right) \bm{1}_{n_1} \\
      &=  \left(
            \frac{1}{C_1}(1 - n_1/n) + \frac{1}{n C_2} n_2
            \right) \breve{U}_{d,s_1}.
   \end{aligned}
   \end{equation}
   Using that, for Neumann boundary conditions, $C_1 = \sqrt{n_1/n}$ and $C_2 = \sqrt{n_2/n}$, we have that
   \begin{equation*}
   \begin{aligned}
      \hat{U}_{d,s_1} \hat{U}_d^T
         \begin{bmatrix}
         \frac{1}{C_1}\breve{U}_{d,s_1} \\
         -\frac{1}{C_2}\breve{U}_{d,s_2}
         \end{bmatrix} &=
         \left(
               \sqrt{\frac{n}{n_1}}(1 - n_1/n) + \sqrt{\frac{1}{n n_2}} n_2
               \right) \breve{U}_{d,s_1} \\
         &=
        \left(
              \sqrt{\frac{n}{n_1}}(1 - n_1/n) + \sqrt{\frac{n_2}{n}}
              \right) \breve{U}_{d,s_1} \\
     &=
    \left(\frac{n_2 + \sqrt{n_1 n_2}}{\sqrt{n_1 n}}
          \right) \breve{U}_{d,s_1} \\
      &=  \frac{1}{C_1}
      \left(\frac{n_2 + \sqrt{n_1 n_2}}{n}
            \right) \breve{U}_{d,s_1}.
   \end{aligned}
   \end{equation*}

   The equalities given by (\ref{eq:relations}) also imply that $S_{r_1} \breve{V}_{\bar{r},s_2}  = -\frac{C_2}{C_1}S_{r_2} \breve{V}_{\bar{r},s_2}$. Using this, we have that,  $\breve{V}_3$ satisfies (\ref{eq:cond2}) since
	\begin{equation*}
	\begin{split}
   	\frac{\gamma}{D_2} \breve{U}_{d,s_2} &\otimes S_{r_2} \breve{V}_{\bar{r},s_2}
   	- \gamma \frac{n}{n_2 + \sqrt{n_1 n_2}} S_{d,s_2} \hat{V}_d \hat{\Sigma}_d^{-1} \hat{U}_d^T
   	\begin{bmatrix}
   	\frac{1}{C_1}\breve{U}_{d,s_1} \\
   	-\frac{1}{C_2}\breve{U}_{d,s_2}
   	\end{bmatrix}  \otimes D_+ S_{r_1} \breve{V}_{\bar{r},s_1} \\
   	&= \frac{\gamma}{C_2} \breve{U}_{d,s_2} \otimes
   	S_{r_2} \breve{V}_{\bar{r},s_2}
   	- \frac{\gamma}{C_2} \frac{n_1 + \sqrt{n_1 n_2}}{n_2 + \sqrt{n_1 n_2}} \frac{C_2}{C_1}\breve{U}_{d,s_2} \otimes
   	S_{r_2} \breve{V}_{\bar{r},s_2} \\
   	&= 0
	\end{split}
	\end{equation*}
	where, similar to the result for the first condition,
   \begin{equation*}
   \begin{aligned}
      \hat{U}_{d,s_2} \hat{U}_d^T
         \begin{bmatrix}
         \frac{1}{C_1}\breve{U}_{d,s_1} \\
         -\frac{1}{C_2}\breve{U}_{d,s_2}
         \end{bmatrix} &=
      &=  \frac{1}{C_2}
      \left(\frac{n_1 + \sqrt{n_1 n_2}}{n}
            \right) \breve{U}_{d,s_1}
   \end{aligned}
   \end{equation*}
   and  $C_2/C_1 = \sqrt{n_2/n_1}$.  We leave it as an exercise to show that all the vectors defined in the columns of $\breve{V}$ are linearly independent and that, in the limit as $\gamma \rightarrow 0$, they become orthogonal.
\end{proof}

\end{document}